\documentclass[twocolumn,superscriptaddress,aps,pra]{revtex4-1}

\usepackage{graphicx}
\usepackage{amsmath}
\usepackage{amsthm}
\usepackage{amssymb}
\usepackage{latexsym}
\usepackage{array}
\usepackage{hyperref}
\usepackage{amsfonts}
\usepackage{dsfont}
\usepackage{mathrsfs}
\usepackage{verbatim}
\usepackage{bbold}
\usepackage[normalem]{ulem}
\usepackage{upgreek}
\usepackage{makecell}
\usepackage{adjustbox}
\usepackage{algorithm}
\usepackage{algpseudocode}
\usepackage{color}
\usepackage{bm}
\usepackage{times}
\usepackage{booktabs}
\usepackage{multirow}

\newcommand{\bra}[1]{\left<#1\right|}
\newcommand{\ket}[1]{\left|#1\right>}
\newcommand{\abs}[1]{\left|#1\right|}
\newcommand{\norm}[1]{\left\lVert#1\right\rVert}

\newcommand{\ketbra}[2]{\ket{#1}\!\!\bra{#2}}
\newcommand{\E}{\mathbb{E}}
\newcommand{\Prob}{\mathbb{P}}
\newcommand{\R}{\mathbb{R}}
\newcommand{\sgn}{\operatorname{sgn}}

\newtheorem{theorem}{Theorem}
\newtheorem{proposition}{Proposition}
\newtheorem{lemma}{Lemma}
\newtheorem{corollary}{Corollary}
\newtheorem{definition}{Definition}

\DeclareMathOperator{\Tr}{Tr}

\DeclareMathOperator{\rank}{rank}

\DeclareMathOperator{\spanop}{span}

\newcommand{\id}{\mathds{1}}

\begin{document}

\title{Learning with Active Quantum Subspaces: Scalable Hybrid Advantage without Full Quantum Data-Encoding}

\author{Jeongho~Bang}\email{jbang@yonsei.ac.kr}
\affiliation{Institute for Convergence Research and Education in Advanced Technology, Yonsei University, Seoul 03722, Republic of Korea}
\affiliation{Department of Quantum Information, Yonsei University, Incheon 21983, Republic of Korea}

\author{Wooyeong~Song}
\affiliation{Quantum Network Research Center, Korea Institute of Science and Technology Information, Daejeon 34141, Korea}

\author{Kyoungho~Cho}
\affiliation{Institute for Convergence Research and Education in Advanced Technology, Yonsei University, Seoul 03722, Republic of Korea}
\affiliation{Department of Statistics and Data Science, Yonsei University, Seoul 03722, Republic of Korea}

\author{Taewan~Kim}
\affiliation{Electronics and Telecommunications Research Institute, Daejeon 34129, Korea}

\author{Yongsoo~Hwang}\email{yhwang@etri.re.kr}
\affiliation{Electronics and Telecommunications Research Institute, Daejeon 34129, Korea}

\date{\today}

\begin{abstract}
We study whether quantum learning advantage can persist without fully embedding a large classical input into a highly superposed quantum state. To address this question, we introduce \emph{active quantum subspace data-encoding}, in which only an information-bearing subset of the input is lifted to a quantum representation while the remaining variables stay classical. For this model, we define a projected hybrid readout and prove three structural results. First, the projected hybrid kernel is positive semidefinite and its sample regularized dimension is bounded by the number of projected observables, so the dimension blow-up of naive global kernels is avoided. Second, we give a necessary and sufficient criterion for improvement over a purely classical predictor in squared loss: the projected quantum sector must contain a direction that lies outside the classical feature span and correlates with the classical residual. Third, in a realizable noisy-oracle setting, we derive a PAC sample-complexity bound proportional to the inverse square of the oracle reliability. We then show, for a canonical Clifford active-subspace family under local dephasing noise, that this reliability can remain inverse-polynomial even when the encoding gate complexity grows polynomially with system size. Hence, the polynomial encoding cost does not by itself destroy the hybrid learning advantage. A sixty-four-qubit family and a synthetic contextual classification task illustrate how one projected quantum feature can compress a useful high-order interaction into a low-dimensional hybrid model. Our results generalize QRAM-free hybrid learning and provide a scalable route toward NISQ-compatible quantum advantage without full quantum data-encoding.
\end{abstract}

\maketitle

\section{Introduction}\label{sec:intro}

Quantum machine learning (QML) has been pursued along at least two complementary lines. One line seeks complexity-theoretic or representation-theoretic quantum advantages by encoding classical data into quantum states and then exploiting a quantum feature map, a variational model, or a quantum kernel~\cite{Biamonte2017,Ciliberto2018,Havlicek2019,SchuldKilloran2019,Cerezo2022,Liu2021NatPhys}. The other line asks a more practical question, especially relevant to noisy intermediate-scale quantum (NISQ) devices~\cite{Preskill2018,Schuld2022PRXQuantum}: can one obtain a tangible learning advantage even when the available quantum system is small and the classical data set is large? A persistent difficulty for the first line is that loading large classical data into a quantum state may itself be expensive and/or fragile. If the cost of the embedding dominates the computation, then the expected quantum learning advantage becomes much less convincing~\cite{Aaronson2015,Paler2020,Harrow2020,Tang2021}.

This tension has become sharper in recent years~\cite{Tang2021,Schuld2022PRXQuantum}. On the one hand, the feature-map view of quantum learning clarifies that the encoding step itself can be regarded as the source of nonlinearity~\cite{Havlicek2019,SchuldKilloran2019,Jerbi2023}. On the other hand, the ``power of data'' perspective shows that hard-to-compute quantum functions do not automatically yield hard-to-learn prediction problems once training data are given~\cite{Huang2021}. In particular, the latter work emphasizes two points that are directly relevant here. First, a quantum model needs to induce data geometry that is genuinely different from what strong classical models already exhibit. Second, a naive global quantum kernel can suffer from an effective-dimension problem: if the embedded points are all too far apart, the representation becomes statistically poor even if it is formally large~\cite{Huang2021}. More recent kernel-concentration analyses sharpen this caution: for sufficiently global or expressive quantum kernels, the kernel values can concentrate exponentially, so polynomial-shot estimates may lead to nearly input-independent predictors~\cite{Thanasilp2024,Agliardi2026}. These observations make it clear that one should not identify quantum learning advantage with ``use more qubits and larger Hilbert spaces''~\cite{Song2021,Song2022,Caro2022,Yin2025}.

\begin{figure*}[t]
\centering
\includegraphics[width=0.75\textwidth]{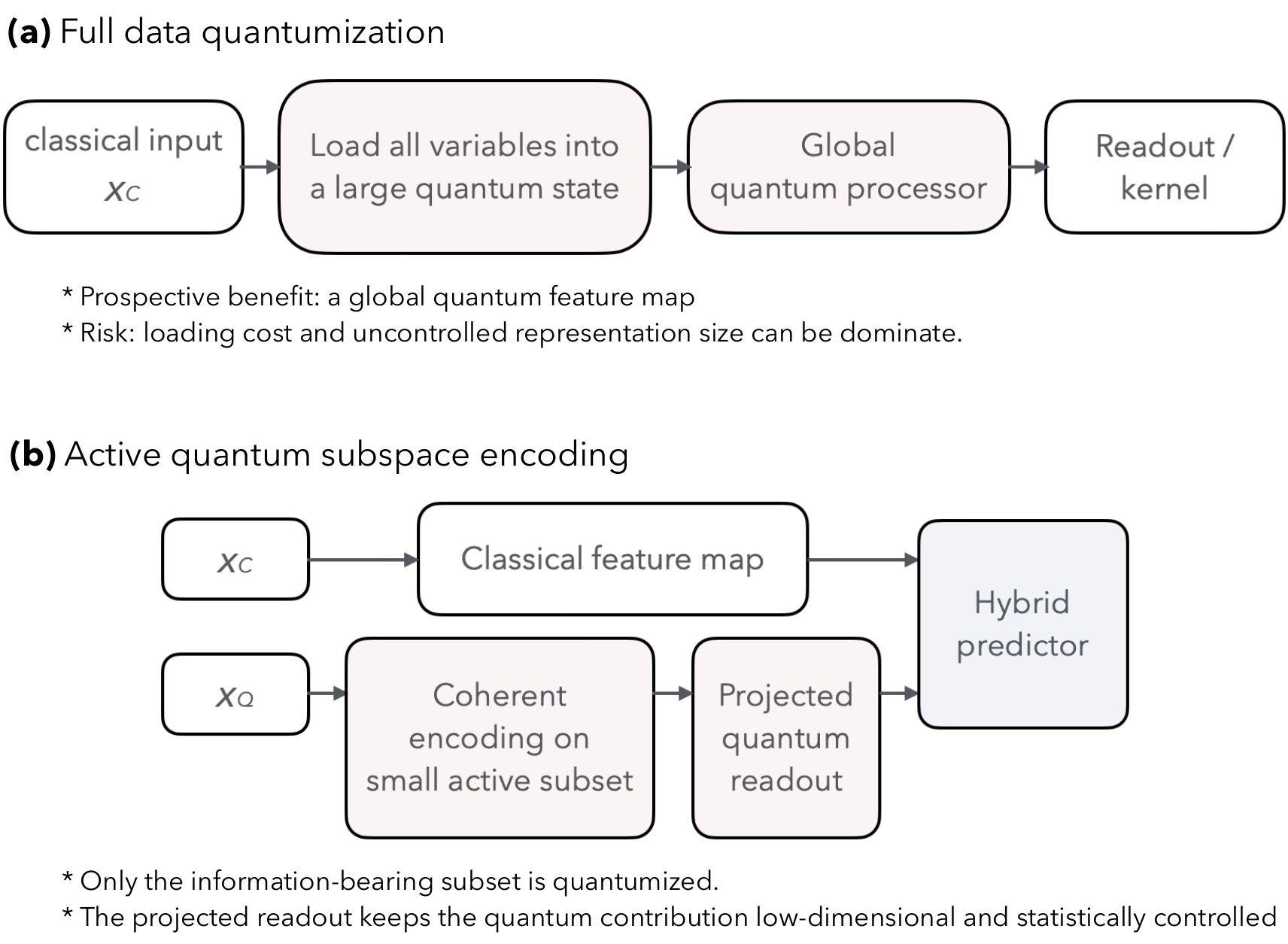}
\caption{Two views of data encoding for hybrid quantum learning. In a full quantum data-encoding approach, one attempts to load all relevant structure of the classical input into a quantum state and then process it with a global quantum model. In the active-quantum-subspace data-encoding approach proposed here, the input is split into a classical part $x_C$ and a selectively quantum-encoded part $x_Q$. Only the information-bearing subset is lifted into a coherent quantum sector, while the remaining variables stay classical. The projected readout keeps the quantum contribution low-dimensional and statistically controlled. The main analytical question of this paper is when one can simultaneously maintain polynomial encoding cost, polynomial sample regularized dimension, and inverse-polynomial oracle reliability.}
\label{fig:framework}
\end{figure*}

In this work, we consider a more structural question: when can partial quantum data-encoding yield a hybrid advantage that is useful, scalable, and statistically learnable? To address this question, we develop a general framework for partial quantum data-encoding: only a selected part of the input is lifted to a coherent quantum representation, while the rest remains classical. The key object is the \emph{active quantum subspace data-encoding} scheme. Intuitively, one need not spend quantum resources on the entire input indiscriminately. Rather, one should identify a subset of features or interactions that carry information not well captured by the classical path, quantum-encode only this part into a small coherent sector, and then read out a projected set of observables. This resource-aware viewpoint is consistent with recent generalization and model-comparison results, where the relevant quantities are not the ambient Hilbert-space dimension alone but the effective dimension, number of trainable or measured degrees of freedom, and the inductive bias of the implemented representation~\cite{Abbas2021,Caro2022,Jerbi2023}. This idea is shown schematically in Fig.~\ref{fig:framework}.

The point of view is deliberately structural. We do not claim that every partial quantum data-encoding yields an advantage. Nor do we claim a universal complexity-theoretic separation between classical and quantum learning. We identify a set of sufficient conditions under which a hybrid model can remain useful, scalable, and statistically learnable even though only part of the data is quantum-encoded. In this sense, the paper is closer in spirit to a theory of ``when the hybrid advantage is possible'' than to a blanket claim that it is generic.

The main results of the paper are as follows. First, we formalize the active quantum subspace data-encoding and projected hybrid readout. The projected quantum feature map induces a positive semidefinite kernel, and its sample regularized dimension is bounded by the number of projected observables. This gives a direct way to control the representation size. Second, we prove a necessary and sufficient criterion for actual prediction improvement under squared loss. The hybrid model improves upon the classical one if and only if the projected quantum sector contributes a direction orthogonal to the classical span and correlated with the classical residual. This result turns the heuristic phrase ``residual quantum information'' into a precise Hilbert-space statement. Third, in a realizable noisy-oracle model, we derive a PAC sample-complexity bound proportional to $\beta^{-2}$, where $\beta$ is an oracle reliability parameter defined so that the oracle agrees with the target label with probability $(1+\beta)/2$. In particular, if the projected hybrid representation has polynomially controlled dimension and the oracle reliability remains inverse-polynomial, then the learner is PAC learnable with polynomial sample complexity. This is one of the main messages of the paper: the hybrid advantage can remain statistically efficient, not merely representational. Fourth, for a canonical family based on Clifford processing, active phase encodings, and projected Pauli readout, we prove an exact dephasing-noise formula in the Heisenberg picture. This allows us to specify a scalable regime in which the oracle reliability remains inverse-polynomial although the encoding gate complexity grows polynomially with the system size, thereby putting the polynomial PAC learnability guarantee on a genuinely scalable footing.

The narrative suggested by these results is, in our view, an interesting one for the broader QML literature. The central issue is not ``how large a Hilbert space can we access?'' It is rather ``which directions in representation space actually matter, and how much quantum coherence is needed to isolate them?'' Full quantum data-encoding can certainly be useful in some regimes, but it is not necessary for quantum learning advantage. A carefully chosen active quantum subspace data-encoding is already sufficient, provided it carries information outside the classical span, yields a controlled sample regularized dimension, and preserves enough noise-robust oracle reliability to keep the resulting learner polynomially sample-efficient.

\section{Active quantum subspaces and projected hybrid readout}\label{sec:framework}

We begin by defining the representation model. The purpose of the definition is to separate three resources that are often mixed together: the total number of qubits available to the circuit, the number of qubits that are actually used to create coherent data-dependent structure at the encoding stage, and the size of the observable family used at readout.

\subsection{Definition of the representation model}

Let the input space be decomposed as
\begin{eqnarray}
\mathcal{X}_n = \mathcal{X}_{C,n}\times \mathcal{X}_{Q,n}, \quad x=(x_C,x_Q),
\end{eqnarray}
where $x_C$ denotes the variables kept on the classical path and $x_Q$ denotes the variables that are to be selectively quantum-encoded. We consider an $\kappa(n)$-qubit Hilbert space $\mathcal{H}_{\kappa(n)}=(\mathbb{C}^2)^{\otimes \kappa(n)}$ together with a partition of qubit indices
\begin{eqnarray}
[\kappa(n)] = S_n \sqcup C_n, \quad \abs{S_n}=\xi(n),
\end{eqnarray}
where $S_n$ is the \emph{active} subset and $C_n$ is the remaining subset.

\begin{definition}[Active quantum subspace data-encoding]
A family of encoders $\{\hat{E}_n\}_{n\ge 1}$ is called an active quantum subspace data-encoding (AQSE) family if, for each problem size $n$, the following hold.
\begin{enumerate}
\item[(i)] The encoder acts on $\kappa(n)$ qubits but only qubits in $S_n$ receive superposition-generating or phase-sensitive data-loading gates at the initial data-encoding step.
\item[(ii)] Qubits in $C_n$ are prepared in computational-basis states, or are classically controlled by $x_C$, at the initial data-encoding step.
\item[(iii)] The total number of elementary gates used by the encoder is
\begin{eqnarray}
G_{\mathrm{enc}}(n)=\mathrm{poly}(n).
\end{eqnarray}
\item[(iv)] The encoded data state is
\begin{eqnarray}
\hat{\rho}_x := \hat{E}_n(x)\ketbra{0^{\kappa(n)}}{0^{\kappa(n)}} \hat{E}_n(x)^\dagger.
\end{eqnarray}
\end{enumerate}
\end{definition}

The active subset $S_n$ should not be confused with the total number of qubits available to the processor. In particular, the subsequent unitary block may act on all $\kappa(n)$ qubits. What is restricted is only the coherent data-loading stage. This distinction is important. A hybrid model can use all qubits for propagation and entanglement after the encoding stage, even when only a much smaller subset is used to create the initial data-dependent superposition.

We next define the projected readout. Fix Hermitian observables
\begin{eqnarray}
\hat{O}_1,\ldots,\hat{O}_M \in \mathsf{Herm}(\mathcal{H}_{\kappa(n)}), \quad \norm{\hat{O}_a}_\infty \le 1.
\end{eqnarray}
The corresponding projected quantum feature map is
\begin{eqnarray}
\Phi_Q(x) := \big(\Tr[\hat{O}_1 \hat{\rho}_x],\ldots,\Tr[\hat{O}_M \hat{\rho}_x]\big) \in \R^M.
\label{eq:PhiQ}
\end{eqnarray}
The classical path is represented either by an explicit feature map $\Phi_C:\mathcal{X}_n\to \R^{D_C}$ or, more generally, by a classical kernel $K_C$. The hybrid predictor class is then built from
\begin{eqnarray}
\Phi_H(x) := \Phi_C(x) \oplus \sqrt{\lambda}\,\Phi_Q(x), \quad \lambda \ge 0.
\label{eq:hybrid_feature_map}
\end{eqnarray}
The associated projected hybrid kernel is
\begin{eqnarray}
K_H(x,x') = K_C(x,x') + \lambda K_Q^{\mathrm{proj}}(x,x'),
\label{eq:KH_def}
\end{eqnarray}
with
\begin{eqnarray}
K_Q^{\mathrm{proj}}(x,x') := \sum_{a=1}^{M}\Tr[\hat{O}_a \hat{\rho}_x] \Tr[\hat{O}_a \hat{\rho}_{x'}].
\label{eq:KQ_proj}
\end{eqnarray}
The adjective ``projected'' is important. We do not compare full density operators directly. Instead, we compare only a controlled family of the measured observables. This is conceptually close to the projected-kernel idea of Ref.~\cite{Huang2021}, but here it is tied explicitly to a selective encoding model. It also follows the operational lesson of classical-shadow tomography: many useful tasks require only a controlled set of expectation values rather than full state reconstruction~\cite{HuangKuengPreskill2020}.

\subsection{Sample regularized dimension is controlled by the projection size}

The first structural fact is simple but central. It says that once the quantum contribution is read out through only $M$ projected observables, the regularized dimension of the hybrid model cannot exceed the classical sample rank plus $M$.

For a finite sample $\{x_i\}_{i=1}^{N}$, let $K_H \in \R^{N \times N}$ be the Gram matrix with entries $[K_H]_{ij} = K_H(x_i, x_j)$. Define the sample regularized dimension at regularization scale $\mu > 0$ by
\begin{eqnarray}
d_{\mathrm{reg}}^{(\mu)}(K_H) := \Tr\big[K_H(K_H + \mu I_N)^{-1}\big].
\label{eq:deff_def}
\end{eqnarray}
This is a finite-sample regularized quantity; we use the term ``sample regularized dimension'' to distinguish it from the asymptotic effective-dimension notions used in statistical learning theory and QML~\cite{CaponnettoDeVito2007,Abbas2021,Caro2022}.

\begin{theorem}[Projected hybrid kernels are low-rank on samples]\label{thm:kernel_rank}
Let $K_C$ be any positive semidefinite kernel matrix on a sample of size $N$, and let $K_Q^{\mathrm{proj}}$ be defined by Eq.~(\ref{eq:KQ_proj}) using $M$ observables. Then, the hybrid Gram matrix
\begin{eqnarray}
K_H = K_C + \lambda K_Q^{\mathrm{proj}}
\end{eqnarray}
is positive semidefinite. Moreover,
\begin{eqnarray}
\rank(K_H) \le \rank(K_C) + M,
\label{eq:rank_bound}
\end{eqnarray}
and for every $\mu > 0$,
\begin{eqnarray}
d_{\mathrm{reg}}^{(\mu)}(K_H) \le \rank(K_C) + M.
\label{eq:deff_bound}
\end{eqnarray}
\end{theorem}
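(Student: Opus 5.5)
The plan is to write the projected quantum Gram matrix as an explicit outer product and then read all three claims from elementary linear algebra. Let $F \in \R^{N\times M}$ be the sample feature matrix with entries $F_{ia} := \Tr[\hat{O}_a \hat{\rho}_{x_i}]$. These entries are real because each $\hat{O}_a$ is Hermitian and each $\hat{\rho}_{x_i}$ is a density operator. By Eq.~(\ref{eq:KQ_proj}), the quantum Gram matrix is $K_Q^{\mathrm{proj}} = F F^{\top}$.

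\emph{Positivity.} For any $v\in\R^N$ we have $v^\top F F^\top v = \norm{F^\top v}^2 \ge 0$, so $K_Q^{\mathrm{proj}}$ is positive semidefinite. Since $\lambda\ge 0$ and $K_C$ is positive semidefinite by hypothesis, the sum $K_H = K_C + \lambda K_Q^{\mathrm{proj}}$ is positive semidefinite as a nonnegative combination of such matrices.

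\emph{Rank bound.} We use subadditivity of rank, $\rank(A+B)\le \rank(A)+\rank(B)$, which follows from $\mathrm{col}(A+B)\subseteq \mathrm{col}(A)+\mathrm{col}(B)$. We also have $\rank(\lambda F F^\top)\le \rank(F)\le M$, because $F$ has only $M$ columns. When $\lambda=0$ this term simply vanishes. Together these give Eq.~(\ref{eq:rank_bound}).

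\emph{Regularized dimension.} Diagonalize the symmetric positive semidefinite matrix $K_H = U\,\mathrm{diag}(\sigma_1,\ldots,\sigma_N)\,U^\top$ with $\sigma_j\ge 0$. For $\mu>0$ the matrix $K_H+\mu I_N$ is invertible and commutes with $K_H$, so
\begin{eqnarray}
d_{\mathrm{reg}}^{(\mu)}(K_H) = \sum_{j=1}^{N} \frac{\sigma_j}{\sigma_j+\mu}.
\end{eqnarray}
Each summand vanishes when $\sigma_j=0$ and lies strictly below $1$ when $\sigma_j>0$. The sum is therefore at most the number of nonzero eigenvalues, which equals $\rank(K_H)$ for a symmetric matrix. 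Chaining this with Eq.~(\ref{eq:rank_bound}) yields Eq.~(\ref{eq:deff_bound}).

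No step is a real obstacle. The only points to check are that the features $F_{ia}$ are real, so that $FF^\top$ is a genuine real Gram matrix, and that $\lambda\ge0$ preserves positivity. The bound $\norm{\hat{O}_a}_\infty\le 1$ is not needed for this statement.
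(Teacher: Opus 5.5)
Your proposal is correct and follows the paper's proof essentially step for step: factor $K_Q^{\mathrm{proj}}=F_QF_Q^\top$ to get positive semidefiniteness, apply rank subadditivity with $\rank(F_QF_Q^\top)\le M$, and bound each term $\sigma_j/(\sigma_j+\mu)$ by one on the nonzero spectrum. Your extra checks, that the entries of $F_Q$ are real and that $\lambda\ge 0$, are details the paper leaves implicit.
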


\begin{proof}---Define the $N \times M$ matrix $F_Q$ by
\begin{eqnarray}
[F_Q]_{ia} := \Tr[\hat{O}_a \hat{\rho}_{x_i}].
\end{eqnarray}
Then,
\begin{eqnarray}
K_Q^{\mathrm{proj}} = F_Q F_Q^\top,
\end{eqnarray}
which is positive semidefinite. Since $K_C$ is also positive semidefinite, so is
\begin{eqnarray}
K_H = K_C + \lambda F_QF_Q^\top.
\end{eqnarray}
This proves the first claim.

For the rank bound, we use the elementary inequality
\begin{eqnarray}
\rank(A+B) \le \rank(A) + \rank(B).
\end{eqnarray}
By applying it to $A=K_C$ and $B=\lambda F_Q F_Q^\top$, we have
\begin{eqnarray}
\rank(K_H) &\le& \rank(K_C) + \rank(F_Q F_Q^\top) \nonumber \\
	&\le& \rank(K_C) + M,
\end{eqnarray}
because $\rank(F_Q F_Q^\top) \le \rank(F_Q) \le M$. This proves Eq.~(\ref{eq:rank_bound}).

Finally, let $\sigma_1,\ldots,\sigma_r>0$ be the nonzero eigenvalues of $K_H$, where $r=\rank(K_H)$. By definition,
\begin{eqnarray}
d_{\mathrm{reg}}^{(\mu)}(K_H) = \sum_{i=1}^{r}\frac{\sigma_i}{\sigma_i+\mu}.
\end{eqnarray}
Each term is at most $1$, hence
\begin{eqnarray}
d_{\mathrm{reg}}^{(\mu)}(K_H) \le r = \rank(K_H) \le \rank(K_C) + M.
\end{eqnarray}
This proves Eq.~(\ref{eq:deff_bound}).
\end{proof}

{\bf Theorem~\ref{thm:kernel_rank}} already captures one important difference between the present framework and naive fidelity-type kernels. Here, the quantum contribution is measured through a finite projected family, so the sample regularized dimension is directly controlled. In particular, if $M=\mathrm{poly}(n)$ and the classical sample rank is also polynomial, then the hybrid feature space remains statistically manageable even when the ambient Hilbert space dimension is exponentially large.

The theorem is intentionally modest. It does not say that small sample regularized dimension is sufficient for good learning. It says only that a projected hybrid readout prevents the uncontrolled rank explosion that often appears in global overlap kernels. This point is complementary to recent concentration results for global quantum kernels, where excessive expressivity, global measurements, entanglement, or noise can wash out data dependence unless the representation is carefully controlled~\cite{Thanasilp2024,Agliardi2026}. This is the first of the three ingredients we need: a representation that is not too large.

\section{When does partial quantum data-encoding actually help?}\label{sec:benefit}

A low-dimensional projected quantum readout is useful only if it adds directions that the classical path does not already contain. This section makes this statement precise. The result below turns the informal phrase ``the quantum sector contains residual information'' into a complete Hilbert-space criterion.

Let $(X,Y)$ be a random pair with $Y \in L_2$. Write $\langle f,g \rangle := \E[f(X) g(X)]$ and $\norm{f}_2 := \sqrt{\langle f, f \rangle}$ for square-integrable functions of $X$. Let $\mathcal{V}_C \subset L_2(P_X)$ be the closed linear span of the classical features and $\mathcal{V}_Q \subset L_2(P_X)$ the finite-dimensional span of the projected quantum features in Eq.~(\ref{eq:PhiQ}). Define
\begin{eqnarray}
\mathcal{V}_H := \mathcal{V}_C + \mathcal{V}_Q.
\end{eqnarray}
Let $P_C$ and $P_H$ denote the orthogonal projections onto $\mathcal{V}_C$ and $\mathcal{V}_H$, respectively. The optimal squared risks in the classical and hybrid spaces are
\begin{eqnarray}
R_C^* := \E[(Y-P_CY)^2], \quad R_H^* := \E[(Y-P_HY)^2].
\end{eqnarray}
We also define the classical residual
\begin{eqnarray}
r_C := Y-P_CY.
\label{eq:classical_residual}
\end{eqnarray}

\begin{theorem}[Residual-information criterion for hybrid benefit]\label{thm:residual}
With the notation above, the following statements hold.
\begin{enumerate}
\item[(i)] One always has $R_H^*\le R_C^*$.
\item[(ii)] For any $u\in \mathcal V_Q$, let
\begin{eqnarray}
u_\perp := (I - P_C) u.
\end{eqnarray}
If $u_\perp \neq 0$, then
\begin{eqnarray}
R_C^* - R_H^* \ge \frac{\langle r_C, u_\perp \rangle^2}{\norm{u_\perp}_2^2}.
\label{eq:residual_quantitative}
\end{eqnarray}
\item[(iii)] One has strict improvement, $R_H^* < R_C^*$, if and only if there exists some $u \in \mathcal{V}_Q$ such that $u_\perp \neq 0$ and
\begin{eqnarray}
\langle r_C, u_\perp \rangle \neq 0.
\label{eq:residual_if_and_only_if}
\end{eqnarray}
\end{enumerate}
\end{theorem}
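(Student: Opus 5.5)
The plan is to reduce everything to the orthogonal decomposition $\mathcal{V}_H = \mathcal{V}_C \oplus \mathcal{W}$. Here
\[
\mathcal{W} := (I-P_C)\mathcal{V}_Q = \{u_\perp : u\in\mathcal{V}_Q\}
\]
is finite-dimensional, since $\mathcal{V}_Q$ is, and it is orthogonal to $\mathcal{V}_C$. We first check that $\mathcal{V}_C+\mathcal{V}_Q=\mathcal{V}_C+\mathcal{W}$. Each $u\in\mathcal{V}_Q$ splits as $P_Cu+u_\perp$. Moreover, the sum of a closed subspace and a finite-dimensional subspace is closed, so $\mathcal{V}_H$ is closed and $P_H$ is well defined. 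Because $\mathcal{V}_C\perp\mathcal{W}$, the projection splits as
\[
P_H = P_C + P_{\mathcal{W}}.
\]
Hence $Y-P_HY = r_C - P_{\mathcal{W}}r_C$. Here we use that $P_{\mathcal{W}}Y = P_{\mathcal{W}}r_C$, which holds because $P_{\mathcal{W}}P_CY=0$. Pythagoras then gives the exact identity
\[
R_C^*-R_H^* = \norm{P_{\mathcal{W}} r_C}_2^2 .
\]

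Part (i) follows immediately from this identity. One could also argue directly: $\mathcal{V}_C\subset\mathcal{V}_H$, and the projection onto a larger subspace minimizes the same squared distance over a larger set.

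For part (ii), fix $u\in\mathcal{V}_Q$ with $u_\perp\neq0$. Since $u_\perp\in\mathcal{W}$, projecting onto the one-dimensional subspace $\spanop\{u_\perp\}\subset\mathcal{W}$ can only decrease the norm. This yields
\[
\norm{P_{\mathcal{W}}r_C}_2^2 \ge \frac{\langle r_C,u_\perp\rangle^2}{\norm{u_\perp}_2^2}.
\]
Alternatively, one can avoid $P_{\mathcal{W}}$ altogether. Take the competitor $P_CY + t\,u_\perp\in\mathcal{V}_H$ and optimize the quadratic $\norm{r_C-tu_\perp}_2^2$ over $t$. The minimum is attained at $t=\langle r_C,u_\perp\rangle/\norm{u_\perp}_2^2$, and it gives the same bound (\ref{eq:residual_quantitative}).

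For part (iii), the "if" direction is part (ii), because the right-hand side is strictly positive. For "only if", suppose $\langle r_C,u_\perp\rangle=0$ for every $u\in\mathcal{V}_Q$ with $u_\perp\ne0$. The condition holds trivially when $u_\perp=0$. Then $r_C\perp\mathcal{W}$, so $P_{\mathcal{W}}r_C=0$, and the identity gives $R_H^*=R_C^*$. The main technical point is justifying the splitting $P_H=P_C+P_{\mathcal{W}}$. This includes the closedness of $\mathcal{V}_H$, which we obtain from the finite-dimensionality of $\mathcal{V}_Q$. The remaining step is to check $(I-P_C)u\in\mathcal{V}_H$, which holds because $P_Cu\in\mathcal{V}_C$. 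With that in place, everything else is Pythagoras.
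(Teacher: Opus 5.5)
Your proof is correct. It is organized differently from the paper's main-text proof.

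You first establish the orthogonal splitting $\mathcal{V}_H=\mathcal{V}_C\oplus\mathcal{W}$ and the exact identity $R_C^*-R_H^*=\norm{P_{\mathcal{W}}r_C}_2^2$, and then read off (i)--(iii) as corollaries. The paper proves that identity only afterwards, as the refinement in Proposition~\ref{prop:exact_gain}, and its proof of the theorem itself stays local. For (ii) it projects $Y$ onto the one-dimensional augmentation $\mathcal{V}_C+\spanop\{u\}$, which is exactly your alternative of optimizing $\norm{r_C-tu_\perp}_2^2$ over $t$. For the converse in (iii) it argues that $\langle r_C,(I-P_C)u\rangle=0$ for all $u\in\mathcal{V}_Q$, together with $r_C\perp\mathcal{V}_C$, gives $r_C\perp\mathcal{V}_Q$. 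Hence $r_C\perp\mathcal{V}_H$ and $P_HY=P_CY$.

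The paper's route needs only the one-dimensional projection formula and an orthogonality check. It never has to identify $P_H$ as $P_C+P_{\mathcal{W}}$. Your route costs that extra step, but it yields the sharper exact gain formula and turns (ii) into a one-line comparison of projections onto nested subspaces.

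Your remark that $\mathcal{V}_H$ is closed because $\mathcal{V}_Q$ is finite-dimensional is worth keeping. The paper uses $P_H$ without comment, tacitly relying on exactly this fact.
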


\begin{proof}---Since $\mathcal{V}_C \subseteq \mathcal{V}_H$, orthogonal projection onto the larger subspace cannot increase the squared error. Hence, $R_H^* \le R_C^*$, proving part (i).

For part (ii), fix $u \in \mathcal{V}_Q$ and define the one-dimensional augmentation
\begin{eqnarray}
\mathcal{V}_u := \mathcal{V}_C + \spanop\{u\}.
\end{eqnarray}
Because $\mathcal{V}_u \subseteq \mathcal{V}_H$, the optimal risk in $\mathcal{V}_H$ is no larger than the optimal risk in $\mathcal{V}_u$. Therefore, it suffices to compute the gain obtained by adding the single direction $u$.

If $u_\perp = (I - P_C)u=0$, then $u \in \mathcal{V}_C$ and no gain is possible. So, assume $u_\perp \neq 0$. By construction, $u_\perp \perp \mathcal{V}_C$. Therefore, the orthogonal projection of $Y$ onto $\mathcal{V}_u$ is
\begin{eqnarray}
P_{\mathcal{V}_u} Y = P_C Y + \frac{\langle r_C, u_\perp \rangle}{\norm{u_\perp}_2^2} u_\perp,
\end{eqnarray}
because $r_C=Y - P_CY$ is orthogonal to $\mathcal{V}_C$. The corresponding squared error is
\begin{eqnarray}
\E\big[(Y-P_{\mathcal{V}_u} Y)^2\big] &=& \norm{r_C - \frac{\langle r_C, u_\perp\rangle}{\norm{u_\perp}_2^2} u_\perp}_2^2 \\
	&=& \norm{r_C}_2^2 - \frac{\langle r_C, u_\perp \rangle^2}{\norm{u_\perp}_2^2}
\end{eqnarray}
by the Pythagorean theorem. Since $\norm{r_C}_2^2 = R_C^*$ and $R_H^* \le \E[(Y - P_{\mathcal{V}_u}Y)^2]$, we obtain Eq.~(\ref{eq:residual_quantitative}). This proves part (ii).

For part (iii), first suppose there exists $u \in \mathcal{V}_Q$ with $u_\perp \neq 0$ and $\langle r_C, u_\perp\rangle \neq 0$. Then, Eq.~(\ref{eq:residual_quantitative}) implies $R_H^* < R_C^*$. Conversely, suppose no such $u$ exists. Then, for every $u \in \mathcal{V}_Q$ one has $\langle r_C, (I-P_C)u \rangle=0$. Since $r_C \perp \mathcal{V}_C$, this implies $\langle  r_C, u \rangle=0$ for all $u \in \mathcal{V}_Q$. Hence, $r_C \perp \mathcal{V}_Q$ and also $r_C \perp \mathcal{V}_C$, so $r_C \perp \mathcal{V}_H = \mathcal{V}_C + \mathcal{V}_Q$. Therefore, $P_H Y = P_C Y$ and $R_H^* = R_C^*$. This proves the converse implication.
\end{proof}

{\bf Theorem~\ref{thm:residual}} is one of the conceptual cores of this work. It says that the hybrid model is useful precisely when the projected quantum sector contains a direction that survives orthogonalization against the classical span and still correlates with what the classical model fails to explain. This is the exact sense in which partial quantum data-encoding can be enough. One does not need a huge quantum state space. One only needs at least one useful residual direction.

The theorem also clarifies the relation to the geometric-difference narrative in Ref.~\cite{Huang2021}. In that work, a quantum model becomes interesting only when its induced geometry differs from that of the classical competitor. Here, the same point appears in function-space language. The component $u_\perp=(I-P_C)u$ is precisely the part of a projected quantum direction that is geometrically invisible to the classical span. If $u_\perp=0$ for every projected quantum feature, then the hybrid model adds no new geometry at all. Appendix~\ref{app:residual_exact} strengthens this statement by giving an exact projection formula for the gain and a finite-sample matrix criterion that can be used in practice to diagnose whether a candidate active subspace is useful.

{\bf Theorem~\ref{thm:residual}} is stated for squared loss because that setting allows a complete and transparent orthogonal-projection proof. The message, however, is broader. In classification, one usually works with a surrogate risk or a margin condition rather than a literal $L_2$ projection. In that setting, the exact orthogonal-projection identity is replaced by the corresponding excess-risk or margin error left after fitting the classical predictor, so the same design lesson remains: the hybrid sector is useful only when it contributes directions that are not already present on the classical path and that align with the remaining prediction error.

\section{Noisy hybrid learning and PAC sample complexity}\label{sec:noisy}

We turn to the statistical question. Suppose the learner receives labels through a noisy oracle. How does the hybrid oracle reliability enter the sample complexity? This is the point where the present framework meets the sample-complexity mechanism of Ref.~\cite{Song2021}. The sample-complexity viewpoint is also complementary to recent quantum-learning results where advantage is formulated as a reduction in the number of experiments, measurements, or training samples rather than only as a time-to-solution speedup~\cite{Huang2022Science,Caro2022,Liu2025Science,King2024}.

We work in a realizable binary setting in the sense of PAC learning~\cite{Valiant1984}, together with an input-dependent classification-noise model~\cite{AngluinLaird1988}. Let $\mathcal H$ be a hypothesis class of functions from $\mathcal{X}_n$ to $\{-1,+1\}$. We assume there exists a target hypothesis $h^\star \in \mathcal{H}$. The learner does not observe $h^\star(X)$ directly. Instead, for each input $x$, the oracle returns a noisy label $\widetilde{Y} \in \{-1,+1\}$ satisfying
\begin{eqnarray}
\Prob\big(\widetilde Y = h^\star(x)\mid X=x\big) = \frac{1+\beta(x)}{2},
\label{eq:noisy_model}
\end{eqnarray}
where $0< \beta_0 \le \beta(x) \le 1$ and
\begin{eqnarray}
\beta_0 := \inf_{x\in \mathcal{X}_n}\beta(x)
\label{eq:beta0_def}
\end{eqnarray}
is the worst-case oracle reliability. Thus larger $\beta(x)$ means a more reliable oracle: $\beta(x)=1$ is noiseless, while $\beta(x)=0$ is maximally uninformative. The clean risk and noisy risk are
\begin{eqnarray}
R(h) &:=& \Prob\big(h(X)\neq h^\star(X)\big), \nonumber \\
R_\eta(h) &:=& \Prob\big(h(X)\neq \widetilde Y\big).
\end{eqnarray}
Given $N$ i.i.d. samples $\{(X_i,\widetilde Y_i)\}_{i=1}^{N}$, define the empirical noisy risk
\begin{eqnarray}
\widehat{R}_\eta(h) := \frac{1}{N}\sum_{i=1}^{N}\mathds{I}\{h(X_i) \neq \widetilde{Y}_i\}.
\end{eqnarray}
Let $\widehat{h}$ be any empirical risk minimizer over $\mathcal{H}$.

The next theorem shows that the sample complexity scales as $\beta_0^{-2}$. This generalizes the inverse-square dependence on oracle reliability in Ref.~\cite{Song2021} from a specific oracle construction to an arbitrary realizable hybrid hypothesis class.

\begin{theorem}[PAC learning with a noisy hybrid oracle]\label{thm:noisy_pac}
Let $\mathcal{H}$ be a binary hypothesis class with VC dimension $d < \infty$, and suppose the noisy oracle model as in Eq.~(\ref{eq:noisy_model}) holds with the reliability lower bound $\beta_0>0$. Let $\widehat{h}$ be an empirical minimizer of $\widehat{R}_\eta$ over $\mathcal{H}$. Then, the following hold.
\begin{enumerate}
\item[(i)] For every $h \in \mathcal{H}$,
\begin{eqnarray}
R_\eta(h) - R_\eta(h^\star) \ge \beta_0 R(h).
\label{eq:noisy_clean_relation}
\end{eqnarray}
\item[(ii)] On every sample on which
\begin{eqnarray}
\sup_{h \in \mathcal{H}}\abs{\widehat{R}_\eta(h) - R_\eta(h)} \le \alpha,
\label{eq:unif_alpha}
\end{eqnarray}
one has
\begin{eqnarray}
R(\widehat{h}) \le \frac{2\alpha}{\beta_0}.
\label{eq:Rhat_bound}
\end{eqnarray}
\item[(iii)] In particular, if
\begin{eqnarray}
N \ge \frac{32}{\beta_0^2\varepsilon^2} \left[ d\log\left(\frac{2eN}{d}\right)+\log\left(\frac{8}{\delta}\right)\right],
\label{eq:sample_bound_exact}
\end{eqnarray}
then with probability at least $1-\delta$,
\begin{eqnarray}
R(\widehat{h}) \le \varepsilon.
\label{eq:pac_result}
\end{eqnarray}
Thus the sample complexity is of order
\begin{eqnarray}
N = O\left(\frac{d\log(1/\varepsilon)+\log(1/\delta)}{\beta_0^2\varepsilon^2}\right).
\label{eq:sample_bound_asym}
\end{eqnarray}
\end{enumerate}
\end{theorem}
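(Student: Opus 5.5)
The plan is to prove (i) by a pointwise computation conditional on $X=x$, obtain (ii) from (i) through the standard ERM sandwich argument, and derive (iii) by plugging a VC uniform-convergence bound into (ii) with $\alpha=\beta_0\varepsilon/2$.

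\textbf{Part (i).} Fix $x$ and $h\in\mathcal H$. If $h(x)=h^\star(x)$, the two conditional error probabilities $\Prob(h(x)\neq\widetilde Y\mid X=x)$ and $\Prob(h^\star(x)\neq\widetilde Y\mid X=x)$ coincide. If $h(x)\neq h^\star(x)$, then, because labels are binary, $h(x)\neq \widetilde Y$ exactly when $\widetilde Y=h^\star(x)$. Hence the first probability is $(1+\beta(x))/2$ and the second is $(1-\beta(x))/2$, with difference $\beta(x)$. Therefore
\begin{eqnarray}
R_\eta(h)-R_\eta(h^\star)=\E\big[\beta(X)\,\mathds{I}\{h(X)\neq h^\star(X)\}\big].
\end{eqnarray}
Since $\beta(X)\ge\beta_0$ pointwise, the right-hand side is at least $\beta_0 R(h)$, which is Eq.~(\ref{eq:noisy_clean_relation}). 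Note that this uses the model Eq.~(\ref{eq:noisy_model}) pointwise in $x$, not merely on average.

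\textbf{Part (ii).} On the event Eq.~(\ref{eq:unif_alpha}), I chain three inequalities:
\begin{eqnarray}
R_\eta(\widehat h)\le \widehat R_\eta(\widehat h)+\alpha\le \widehat R_\eta(h^\star)+\alpha\le R_\eta(h^\star)+2\alpha.
\end{eqnarray}
The first and third steps use Eq.~(\ref{eq:unif_alpha}), and the middle step uses that $\widehat h$ minimizes $\widehat R_\eta$ and $h^\star\in\mathcal H$, which is where realizability enters. Combining this with (i) gives $\beta_0R(\widehat h)\le 2\alpha$, i.e., Eq.~(\ref{eq:Rhat_bound}).

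\textbf{Part (iii).} I set $\alpha=\beta_0\varepsilon/2$ and need the uniform deviation event to hold with probability at least $1-\delta$. I will use the loss class $\{(x,y)\mapsto\mathds I\{h(x)\neq y\}\}$. Its growth function equals that of $\mathcal H$, since on a sample with fixed labels each loss pattern is determined by the pattern of $h$. By Sauer's lemma this growth function is at most $(eN/d)^d$ for $N\ge d$.

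\textbf{Main obstacle: the constants.} The delicate step is matching the exact constants in Eq.~(\ref{eq:sample_bound_exact}). I would invoke the classical Vapnik--Chervonenkis symmetrization bound
\begin{eqnarray}
\Prob\Big(\sup_h|\widehat R_\eta(h)-R_\eta(h)|>\alpha\Big)\le 8\,\Pi_{\mathcal H}(2N)\,e^{-N\alpha^2/32}
\end{eqnarray}
(the form in, e.g., Devroye--Gy\"orfi--Lugosi), together with $\Pi_{\mathcal H}(2N)\le(2eN/d)^d$. Requiring the right-hand side to be at most $\delta$ gives
\begin{eqnarray}
N\ge\frac{32}{\alpha^2}\Big[d\log\frac{2eN}{d}+\log\frac8\delta\Big].
\end{eqnarray}
With $\alpha=\beta_0\varepsilon/2$, the prefactor $32/\alpha^2$ becomes $128/(\beta_0^2\varepsilon^2)$, which is larger than the stated $32$. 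I therefore expect to need a sharper variant, for instance the one-sided bound with exponent $N\alpha^2/8$. Failing that, I would accept a different absolute constant, which does not affect the asymptotic statement.

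\textbf{Asymptotic form.} The implicit condition Eq.~(\ref{eq:sample_bound_exact}) is resolved by the standard lemma that $N\ge 2A\log A+2B$ implies $N\ge A\log N+B$. This yields Eq.~(\ref{eq:sample_bound_asym}), with the $\log(1/\varepsilon)$ factor (together with $\log(1/\beta_0)$, absorbed into the order notation) arising from the $\log N$ term.
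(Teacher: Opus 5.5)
Your parts (i) and (ii) are the paper's argument: the same pointwise computation giving $R_\eta(h)-R_\eta(h^\star)=\E[\beta(X)\,\mathds{I}\{h(X)\neq h^\star(X)\}]$, and the same ERM sandwich.

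The gap is in (iii), and it is the one you flag. With the form $8\Pi_{\mathcal H}(2N)e^{-N\alpha^2/32}$ you only reach the prefactor $128/(\beta_0^2\varepsilon^2)$. So Eq.~(\ref{eq:sample_bound_exact}) is not proved, and settling for another constant proves a weaker statement. Your suggested repair is also slightly off target. A one-sided uniform bound does not give the two-sided event Eq.~(\ref{eq:unif_alpha}) that your part (ii) uses; you would have to control $h^\star$ separately by Hoeffding and redo the sandwich.

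The missing ingredient is the original two-sided Vapnik--Chervonenkis inequality, Eq.~(\ref{eq:VC_bound}) (Lemma~\ref{lem:vc_uniform}):
$\Prob(\sup_h|\widehat R_\eta(h)-R_\eta(h)|>\alpha)\le 4\Pi_{\mathcal H}(2N)e^{-N\alpha^2/8}$.
The paper proves it in Appendix~\ref{app:VC} in four steps.
\begin{enumerate}
\item A ghost sample at threshold $\alpha/2$ costs a factor $2$.
\item Exchangeability lets you insert Rademacher signs.
\item Hoeffding for summands in $[-1,1]$ gives $2e^{-N(\alpha/2)^2/2}=2e^{-N\alpha^2/8}$ per loss pattern.
\item A union bound runs over at most $\Pi_{\mathcal H}(2N)$ patterns, using your observation that the loss class inherits the growth function of $\mathcal H$.
\end{enumerate}
With $\alpha=\beta_0\varepsilon/2$, demanding $4(2eN/d)^d e^{-N\alpha^2/8}\le\delta$ is exactly
$N\ge \frac{8}{\alpha^2}\big[d\log(2eN/d)+\log(4/\delta)\big]=\frac{32}{\beta_0^2\varepsilon^2}\big[d\log(2eN/d)+\log(4/\delta)\big]$.
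The stated hypothesis implies this, since $\log(4/\delta)\le\log(8/\delta)$.

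When you import that inequality, keep its side conditions explicit.
\begin{itemize}
\item The factor-$2$ symmetrization needs $N\alpha^2\ge 2$, because Chebyshev bounds the ghost-sample deviation probability of the offending hypothesis by $1/(N\alpha^2)$. This is automatic here: Eq.~(\ref{eq:sample_bound_exact}) gives $N\alpha^2\ge 8[d\log(2eN/d)+\log(8/\delta)]\ge 8\log 8$ as soon as $2N\ge d$.
\item Your proviso on Sauer's lemma therefore does real work. The bound $\Pi_{\mathcal H}(2N)\le(2eN/d)^d$ needs $2N\ge d$. For $2N<d$ the term $d\log(2eN/d)$ is negative, so the hypothesis of (iii) can be met by tiny $N$ while the conclusion fails. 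For example, take all sign functions on $d=100$ points, uniform inputs, a noiseless oracle, $N=1$ and $\varepsilon=\delta=0.1$; an empirical minimizer can then have $R(\widehat h)=1-1/d$. The paper's appendix applies Sauer--Shelah without this proviso, so state $2N\ge d$ as an explicit assumption.
\item Resolving the implicit condition actually yields $d\log(1/(\beta_0\varepsilon))$ in the numerator. Eq.~(\ref{eq:sample_bound_asym}) suppresses the $\log(1/\beta_0)$ (the paper just calls it immediate), so say that it is suppressed rather than absorbed.
\end{itemize}
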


\begin{proof}---For part (i), define the input-dependent label-flip probability
\begin{eqnarray}
\eta(x) := \Prob(\widetilde{Y} \neq h^\star(x) \mid X=x)=\frac{1-\beta(x)}{2}.
\end{eqnarray}
Conditioned on $X=x$, the hypothesis $h$ makes a noisy error either because the oracle flips the correct label, or because $h$ disagrees with $h^\star$ and the oracle does not flip the label. Therefore, 
\begin{eqnarray}
&& \Prob\big(h(X) \neq \widetilde{Y} \mid X=x\big) \nonumber \\
&& \quad = \eta(x) + (1-2\eta(x))\mathds{I}\{h(x) \neq h^\star(x)\}.
\end{eqnarray}
Taking expectation over $X$ gives
\begin{eqnarray}
R_\eta(h) = \E[\eta(X)] + \E[\beta(X)\mathds{1}\{h(X)\neq h^\star(X)\}].
\end{eqnarray}
Since $R_\eta(h^\star)=\E[\eta(X)]$, we obtain
\begin{eqnarray}
R_\eta(h)-R_\eta(h^\star) &=& \E[\beta(X)\mathds{1}\{h(X)\neq h^\star(X)\}] \nonumber \\
	&\ge& \beta_0 R(h),
\end{eqnarray}
which proves Eq.~(\ref{eq:noisy_clean_relation}).

For part (ii), note first that because $\widehat{h}$ minimizes $\widehat{R}_\eta$ over $\mathcal{H}$,
\begin{eqnarray}
\widehat{R}_\eta(\widehat{h}) \le \widehat{R}_\eta(h^\star).
\end{eqnarray}
Assuming Eq.~(\ref{eq:unif_alpha}), we have
\begin{eqnarray}
R_\eta(\widehat{h}) - R_\eta(h^\star) &\le& \big(R_\eta(\widehat{h})-\widehat{R}_\eta(\widehat{h})\big) + \big(\widehat{R}_\eta(h^\star) - R_\eta(h^\star)\big) \nonumber \\
	&\le& 2\alpha.
\end{eqnarray}
Combining this with part (i) gives
\begin{eqnarray}
\beta_0 R(\widehat{h}) \le R_\eta(\widehat{h})-R_\eta(h^\star) \le 2\alpha,
\end{eqnarray}
which proves Eq.~(\ref{eq:Rhat_bound}).

For part (iii), we use the standard VC uniform convergence bound for binary losses:
\begin{widetext}
\begin{eqnarray}
\Prob\left(\sup_{h\in\mathcal H}\abs{\widehat R_\eta(h)-R_\eta(h)} > \alpha\right) \le 4\left(\frac{2eN}{d}\right)^d e^{-N\alpha^2/8}.
\label{eq:VC_bound}
\end{eqnarray}
\end{widetext}
Setting $\alpha=\beta_0\varepsilon/2$, the right-hand side of Eq.~(\ref{eq:VC_bound}) is at most $\delta$ whenever Eq.~(\ref{eq:sample_bound_exact}) holds. Under this event, Eq.~(\ref{eq:Rhat_bound}) implies $R(\widehat{h}) \le \varepsilon$. This proves Eq.~(\ref{eq:pac_result}). The asymptotic form in Eq.~(\ref{eq:sample_bound_asym}) is immediate.
\end{proof}

The theorem gives a clean operational meaning to oracle reliability. If the quantum part of the hybrid model increases the reliability of the queried labels, then the sample complexity is reduced by a quadratic factor in that reliability. This recovers the mechanism of Ref.~\cite{Song2021} in a much broader hypothesis-class setting. For completeness, Appendix~\ref{app:VC} includes a self-contained proof of the VC-type uniform convergence inequality used in Eq.~(\ref{eq:VC_bound}).

\begin{corollary}[Polynomial sample complexity for projected hybrid models]\label{cor:poly_sample}
Suppose the hybrid learner uses linear separators in a feature space of dimension
\begin{eqnarray}
D_H(n) := \rank(K_C) + M,
\end{eqnarray}
where $M$ is the number of projected quantum observables on the sample. If
\begin{eqnarray}
D_H(n)=\mathrm{poly}(n) \quad\text{and}\quad \beta_0(n) \ge \frac{1}{\mathrm{poly}(n)},
\label{eq:poly_conditions_basic}
\end{eqnarray}
then the noisy hybrid learner is PAC learnable with polynomial sample complexity in $n$, $1/\varepsilon$, and $\log(1/\delta)$.
\end{corollary}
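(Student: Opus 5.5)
The corollary follows by plugging a VC-dimension estimate for linear separators into Theorem~\ref{thm:noisy_pac}(iii). First we fix the hypothesis class $\mathcal{H}$: the sign functions $x\mapsto\sgn(\langle w,\Phi_H(x)\rangle+b)$. Here $\Phi_H$ is the hybrid feature map of Eq.~(\ref{eq:hybrid_feature_map}), restricted to the span realized on the data. By Theorem~\ref{thm:kernel_rank}, that span has dimension at most $D_H(n)=\rank(K_C)+M$. We assume, as the realizable setting requires, that the target $h^\star$ lies in $\mathcal{H}$.

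The classical fact that affine halfspaces in $\R^{D}$ have VC dimension $D+1$ then gives $d\le D_H(n)+1=\mathrm{poly}(n)$.

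Next we apply Theorem~\ref{thm:noisy_pac}(iii) with this $d$ and the reliability lower bound $\beta_0(n)$. Condition (\ref{eq:sample_bound_exact}) is implicit in $N$, so we resolve it explicitly. We use the standard lemma that $N\ge 2a\log(2ab)+2c$ suffices for $N\ge a\log(bN)+c$. Here $a=32d/(\beta_0^2\varepsilon^2)$, $b=2e/d$, and $c=32\log(8/\delta)/(\beta_0^2\varepsilon^2)$. This yields a sufficient sample size
\begin{eqnarray}
N = O\left(\frac{d\log\big(d/(\beta_0\varepsilon)\big)+\log(1/\delta)}{\beta_0^2\varepsilon^2}\right).
\end{eqnarray}
Substituting $d=\mathrm{poly}(n)$ and $\beta_0^{-1}=\mathrm{poly}(n)$ shows that $N$ is polynomial in $n$, $1/\varepsilon$ and $\log(1/\delta)$. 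Theorem~\ref{thm:noisy_pac}(iii) then guarantees $R(\widehat h)\le\varepsilon$ with probability at least $1-\delta$, which is PAC learnability.

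The main obstacle is the first step, which ties $D_H(n)$, defined through sample ranks, to a fixed hypothesis class whose VC dimension is data-independent. Theorem~\ref{thm:kernel_rank} bounds only the rank of a Gram matrix on a sample. VC theory, however, needs a fixed class over all of $\mathcal{X}_n$.

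The cleanest route is to read the hypothesis of the corollary literally. We take an explicit feature space $\Phi_C\oplus\sqrt\lambda\,\Phi_Q$ of total dimension $D_C+M=D_H(n)$, so the class is fixed before any sample is drawn. If only a kernel $K_C$ is available, we instead assume that $\rank(K_C)$ is bounded uniformly over samples by $\mathrm{poly}(n)$. In that case the classical RKHS restricted to $\mathcal{X}_n$ is itself finite dimensional, with dimension at most that bound. A finite-dimensional linear class of that size is then fixed in advance, and the argument goes through unchanged.

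A second, milder point is that ERM over halfspaces is only assumed to exist. We do not need it to be computationally efficient, because the corollary claims polynomial sample complexity only.
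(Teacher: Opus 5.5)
Your proposal is correct and follows the paper's proof: you bound the VC dimension of linear separators in $D_H(n)$ dimensions by $D_H(n)+1$ and substitute into Theorem~\ref{thm:noisy_pac}(iii), and the polynomial hypotheses then make the sample bound polynomial. Two of your steps add rigor that the paper's two-line argument leaves implicit. First, you solve the implicit condition (\ref{eq:sample_bound_exact}) for $N$, which correctly produces a $\log(1/(\beta_0\varepsilon))$ factor that the paper's asymptotic form (\ref{eq:sample_bound_asym}) omits (harmlessly in this regime). Second, you fix the hypothesis class independently of the sample-dependent rank.
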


\begin{proof}---The VC dimension of linear separators in $D_H$ dimensions is at most $D_H+1$. Hence, {\bf Theorem~\ref{thm:noisy_pac}} implies a sample bound of order
\begin{eqnarray}
N = O\left(\frac{D_H(n)\log(1/\varepsilon)+\log(1/\delta)}{\beta_0(n)^2\varepsilon^2}\right).
\end{eqnarray}
Under the polynomial bounds in Eq.~(\ref{eq:poly_conditions_basic}), the right-hand side is polynomial.
\end{proof}

\begin{corollary}[Relative gain from improved oracle reliability]\label{cor:relative_gain}
Consider the same realizable hypothesis class $\mathcal{H}$ queried through two oracles, a classical one with reliability $\beta_C$ and a hybrid one with reliability $\beta_H$, where $0 < \beta_C \le \beta_H \le 1$. Then, the upper bound of {\bf Theorem~\ref{thm:noisy_pac}} implies
\begin{eqnarray}
\frac{N_H(\varepsilon,\delta)}{N_C(\varepsilon,\delta)} \lesssim \left(\frac{\beta_C}{\beta_H}\right)^2
\end{eqnarray}
up to the common VC factor $d\log(2eN/d)+\log(8/\delta)$ from Eq.~(\ref{eq:sample_bound_exact}). Hence, any persistent improvement in oracle reliability translates directly into a quadratic reduction of sample complexity.
\end{corollary}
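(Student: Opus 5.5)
The plan is to apply part (iii) of Theorem~\ref{thm:noisy_pac} twice, once with $\beta_0=\beta_C$ (giving the sample size $N_C$) and once with $\beta_0=\beta_H$ (giving $N_H$), to the same class $\mathcal{H}$ with the same $\varepsilon$ and $\delta$. Define $N_\bullet(\varepsilon,\delta)$ as the smallest $N$ satisfying the implicit condition in Eq.~(\ref{eq:sample_bound_exact}). Equivalently, it is the smallest $N$ with
\begin{eqnarray}
N \ge \frac{32}{\beta_\bullet^2\varepsilon^2}\, L(N), \quad L(N):=d\log\left(\frac{2eN}{d}\right)+\log\left(\frac{8}{\delta}\right).
\end{eqnarray}
The prefactor $32/(\varepsilon^2)$ and the function $L$ are identical for the two oracles. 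The only change is $\beta_0$.

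First I would treat $L$ as a common factor frozen at a fixed value $L_\ast$. This is the reading indicated by ``up to the common VC factor''. With $L$ fixed, the bound is explicit: $N_\bullet = 32 L_\ast/(\beta_\bullet^2\varepsilon^2)$. Taking the ratio gives exactly $(\beta_C/\beta_H)^2$.

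Second, I would control what happens when $L$ is not frozen. Because $\beta_C\le\beta_H$, we have $N_H\le N_C$. Since $L$ is increasing in $N$, we get $L(N_H)\le L(N_C)$. The sufficient value for the hybrid oracle then satisfies
\begin{eqnarray}
N_H \le \frac{32 L(N_C)}{\beta_H^2\varepsilon^2} = \left(\frac{\beta_C}{\beta_H}\right)^2 N_C
\end{eqnarray}
whenever $N_C$ is taken to be the value that saturates the condition. The step is as follows. The right-hand side $N' := (\beta_C/\beta_H)^2 N_C$ satisfies $N' = 32L(N_C)/(\beta_H^2\varepsilon^2)\ge 32 L(N')/(\beta_H^2\varepsilon^2)$, using $N'\le N_C$ and the monotonicity of $L$. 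So $N'$ already satisfies the hybrid condition, and $N_H\le N'$. This gives the ratio bound rigorously, even without freezing $L$. The only change from the first step is that the logarithm evaluated at the smaller $N$ can only help.

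The main subtlety is definitional rather than technical. The statement compares upper bounds, not true minimax sample complexities, so ``$\lesssim$'' must be read as a comparison of the sufficient sample sizes from Theorem~\ref{thm:noisy_pac}. The saturating $N_C$ may not be an integer and needs rounding; I would handle this by working with real-valued thresholds, or by absorbing a ceiling into the $\lesssim$. The concluding sentence on quadratic reduction then follows by reading off the factor $(\beta_C/\beta_H)^2$.
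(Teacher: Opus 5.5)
Your first step matches the paper, which gives no argument beyond dividing the two instances of Eq.~(\ref{eq:sample_bound_exact}) and cancelling the VC factor $d\log(2eN/d)+\log(8/\delta)$ as though it were the same for both oracles. Your second step goes further than the paper and is a real improvement. Write $c_\bullet:=32/(\beta_\bullet^2\varepsilon^2)$ and $N':=(\beta_C/\beta_H)^2N_C=c_H L(N_C)$. Then $N'\le N_C$ and the monotonicity of $L$ give $N'\ge c_H L(N')$. So the ratio bound holds with each logarithm evaluated at its own sample size, instead of by freezing $L$.

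One definitional point needs repair.
\begin{itemize}
\item $L(N)\le 0$ whenever $N\le \frac{d}{2e}(\delta/8)^{1/d}$, so $N\ge c_\bullet L(N)$ holds trivially for small $N$. Hence ``the smallest $N$ satisfying the condition'' is degenerate; for large $d$ it is $N=1$ for both oracles.
\item ``The value that saturates it'' is ambiguous. The map $N\mapsto N-c_\bullet L(N)$ is convex with minimum at $N=c_\bullet d$. That minimum is negative because $c_\bullet\ge 32$, so the map has two zeros.
\end{itemize}
Take $N_\bullet$ to be the larger zero, i.e., the threshold beyond which the condition holds for every larger $N$. Then $N_C>c_C d$, hence $N'>c_H d$, so $N'$ lies on the increasing branch of $N-c_H L(N)$. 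Nonnegativity of that map at $N'$ then gives $N_H\le N'$, which is your ratio bound.

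Nothing is lost by this choice. The discarded small-$N$ region has $2N<d$, where the Sauer--Shelah estimate $\Pi_{\mathcal H}(2N)\le(2eN/d)^d$ used in Theorem~\ref{thm:noisy_pac} does not hold anyway.
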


\section{Polynomial persistence under local noise}\label{sec:scalable}

We now address the main scalability question of this work. Can a hybrid advantage survive when the number of qubits and the encoding circuit size both grow, provided the growth is polynomial? To answer this, we give a canonical construction for which the effect of local noise can be tracked exactly. The model is restricted but analytically clean: active phase encoding, Clifford processing, projected Pauli readout, and local dephasing noise. This focus on projected or local observables is consistent with the broader trainability literature, where global cost functions and noise can suppress useful learning signals, while local or problem-adapted readouts can preserve trainable scales~\cite{McClean2018,Cerezo2021BP,Wang2021Noise,Thanasilp2024}.

\subsection{Canonical active-subspace family}

Fix an active subset $S_n \subset [\kappa(n)]$ and a context subset $C_n=[\kappa(n)] \setminus S_n$. For an input $x=(b,\phi)$ with
\begin{eqnarray}
b \in \{0,1\}^{\abs{C_n}}, \quad \phi=(\phi_j)_{j \in S_n} \in \R^{\abs{S_n}},
\end{eqnarray}
consider the encoded product state
\begin{eqnarray}
\ket{\psi_x} := \bigotimes_{j \in S_n} \hat{R}_z(\phi_j)\hat{H}\ket{0} \otimes \ket{b}_{C_n}.
\label{eq:canonical_state}
\end{eqnarray}
Let $\hat{V}_n$ be any Clifford circuit acting on all $\kappa(n)$ qubits, and let $\hat{P}_n$ be a Pauli readout observable. By ``score'' we mean the noiseless expectation value of the projected Pauli readout associated with the input $x$. The ideal score is
\begin{eqnarray}
s_n(x) := \bra{\psi_x} \hat{V}_n^\dagger \hat{P}_n \hat{V}_n \ket{\psi_x}.
\label{eq:ideal_score}
\end{eqnarray}
Since $\hat{V}_n$ is Clifford, the Heisenberg-evolved observable
\begin{eqnarray}
\hat{Q}_n := \hat{V}_n^\dagger \hat{P}_n \hat{V}_n
\end{eqnarray}
is again a Pauli string.

The next proposition gives the exact ideal score formula whenever the readout is compatible with the active split.

\begin{proposition}[Exact ideal score for the canonical family]\label{prop:ideal_score}
Assume that the Pauli string $\hat{Q}_n$ has only $\hat{X}$ or $\hat{Y}$ on active qubits and only $\hat{\id}$ or $\hat{Z}$ on the context qubits, namely
\begin{eqnarray}
\hat{Q}_n = \xi_n \left[\bigotimes_{j \in A_X} \hat{X}_j\right] \left[\bigotimes_{j\in A_Y} \hat{Y}_j\right] \left[\bigotimes_{\ell\in B_Z} \hat{Z}_\ell\right],
\label{eq:Qn_form}
\end{eqnarray}
where $\xi_n \in \{\pm 1\}$, $A_X, A_Y\subseteq S_n$, and $B_Z \subseteq C_n$. Then, the ideal score in Eq.~(\ref{eq:ideal_score}) factorizes as
\begin{eqnarray}
s_n(x) = \xi_n (-1)^{\sum_{\ell \in B_Z} b_\ell} \prod_{j \in A_X}\cos\phi_j \prod_{j \in A_Y}\sin\phi_j.
\label{eq:ideal_score_formula}
\end{eqnarray}
If $\hat{Q}_n$ contains any $\hat{Z}$ acting on an active qubit or any $\hat{X}$/$\hat{Y}$ acting on a context qubit, then $s_n(x)=0$ for all $x$.
\end{proposition}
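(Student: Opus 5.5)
Since $\ket{\psi_x}$ in Eq.~(\ref{eq:canonical_state}) is a product state across all $\kappa(n)$ qubits and $\hat{Q}_n$ is a Pauli string (Clifford conjugation maps Paulis to signed Paulis), we have $s_n(x)=\bra{\psi_x}\hat{Q}_n\ket{\psi_x}$. We write $\hat{Q}_n=\xi_n\bigotimes_{k}\hat{\sigma}_k$ with $\xi_n\in\{\pm1\}$ and each $\hat{\sigma}_k\in\{\hat{\id},\hat{X},\hat{Y},\hat{Z}\}$. The plan is to exploit the product structure, so that
\begin{eqnarray}
s_n(x)=\xi_n\prod_{j\in S_n}\bra{\varphi_j}\hat{\sigma}_j\ket{\varphi_j}\prod_{\ell\in C_n}\bra{b_\ell}\hat{\sigma}_\ell\ket{b_\ell},
\end{eqnarray}
where $\ket{\varphi_j}:=\hat{R}_z(\phi_j)\hat{H}\ket{0}$. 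The proposition then reduces to evaluating single-qubit Pauli expectations, which we tabulate.

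\textbf{Active factors.} Up to a global phase, $\ket{\varphi_j}=(\ket{0}+e^{i\phi_j}\ket{1})/\sqrt{2}$, which lies on the equator of the Bloch sphere with Bloch vector $(\cos\phi_j,\sin\phi_j,0)$. This gives the single-qubit expectations $\langle\hat{\id}\rangle=1$, $\langle\hat{X}\rangle=\cos\phi_j$, $\langle\hat{Y}\rangle=\sin\phi_j$ and $\langle\hat{Z}\rangle=0$. The sign convention of $\hat{R}_z$ (whether $e^{\mp i\phi Z/2}$) only changes the global phase, so these values are unaffected.

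\textbf{Context factors.} For a computational-basis state, $\bra{b}\hat{\id}\ket{b}=1$ and $\bra{b}\hat{Z}\ket{b}=(-1)^b$. Since $\hat{X}$ and $\hat{Y}$ flip the basis state, $\bra{b}\hat{X}\ket{b}=\bra{b}\hat{Y}\ket{b}=0$.

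\textbf{Assembly.} When $\hat{Q}_n$ has the form of Eq.~(\ref{eq:Qn_form}), substituting these factors into the product gives Eq.~(\ref{eq:ideal_score_formula}) directly; identity sites contribute $1$. The complementary claim follows because a $\hat{Z}$ on an active qubit, or an $\hat{X}$/$\hat{Y}$ on a context qubit, contributes a zero factor, so $s_n(x)=0$ for every $x$.

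The only care point, which is minor, is the phase bookkeeping: $\xi_n$ must be real ($\pm1$), which holds because Hermiticity is preserved under unitary conjugation. There is no genuine obstacle; the argument is a direct computation.
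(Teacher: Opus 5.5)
Your proposal is correct and follows the paper's proof essentially verbatim: you factorize the expectation over the product state $\ket{\psi_x}$ and then insert the single-qubit tables for active and context qubits. One aside needs correcting: the $\hat{R}_z$ sign convention is not merely a global phase, since $e^{+i\phi\hat{Z}/2}$ would produce the relative phase $e^{-i\phi}$ and hence $\langle\hat{Y}\rangle=-\sin\phi$, so the formula depends on the convention $\hat{R}_z(\phi)\hat{H}\ket{0}\propto\ket{0}+e^{i\phi}\ket{1}$, which the paper fixes explicitly.
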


\begin{proof}---The state in Eq.~(\ref{eq:canonical_state}) is a product state across qubits, so the expectation of a tensor-product observable factorizes into the product of one-qubit expectations. For an active qubit,
\begin{eqnarray}
\ket{+_{\phi}} := \hat{R}_z(\phi)\hat{H}\ket{0} = \frac{\ket{0} + e^{i\phi}\ket{1}}{\sqrt{2}},
\end{eqnarray}
which satisfies
\begin{eqnarray}
\bra{+_{\phi}}\hat{X}\ket{+_{\phi}} &=& \cos\phi, \nonumber \\
\bra{+_{\phi}}\hat{Y}\ket{+_{\phi}} &=& \sin\phi, \nonumber \\
\bra{+_{\phi}}\hat{Z}\ket{+_{\phi}} &=& 0.
\label{eq:single_qubit_expt}
\end{eqnarray}
For a context qubit prepared in a basis state $\ket{b_\ell}$, one has
\begin{eqnarray}
\bra{b_\ell}\hat{Z}\ket{b_\ell}=(-1)^{b_\ell}, ~~ \bra{b_\ell}\hat{X}\ket{b_\ell}=\bra{b_\ell}\hat{Y}\ket{b_\ell}=0.
\label{eq:context_expt}
\end{eqnarray}
Therefore, if $\hat{Q}_n$ has the form in Eq.~(\ref{eq:Qn_form}), the expectation factorizes exactly into the product of the corresponding terms in Eq.~(\ref{eq:single_qubit_expt}) and Eq.~(\ref{eq:context_expt}), yielding Eq.~(\ref{eq:ideal_score_formula}). If instead $\hat{Q}_n$ contains a $\hat{Z}$ on an active qubit or an $\hat{X}$/$\hat{Y}$ on a context qubit, one of the one-qubit factors is zero, so the whole product vanishes.
\end{proof}

{\bf Proposition~\ref{prop:ideal_score}} is already informative. A projected Pauli observable can compress a high-order multiplicative interaction of the phases and context bits into a single scalar quantity. This is exactly the type of situation in which a small active quantum sector can add a direction outside the classical low-order span.

\subsection{Exact dephasing-noise scaling}

We now place the canonical family under local dephasing noise. After each ideal Clifford gate $\hat{U}_g$ in the circuit $\hat{V}_n = \hat{U}_G \cdots \hat{U}_1$, suppose each qubit in the device is followed by a dephasing channel
\begin{eqnarray}
\mathcal Z_{p_g}(\hat{\rho})=(1-p_g)\hat{\rho} + p_g \hat{Z}\hat{\rho} \hat{Z}, \quad 0 \le p_g \le \frac{1}{2}.
\end{eqnarray}
In the Heisenberg picture,
\begin{eqnarray}
\mathcal Z_{p_g}^\ast(\hat{I}) &=& \hat{I}, \nonumber \\
\mathcal Z_{p_g}^\ast(\hat{Z}) &=& \hat{Z}, \nonumber \\
\mathcal Z_{p_g}^\ast(\hat{X}) &=& (1-2p_g)\hat{X}, \nonumber \\
\mathcal Z_{p_g}^\ast(\hat{Y}) &=& (1-2p_g)\hat{Y}.
\label{eq:dephasing_heisenberg}
\end{eqnarray}
Because Clifford conjugation maps Pauli strings to Pauli strings, the backward propagation of $\hat{P}_n$ through the noisy circuit can be tracked exactly, location by location.

Let $L_n(\hat{P}_n)$ denote the set of noise locations in the backward light cone of $\hat{P}_n$ at which the relevant one-qubit Pauli operator is $\hat{X}$ or $\hat{Y}$. Equivalently, these are precisely the places where the factor $(1-2p_g)$ is picked up under Eq.~(\ref{eq:dephasing_heisenberg}).

\begin{theorem}[Exact score attenuation for Clifford AQSE under dephasing]\label{thm:dephasing}
Consider the canonical active-subspace family above. Let $\hat{\rho}_x^{\mathrm{noisy}}$ denote the output state when the ideal Clifford circuit $\hat{V}_n$ is interleaved with local dephasing channels as described above. Then for every input $x$,
\begin{eqnarray}
\Tr\big[\hat{P}_n \hat{\rho}_x^{\mathrm{noisy}}\big] = \Lambda_n(\hat{P}_n) s_n(x),
\label{eq:noisy_exact}
\end{eqnarray}
where $s_n(x)$ is the ideal score from Eq.~(\ref{eq:ideal_score}) and $\Lambda_n(\hat{P}_n)$ is the dephasing attenuation factor
\begin{eqnarray}
\Lambda_n(\hat{P}_n) := \prod_{g \in L_n(\hat{P}_n)} (1-2p_g).
\label{eq:Lambda_def}
\end{eqnarray}
In particular, if $p_g \le \kappa/\abs{L_n(\hat{P}_n)}$ for all relevant noise locations and $\abs{L_n(\hat{P}_n)} \ge 4\kappa$, then
\begin{eqnarray}
\Lambda_n(\hat{P}_n) \ge e^{-4\kappa}.
\label{eq:Lambda_lower}
\end{eqnarray}
More generally, if $p_g=O\left(\frac{\log n}{\abs{L_n(\hat{P}_n)}}\right)$, then $\Lambda_n(\hat{P}_n) \ge n^{-O(1)}$.
\end{theorem}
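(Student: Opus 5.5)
The plan is to work entirely in the Heisenberg picture and propagate $\hat{P}_n$ backward through the noisy circuit, one layer at a time. Write the noisy circuit as $\mathcal{N}_G\circ\mathcal{U}_G\circ\cdots\circ\mathcal{N}_1\circ\mathcal{U}_1$. Here $\mathcal{U}_g(\cdot)=\hat{U}_g(\cdot)\hat{U}_g^\dagger$, and $\mathcal{N}_g=\bigotimes_q\mathcal{Z}_{p_g}$ is the layer of single-qubit dephasing channels that follows gate $g$. Then
\begin{eqnarray}
\Tr\big[\hat{P}_n\hat{\rho}_x^{\mathrm{noisy}}\big]=\Tr\big[\mathcal{U}_1^\ast\mathcal{N}_1^\ast\cdots\mathcal{U}_G^\ast\mathcal{N}_G^\ast(\hat{P}_n)\,\hat{\rho}_x\big].
\end{eqnarray}
I will prove by backward induction on $g$ that after applying $\mathcal{U}_g^\ast\mathcal{N}_g^\ast\cdots\mathcal{U}_G^\ast\mathcal{N}_G^\ast$, the operator equals $c_g\,\hat{P}^{(g)}$. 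Here $\hat{P}^{(g)}:=(\hat{U}_G\cdots\hat{U}_g)^\dagger\hat{P}_n(\hat{U}_G\cdots\hat{U}_g)$ is the ideally evolved Pauli string, and $c_g$ is a scalar.

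The inductive step has two ingredients. First, $\mathcal{U}_g^\ast$ conjugates a Pauli string to a Pauli string, carrying any scalar prefactor along unchanged, because the Clifford property is preserved under multiplication by a scalar. Second, a dephasing layer acts on a Pauli string $\bigotimes_q\sigma_q$ factor-wise, since $\mathcal{N}_g^\ast$ is a tensor product of single-qubit maps. By Eq.~(\ref{eq:dephasing_heisenberg}), each factor is returned unchanged up to a multiplicative factor: $1$ when $\sigma_q\in\{\hat{I},\hat{Z}\}$, and $(1-2p_g)$ when $\sigma_q\in\{\hat{X},\hat{Y}\}$. The string therefore remains a scalar multiple of the ideal string. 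The factors that appear are exactly one $(1-2p_g)$ for each noise location at which the propagated Pauli carries $\hat{X}$ or $\hat{Y}$. Identity factors contribute $1$, so only locations in the backward light cone matter. By definition these locations form $L_n(\hat{P}_n)$. The final operator is therefore $\Lambda_n(\hat{P}_n)\,\hat{Q}_n$, and
\begin{eqnarray}
\Tr\big[\hat{P}_n\hat{\rho}_x^{\mathrm{noisy}}\big]=\Lambda_n(\hat{P}_n)\Tr[\hat{Q}_n\hat{\rho}_x]=\Lambda_n(\hat{P}_n)s_n(x),
\end{eqnarray}
with $s_n(x)$ given by Eq.~(\ref{eq:ideal_score}) and evaluated explicitly through Proposition~\ref{prop:ideal_score}. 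The main point requiring care is bookkeeping: the set $L_n(\hat{P}_n)$ depends only on the ideal Pauli trajectory and not on $x$, which is exactly why the attenuation factorizes out uniformly in the input. I expect this step, namely stating precisely that dephasing never changes the Pauli type and hence never changes the trajectory, to be the only real content.

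For the lower bounds, let $L=\abs{L_n(\hat{P}_n)}$ and use the elementary inequality $1-t\ge e^{-2t}$ for $t\in[0,1/2]$. This inequality holds because $\log(1-t)+2t$ is concave, is $0$ at $t=0$, and is $1-\log 2>0$ at $t=1/2$. The hypothesis $L\ge4\kappa$ guarantees $t=2p_g\le 2\kappa/L\le1/2$. Then
\begin{eqnarray}
\Lambda_n\ge\exp\Big(-4\sum_{g\in L_n}p_g\Big)\ge e^{-4L\cdot\kappa/L}=e^{-4\kappa}.
\end{eqnarray}
For the general statement, take $p_g\le c\log n/L$. For large $n$ with $c\log n\le L/4$, the same argument gives $\Lambda_n\ge e^{-4c\log n}=n^{-4c}$. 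In the remaining regime $L<4c\log n$, each factor satisfies $1-2p_g\ge 1-2c\log n/L$. Here I would additionally assume $p_g\le\frac14$, i.e. $1-2p_g\ge\tfrac12$, so that $\Lambda_n\ge 2^{-L}\ge n^{-O(1)}$. I would state explicitly that the $O(\cdot)$ constant in the hypothesis is taken to ensure this.
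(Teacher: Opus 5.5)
Your proposal is correct and is essentially the paper's proof: backward Heisenberg propagation of $\hat{P}_n$ through the Clifford conjugations and dephasing layers, one factor $(1-2p_g)$ per $\hat{X}$/$\hat{Y}$ location, and the bound $\log(1-t)\ge -2t$ on $[0,1/2]$ (you apply it per factor rather than to $(1-2\kappa/L)^L$, which changes nothing). Your case split for the last clause is more careful than the paper's one-line ``follows similarly'', which tacitly needs $\abs{L_n(\hat{P}_n)}\ge 4c\log n$: if $\abs{L_n(\hat{P}_n)}$ stays bounded, the hypothesis $p_g=O(\log n/\abs{L_n(\hat{P}_n)})$ permits $p_g=1/2$ and hence $\Lambda_n=0$, so an extra condition such as your $p_g\le 1/4$ is genuinely needed.
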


\begin{proof}---Write the noisy channel in the Heisenberg picture as
\begin{eqnarray}
\mathcal{C}_n^\ast = \mathcal{U}_1^\ast \circ \mathcal{Z}_{p_1}^\ast \circ \cdots \circ \mathcal{U}_G^\ast \circ \mathcal{Z}_{p_G}^\ast,
\end{eqnarray}
where $\mathcal U_g^\ast(\hat{A})=\hat{U}_g^\dagger \hat{A} \hat{U}_g$. Since each $\hat{U}_g$ is Clifford, if we start from the Pauli observable $\hat{P}_n$, then after each application of $\mathcal{U}_g^\ast$, we still have a Pauli string. At the subsequent dephasing step, every one-qubit factor equal to $\hat{\id}$ or $\hat{Z}$ is unchanged, while every factor equal to $\hat{X}$ or $\hat{Y}$ is multiplied by $(1-2p_g)$ by Eq.~(\ref{eq:dephasing_heisenberg}). Therefore, after all steps,
\begin{eqnarray}
\mathcal{C}_n^\ast(\hat{P}_n) = \left[\prod_{g \in L_n(\hat{P}_n)}\!\!\!\!\!(1-2p_g)\right]\hat{Q}_n = \Lambda_n(\hat{P}_n) \hat{Q}_n,
\end{eqnarray}
where $\hat{Q}_n = \hat{V}_n^\dagger \hat{P}_n \hat{V}_n$ is the ideal Heisenberg image and $L_n(\hat{P}_n)$ is exactly the set of noise locations where an $\hat{X}$ or $\hat{Y}$ factor occurs. Taking expectation in the encoded state $\ket{\psi_x}$ gives
\begin{eqnarray}
\Tr\big[\hat{P}_n \hat{\rho}_x^{\mathrm{noisy}}\big] &=& \bra{\psi_x}\mathcal C_n^\ast(\hat{P}_n)\ket{\psi_x} \nonumber \\
	&=& \Lambda_n(\hat{P}_n)\bra{\psi_x}\hat{Q}_n\ket{\psi_x},
\end{eqnarray}
which is Eq.~(\ref{eq:noisy_exact}) because $\bra{\psi_x}\hat{Q}_n\ket{\psi_x}=s_n(x)$.

To prove Eq.~(\ref{eq:Lambda_lower}), let $L := \abs{L_n(\hat{P}_n)}$. By assumption $p_g \le \kappa/L$ for all relevant $g$, so
\begin{eqnarray}
\Lambda_n(\hat{P}_n) \ge \left(1-\frac{2\kappa}{L}\right)^L.
\end{eqnarray}
If $L \ge 4\kappa$, then $0 \le 2\kappa/L \le 1/2$, and the elementary bound $\log(1-t) \ge -2t$ for $0 \le t \le 1/2$ gives
\begin{eqnarray}
L\log\left(1-\frac{2\kappa}{L}\right) \ge -4\kappa.
\end{eqnarray}
Exponentiating yields Eq.~(\ref{eq:Lambda_lower}). The final statement follows similarly by replacing $\kappa/L$ with $c(\log n)/L$.
\end{proof}

{\bf Theorem~\ref{thm:dephasing}} gives a precise form to the statement that polynomial gate complexity does not by itself destroy the relevant score scale. What matters is not just how many gates exist in total, but how the relevant observable propagates through the backward light cone and how much attenuation accumulates there. The factor $\Lambda_n(\hat{P}_n)$ therefore attenuates the readout score itself; the corresponding statement about oracle reliability is derived next. Appendix~\ref{app:noise_general} extends the same calculation to general local Pauli-diagonal noise, and Appendix~\ref{app:explicit_family} packages the result into an explicit scalable family with logarithmic active support.

\begin{corollary}[Polynomial persistence of oracle reliability]\label{cor:poly_persistence}
Assume the setting of {\bf Theorem~\ref{thm:dephasing}} and {\bf Proposition~\ref{prop:ideal_score}}. Suppose further that on the data support the ideal score satisfies
\begin{eqnarray}
\abs{s_n(x)} \ge s_{\min}^{\mathrm{id}}(n) \quad\text{for all $x$},
\label{eq:ideal_bias_bound}
\end{eqnarray}
where $s_{\min}^{\mathrm{id}}(n) \ge 1/\mathrm{poly}(n)$ is a uniform lower bound on the noiseless score. Define the target label by
\begin{eqnarray}
h^\star(x)=\sgn(s_n(x)).
\end{eqnarray}
If the learner queries the noisy Pauli measurement of $\hat{P}_n$, then the resulting oracle reliability obeys
\begin{eqnarray}
\beta_0(n)\ge \Lambda_n(\hat{P}_n)\,s_{\min}^{\mathrm{id}}(n).
\label{eq:beta_total_bound}
\end{eqnarray}
Hence, whenever $\Lambda_n(\hat{P}_n)\ge 1/\mathrm{poly}(n)$, the oracle reliability remains inverse-polynomial. Combined with {\bf Corollary~\ref{cor:poly_sample}}, this implies polynomial PAC sample complexity even when the number of qubits and the encoding gate count both scale polynomially with $n$.
\end{corollary}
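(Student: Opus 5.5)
The plan is to read the noisy Pauli measurement of $\hat{P}_n$ as the oracle itself and compute its reliability directly. A single-shot projective measurement of a Pauli observable returns an outcome $\widetilde Y\in\{-1,+1\}$. Its conditional mean is
\begin{eqnarray}
\E[\widetilde Y\mid X=x]=\Tr\big[\hat{P}_n\hat{\rho}_x^{\mathrm{noisy}}\big]=\Lambda_n(\hat{P}_n)\,s_n(x),
\end{eqnarray}
where the second equality is Theorem~\ref{thm:dephasing}. Since $\widetilde Y$ is $\pm1$-valued, $\Prob(\widetilde Y=+1\mid x)=(1+\E[\widetilde Y\mid x])/2$. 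The first step is therefore to write $\Prob(\widetilde Y=h^\star(x)\mid x)=\big(1+h^\star(x)\,\E[\widetilde Y\mid x]\big)/2$ and match it with the noisy-oracle model in Eq.~(\ref{eq:noisy_model}). This identifies
\begin{eqnarray}
\beta(x)=h^\star(x)\,\Lambda_n(\hat{P}_n)\,s_n(x).
\end{eqnarray}

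The second step uses $h^\star(x)=\sgn(s_n(x))$. By Eq.~(\ref{eq:ideal_bias_bound}), $s_n(x)\neq0$ on the data support, so the sign is well defined. Because $0\le p_g\le 1/2$, every factor in Eq.~(\ref{eq:Lambda_def}) is nonnegative, and hence $\Lambda_n(\hat{P}_n)\ge0$. These two facts give $\beta(x)=\Lambda_n(\hat{P}_n)\abs{s_n(x)}\ge \Lambda_n(\hat{P}_n)\,s_{\min}^{\mathrm{id}}(n)$. Taking the infimum over $x$, as in Eq.~(\ref{eq:beta0_def}), yields Eq.~(\ref{eq:beta_total_bound}). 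We also need $\beta(x)\le1$, which holds because $\abs{s_n(x)}\le1$ for a Pauli expectation and $\Lambda_n\le1$. Finally, when $\Lambda_n\ge1/\mathrm{poly}(n)$, the product of two inverse-polynomials is again inverse-polynomial, so $\beta_0(n)\ge 1/\mathrm{poly}(n)$. Corollary~\ref{cor:poly_sample} then gives polynomial sample complexity, provided $D_H(n)=\mathrm{poly}(n)$ for the chosen linear-separator class containing $h^\star$. Here I will note that $h^\star$ is realizable: for instance, with $\Phi_Q$ including the projected readout of $\hat{P}_n$, $h^\star$ is a linear threshold of that single feature. Theorem~\ref{thm:dephasing} also supplies a sufficient condition on the noise rates, $p_g=O(\log n/\abs{L_n})$, under which $\Lambda_n\ge1/\mathrm{poly}(n)$ holds even though the encoding gate count grows polynomially.

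The main subtlety is not algebraic but definitional. One has to fix that the oracle is a single-shot measurement, so that its label is Bernoulli with bias exactly equal to the noisy expectation. One also has to make sure the sign of the noisy score agrees with that of the ideal score, which is exactly where nonnegativity of $\Lambda_n$ enters. I will make both points explicit. If one instead allowed repeated shots with majority vote, the reliability would only improve, and the stated bound would remain a valid lower bound.
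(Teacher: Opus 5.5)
Your proposal is correct and matches the paper's proof: you treat the single-shot $\pm1$ outcome of the noisy $\hat{P}_n$ measurement as the oracle label, read off $\beta(x)=h^\star(x)\Tr[\hat{P}_n\hat{\rho}_x^{\mathrm{noisy}}]$, insert Theorem~\ref{thm:dephasing}, and take the infimum over $x$. Your explicit observation that $p_g\le 1/2$ forces $\Lambda_n(\hat{P}_n)\ge 0$, so that the noisy and ideal scores have the same sign, is a step the paper uses only implicitly when it writes $\beta(x)=\abs{\Tr[\hat{P}_n\hat{\rho}_x^{\mathrm{noisy}}]}$.
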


\begin{proof}---The outcome of the Pauli measurement is a random variable $M(x) \in \{-1,+1\}$ with mean
\begin{eqnarray}
\E[M(x)\mid X=x] = \Tr[\hat{P}_n\hat{\rho}_x^{\mathrm{noisy}}].
\end{eqnarray}
Because the target label is $h^\star(x)=\sgn(s_n(x))$, the correctness probability of the queried label is
\begin{eqnarray}
\Prob\big(M(x)=h^\star(x)\mid X=x\big) &=& \frac{1+h^\star(x)\E[M(x) \mid X=x]}{2} \nonumber \\
	&=& \frac{1+\abs{\Tr[\hat{P}_n\hat{\rho}_x^{\mathrm{noisy}}]}}{2}.
\end{eqnarray}
Therefore, the oracle reliability is
\begin{eqnarray}
\beta(x)=\abs{\Tr[\hat{P}_n\hat{\rho}_x^{\mathrm{noisy}}]}.
\end{eqnarray}
By {\bf Theorem~\ref{thm:dephasing}} and Eq.~(\ref{eq:ideal_bias_bound}),
\begin{eqnarray}
\beta(x) = \Lambda_n(\hat{P}_n)\abs{s_n(x)} \ge \Lambda_n(\hat{P}_n)s_{\min}^{\mathrm{id}}(n).
\end{eqnarray}
Taking the infimum over $x$ proves Eq.~(\ref{eq:beta_total_bound}). The final statement follows from {\bf Corollary~\ref{cor:poly_sample}}.
\end{proof}

{\bf Corollary~\ref{cor:poly_persistence}} is the precise form of the main scalability message. If the ideal active-subspace score floor is at least inverse-polynomial and the noise attenuation along the relevant light cone is also inverse-polynomial, then polynomial encoding cost does not destroy the PAC learnability. In particular, if the attenuation stays constant, then the only asymptotic decay comes from the intentionally chosen ideal score floor itself.

A concrete and useful regime is the following. Suppose the score in Eq.~(\ref{eq:ideal_score_formula}) uses only $a(n)=O(\log n)$ active qubits and the phase range is restricted so that $\abs{\cos\phi_j} \ge c_0 > 0$ or $\abs{\sin\phi_j} \ge c_0 > 0$ on the data support. Then,
\begin{eqnarray}
s_{\min}^{\mathrm{id}}(n)\ge c_0^{a(n)} = n^{-O(1)}.
\end{eqnarray}
If, in addition, each relevant dephasing probability satisfies $p_g\le \kappa/|L_n|$, then {\bf Theorem~\ref{thm:dephasing}} gives a constant lower bound on $\Lambda_n(\hat{P}_n)$. This is a direct and explicit realization of an active-subspace regime in which the number of qubits and the encoding gate count may both grow polynomially while the oracle reliability remains inverse-polynomial.

\section{\texorpdfstring{Large-qubit active-subspace toy family and benchmark numerics}{Large-qubit active-subspace toy family and benchmark numerics}}\label{sec:toy}

Here, we give an example of a $64$-qubit toy family whose active support remains small while the total register is genuinely large. The point is not to brute-force a $2^{64}$-dimensional statevector, but to keep the projected quantum feature analytically tractable even when the ambient device size is far beyond the original toy scale.

\subsection{\texorpdfstring{A 64-qubit toy family with six active phases}{A 64-qubit toy family with six active phases}}

Let the active subset be
\begin{eqnarray}
S=\{1,2,3,4,5,6\},
\end{eqnarray}
and let the remaining qubits
\begin{eqnarray}
C=\{7,8,\ldots,64\}
\end{eqnarray}
stay on the classical path at the data-loading stage. For an input $x=(b,\phi)$ with $b \in \{0,1\}^{58}$ and $\phi=(\phi_1,\ldots,\phi_6) \in [0,2\pi)^6$, consider the encoded state
\begin{eqnarray}
\ket{\Psi_x} := \hat{V}_{64} \left( \bigotimes_{j=1}^{6} \hat{R}_z(\phi_j)\hat{H}\ket{0} \otimes \ket{b_7 \cdots b_{64}}\right),
\label{eq:toy_state}
\end{eqnarray}
where $\hat{V}_{64}$ is a fixed Clifford block acting on all $64$ qubits. Choose the measured Pauli observable $\hat{P}_{64}$ so that
\begin{eqnarray}
\hat{V}_{64}^\dagger \hat{P}_{64} \hat{V}_{64}
=
\hat{X}_1\hat{X}_2\hat{Y}_3\hat{X}_4\hat{X}_5\hat{Y}_6\hat{Z}_7\hat{Z}_8.
\label{eq:toy_heisenberg_image}
\end{eqnarray}
Such a choice is always possible because Clifford conjugation permutes the Pauli group.

The following proposition gives the exact projected feature and identifies its location relative to a low-order classical span.

\begin{widetext}
\begin{proposition}[Exact projected observable in the 64-qubit toy family]\label{prop:toy_exact}
For the state in Eq.~\eqref{eq:toy_state}, one has
\begin{eqnarray}
\bra{\Psi_x}\hat{P}_{64}\ket{\Psi_x} = (-1)^{b_7+b_8}\cos\phi_1\cos\phi_2\sin\phi_3\cos\phi_4\cos\phi_5\sin\phi_6.
\end{eqnarray}
Define
\begin{eqnarray}
q(x) := (-1)^{b_7+b_8}\cos\phi_1\cos\phi_2\sin\phi_3\cos\phi_4\cos\phi_5\sin\phi_6.
\label{eq:q_def}
\end{eqnarray}
Let $\mathcal{V}_{\le 7}$ be the span of all interaction-only trigonometric monomials of total degree at most $7$ in the raw variables
\begin{eqnarray}
z_7:=(-1)^{b_7},
\quad
z_8:=(-1)^{b_8},
\quad
\cos\phi_j,
\quad
\sin\phi_j
\quad (j=1,\ldots,6).
\label{eq:selected_raw_variables}
\end{eqnarray}
Under the product distribution in which $b_7,b_8$ are unbiased bits and $\phi_1,\ldots,\phi_6$ are independent and uniform on $[0,2\pi)$, the feature $q(x)$ is orthogonal in $L_2$ to $\mathcal{V}_{\le 7}$.
\end{proposition}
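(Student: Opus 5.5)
The first claim is a direct specialization of Proposition~\ref{prop:ideal_score}, and the orthogonality claim reduces to factorizing expectations over independent coordinates.

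\textbf{Step 1: the formula for $q$.} I would read off from Eq.~\eqref{eq:toy_heisenberg_image} that the Heisenberg image $\hat{V}_{64}^\dagger \hat{P}_{64}\hat{V}_{64}$ has exactly the form of Eq.~\eqref{eq:Qn_form}. Here $\xi=+1$, $A_X=\{1,2,4,5\}$, $A_Y=\{3,6\}$ and $B_Z=\{7,8\}$, with identity on every remaining context qubit. Substituting into Eq.~\eqref{eq:ideal_score_formula} gives $(-1)^{b_7+b_8}\cos\phi_1\cos\phi_2\sin\phi_3\cos\phi_4\cos\phi_5\sin\phi_6$. This is exactly $q(x)$ in Eq.~\eqref{eq:q_def}.

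\textbf{Step 2: reduce orthogonality to monomials.} Since $\mathcal{V}_{\le 7}$ is a finite-dimensional span, it suffices to show $\E[q\,m]=0$ for every spanning monomial $m$. I interpret an interaction-only monomial as a product $m=\prod_{v\in T} v$ over a subset $T$ of the fourteen raw variables in Eq.~\eqref{eq:selected_raw_variables}, with each variable used at most once and $|T|\le 7$. Group the variables into eight independent blocks: $\{z_7\}$, $\{z_8\}$, and $\{\cos\phi_j,\sin\phi_j\}$ for $j=1,\ldots,6$. Then $q=\prod_{B} f_B$ with one factor $f_B$ per block, and $m=\prod_B g_B$, where $g_B$ is the product of the chosen variables in block $B$. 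By independence under the product distribution,
\begin{eqnarray}
\E[q\,m]=\prod_{B}\E[f_B\,g_B].
\end{eqnarray}

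\textbf{Step 3: the one-block computation.} For the bit blocks, $g_B\in\{1,z_\ell\}$. Since $z_\ell$ is unbiased, $\E[z_\ell]=0$ and $\E[z_\ell^2]=1$, so the factor is nonzero only when $g_B=z_\ell$. For a phase block, $f_B\in\{\cos\phi_j,\sin\phi_j\}$ and $g_B\in\{1,\cos\phi_j,\sin\phi_j,\cos\phi_j\sin\phi_j\}$. Using the uniform-phase integrals, $\E[\cos]=\E[\sin]=\E[\cos\sin]=\E[\cos^2\sin]=\E[\cos\sin^2]=0$ and $\E[\cos^2]=\E[\sin^2]=1/2$. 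Hence $\E[f_Bg_B]\neq 0$ only when $g_B=f_B$.

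It follows that $\E[q\,m]\neq 0$ forces $g_B=f_B$ in all eight blocks. Then $m=q$ up to sign, which has degree $8>7$, contradicting $|T|\le 7$. Therefore $\E[q\,m]=0$ for every spanning monomial, and $q\perp\mathcal{V}_{\le 7}$. As a sanity check, $\norm{q}_2^2=2^{-6}\neq 0$, so $q$ is a nonzero direction outside $\mathcal{V}_{\le 7}$. This is exactly the situation addressed by Theorem~\ref{thm:residual}.

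\textbf{Main obstacle.} The only delicate point is the precise meaning of ``interaction-only'' for the paired variables $\cos\phi_j$ and $\sin\phi_j$. Mixed products such as $\cos\phi_j\sin\phi_j$ must be handled explicitly, and they are, via $\E[\cos^2\sin]=\E[\cos\sin^2]=0$.

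If higher powers such as $\cos^2\phi_j$ were also allowed, the argument needs one extra fact. By $\cos^2\phi_j+\sin^2\phi_j=1$ and the Fourier orthogonality of $e^{ik\phi_j}$, a power product on block $j$ of total degree $d_j$ can pair nontrivially with $\cos\phi_j$ or $\sin\phi_j$ only if $d_j$ is odd, hence $d_j\ge 1$. Every block is then still forced to be present with degree at least $1$. But the block degree is no longer unique, so the bare count of eight blocks against degree $7$ no longer closes the argument by itself. I would therefore state the interaction-only convention explicitly at the start of the proof.
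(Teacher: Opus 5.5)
Your proof is correct and is essentially the paper's argument. The paper likewise reads the formula off Proposition~\ref{prop:ideal_score} with $A_X=\{1,2,4,5\}$, $A_Y=\{3,6\}$, $B_Z=\{7,8\}$, then notes that a monomial of degree at most $7$ must omit one of the eight zero-mean generators of $q$, whose coordinate integral vanishes; this is the pigeonhole form of your block-by-block factorization.

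Your closing caveat about higher powers is unnecessary, however. Any monomial of total degree at most $7$ over the eight independent blocks leaves some block with degree zero, and $\E[f_B]=0$ for that block already forces $\E[q\,m]=0$ by your factorization.
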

\end{widetext}

\begin{proof}---The expectation formula is the special case of {\bf Proposition~\ref{prop:ideal_score}} with
\begin{eqnarray}
A_X=\{1,2,4,5\}, ~~ A_Y=\{3,6\}, ~~ B_Z=\{7,8\},
\end{eqnarray}
applied to the Heisenberg image in Eq.~(\ref{eq:toy_heisenberg_image}). This proves the first claim.

For the orthogonality statement, write
\begin{eqnarray}
c_j:=\cos\phi_j, \quad s_j:=\sin\phi_j.
\end{eqnarray}
Then,
\begin{eqnarray}
q(x)=z_7 z_8 c_1 c_2 s_3 c_4 c_5 s_6,
\end{eqnarray}
which is the product of eight zero-mean generators. Let $m(x)$ be any interaction-only monomial of total degree at most $7$ in the variables listed in Eq.~(\ref{eq:selected_raw_variables}). Since $m$ has degree at most $7$, it omits at least one of the eight generators appearing in $q$.

If the omitted factor is $z_7$ or $z_8$, then $\E[qm]=0$ immediately, because the missing context bit has zero mean and is independent of all other factors. If the omitted factor is, say, $c_1$, then $m$ may contain neither factor from qubit $1$ or it may contain $s_1$, but in either case the $\phi_1$-integral in $\E[qm]$ contains an odd power of $\cos\phi_1$ and therefore vanishes over $[0,2\pi)$. The same argument applies when the omitted factor is one of $c_2,s_3,c_4,c_5,s_6$. Hence, $\E[qm]=0$ for every generator of $\mathcal V_{\le 7}$, so $q$ is orthogonal to the entire span.
\end{proof}

\begin{figure*}[t]
\centering
\includegraphics[width=0.80\textwidth]{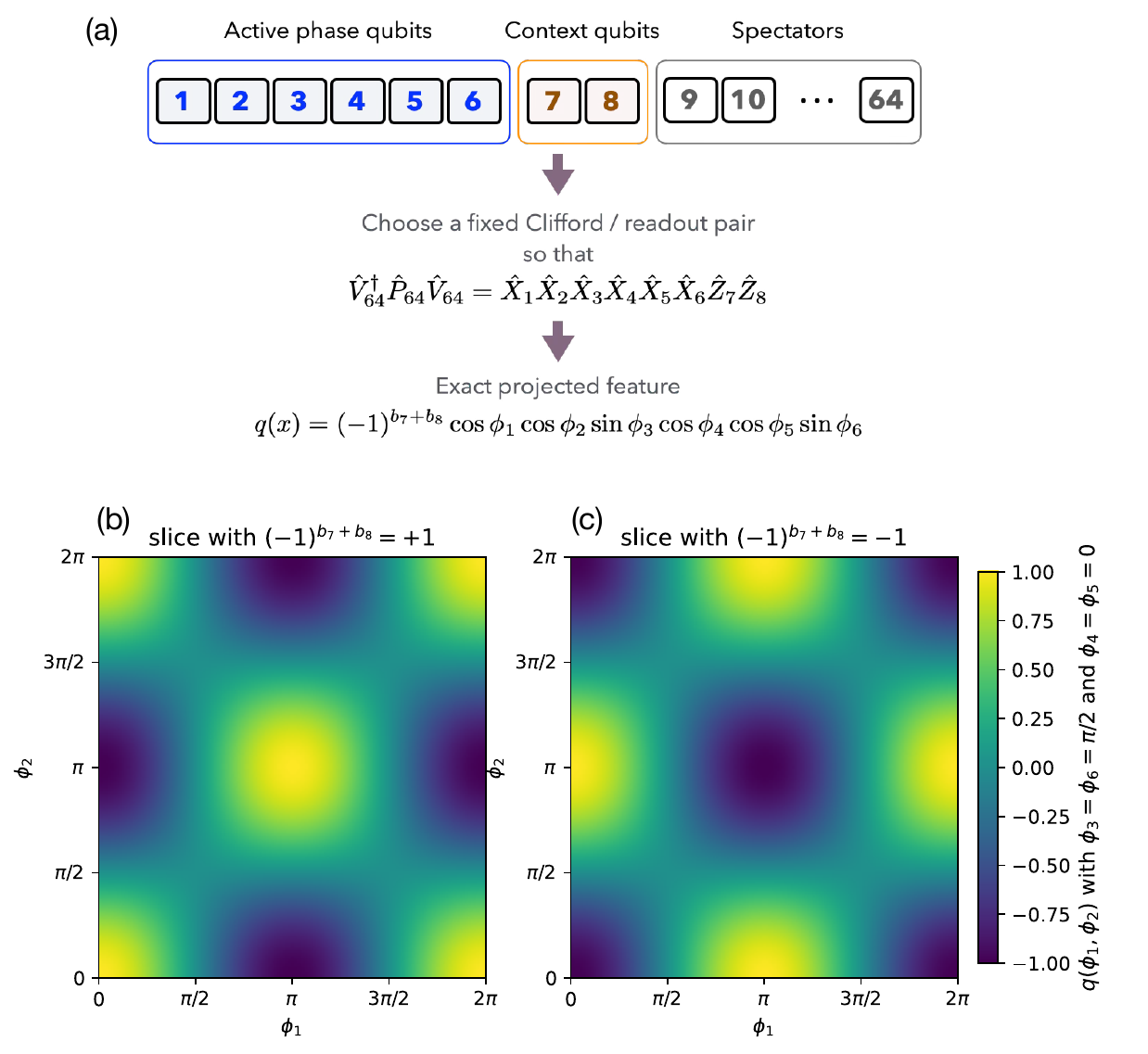}
\caption{Sixty-four-qubit toy family. Panel (a) shows the architecture of Eq.~\eqref{eq:toy_state}: qubits $1$--$6$ are the active phase qubits, qubits $7$ and $8$ are the informative context bits, and the remaining $56$ qubits stay on the classical path at data loading while still being available to the fixed Clifford block $\hat{V}_{64}$. Panels (b) and (c) show two-dimensional slices of the exact projected feature $q(x)$ for the two parities $(-1)^{b_7+b_8}=\pm 1$ after fixing $\phi_3=\phi_6=\pi/2$ and $\phi_4=\phi_5=0$. Thus the familiar $\cos\phi_1\cos\phi_2$ pattern survives as a visible slice of a much larger active-subspace model.}
\label{fig:toy_heatmap}
\end{figure*}

{\bf Proposition~\ref{prop:toy_exact}} is the direct large-register analogue of the residual-direction phenomenon from Sec.~\ref{sec:benefit}. A single projected observable compresses an eight-way interaction of six active phases and two context bits into one scalar feature, even though the full device contains $64$ qubits. Fig.~\ref{fig:toy_heatmap} visualizes this family. Here, Fig.~\ref{fig:toy_heatmap}(a) shows the active/context split; Fig.~\ref{fig:toy_heatmap}(b) and Fig.~\ref{fig:toy_heatmap}(c) show the two-dimensional slice obtained by fixing
\begin{eqnarray}
\phi_3=\phi_6=\frac{\pi}{2}, \quad \phi_4=\phi_5=0,
\end{eqnarray}
for which Eq.~(\ref{eq:q_def}) reduces to
\begin{eqnarray}
q(\phi_1,\phi_2)=\pm \cos\phi_1\cos\phi_2.
\end{eqnarray}

\subsection{A margin-controlled 64-qubit classification task and stronger classical baselines}

For the benchmark suite, we use the same $64$-qubit family, but we choose a data distribution with a controlled margin so that the projected feature does not become arbitrarily small. Specifically, let $\Delta:=\frac{\pi}{6}$. For the four $\hat{X}$-type active qubits $j \in \{1,2,4,5\}$, sample $\phi_j \in \{0,\pi\} + [-\Delta,\Delta]$ uniformly. For the two $\hat{Y}$-type active qubits $j \in \{3,6\}$, sample $\phi_j \in \left\{\frac{\pi}{2},\frac{3\pi}{2}\right\}+[-\Delta,\Delta]$ uniformly. Then, on the entire support one has
\begin{eqnarray}
\abs{\cos\phi_j} &\ge& \frac{\sqrt{3}}{2} \quad (j=1,2,4,5), \nonumber \\
\abs{\sin\phi_j} &\ge& \frac{\sqrt{3}}{2} \quad (j=3,6),
\end{eqnarray}
and therefore
\begin{eqnarray}
\abs{q(x)} \ge \left(\frac{\sqrt{3}}{2}\right)^6 \approx 0.422.
\label{eq:q_margin_bound}
\end{eqnarray}
This is the finite-size analogue of the margin-preserving regime discussed in Sec.~\ref{sec:scalable}: the projected score stays bounded away from zero even though the total register is large.

We define the clean label by
\begin{eqnarray}
y = \sgn\big(q(x)\big).
\label{eq:synthetic_label}
\end{eqnarray}
The task again is not meant to be a universal classical-versus-quantum separation. It is a controlled representation-efficiency experiment for a larger register. This distinction is important because recent discussions of practical quantum learning advantage stress comparison against strong task-specific classical models and explicit resource accounting~\cite{Schuld2022PRXQuantum,Liu2021NatPhys,Anschuetz2023,Agliardi2026}. We compare four explicit feature families in the main text.
\begin{itemize}
\item[(1)] A classical linear model on all raw variables $\Phi_{\mathrm{lin}} = \big( z_7, \ldots, z_{64}, \cos\phi_1, \sin\phi_1, \ldots, \cos\phi_6, \sin\phi_6\big)$, which has dimension $70$.
\item[(2)] A classical cubic interaction-only baseline on the selected $14$-variable set in Eq.~(\ref{eq:selected_raw_variables}). This already uses a nontrivial handcrafted expansion, but it still cannot contain the degree-eight target interaction.
\item[(3)] An exact degree-eight interaction-only baseline on the same selected $14$-variable set. This family contains the target monomial exactly, but only after a much larger engineered expansion.
\item[(4)] A hybrid projected model that augments $\Phi_{\mathrm{lin}}$ with the single quantum feature $q(x)$ from Eq.~(\ref{eq:q_def}).
\end{itemize}
Appendix~\ref{app:kernel_controls} adds degree-eight polynomial-kernel and RBF-kernel support-vector controls on the selected variables. Fig.~\ref{fig:learning_curves} collects the main explicit-feature comparison.

\begin{figure*}[t]
\centering
\includegraphics[width=0.98\textwidth]{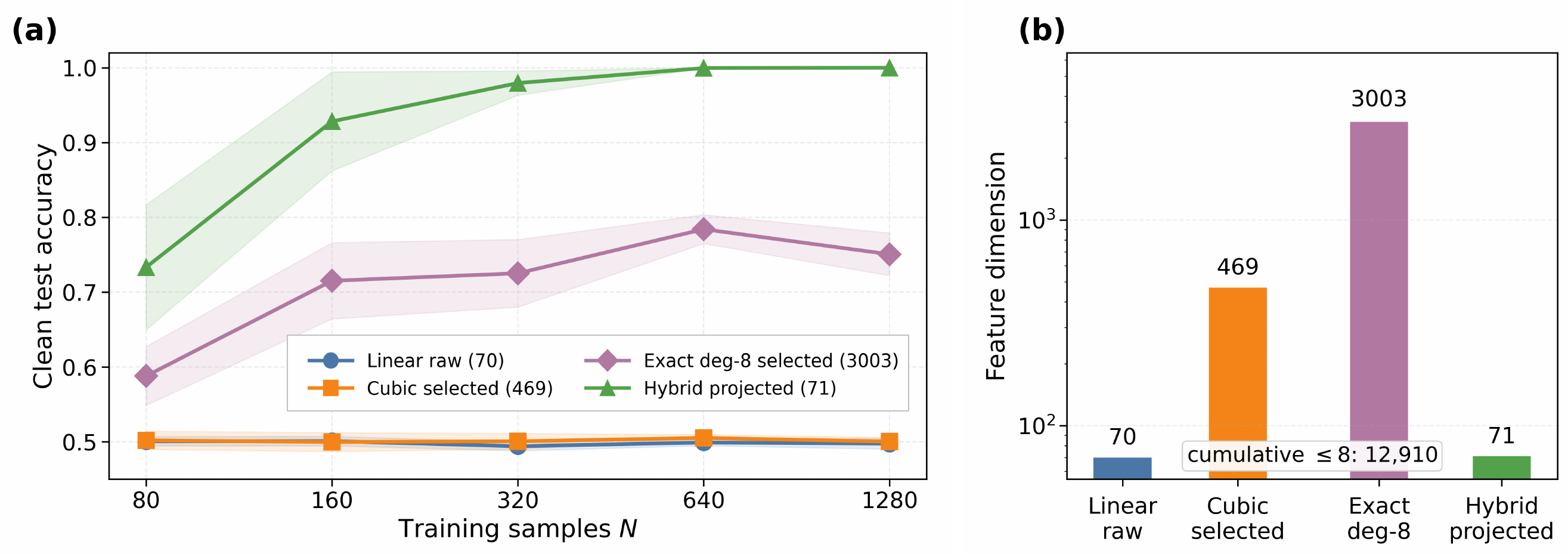}
\caption{Benchmark numerics for the $64$-qubit toy family. Panel (a) shows clean-test accuracy versus the number of training samples $N$ under $10\%$ symmetric label noise. The low-order classical baselines remain near chance. An exact degree-eight interaction-only baseline on the selected $14$ variables becomes nontrivial but still remains well below the hybrid projected model, which reaches essentially perfect accuracy by $N=640$. Panel (b) reports the explicit feature dimensions. The exact degree-eight selected family uses $3003$ features, while the cumulative interaction-only family up to degree $8$ would use $12{,}910$; Appendix~\ref{app:toy} records both counts.}
\label{fig:learning_curves}
\end{figure*}

For each training size $N$, we draw noisy training labels by flipping the clean label independently with probability $0.1$, fit ridge-type linear classifiers in the chosen explicit feature family, and evaluate on a clean test set. Fig.~\ref{fig:learning_curves}(a) shows the resulting clean-test accuracy curves. Across the tested range, the low-order linear and cubic baselines remain near chance. The exact degree-eight selected family becomes nontrivial but still lags the hybrid curve, rising from roughly $0.59$ mean clean-test accuracy at $N=80$ to about $0.78$ at $N=640$ in our runs, whereas the hybrid projected model rises from about $0.73$ at $N=80$ to essentially $1.00$ by $N=640$. Appendix~\ref{app:kernel_controls} shows that the tuned polynomial-kernel and RBF-kernel SVM controls also do not close this gap on the same task. The point is therefore not that exact or implicit classical recovery is impossible. It is that one projected quantum feature exposes the relevant high-order direction immediately, while exact classical recovery requires a much larger or less statistically efficient hypothesis family. Appendix~\ref{app:global_kernel} complements this large-register benchmark with a fully computable small-qubit comparison against a naive global fidelity kernel, where the projected AQSE kernel keeps the sample regularized dimension of order $M$ while the global kernel has sample regularized dimension of order $N$.

\subsection{Residual screening of candidate active subspaces}

To test the practical content of {\bf Proposition~\ref{prop:sample_gain}}, we generate a regression problem of the form
\begin{eqnarray}
y = f_C(x) + \alpha q_{S^\star}(x) + \xi,
\end{eqnarray}
where $f_C$ lies in the classical design span, $q_{S^\star}$ is a single projected quantum direction drawn from an eight-generator family, and $\xi$ is mean-zero Gaussian noise. For each candidate $4$-subset $S$ of the generator family, we build $q_S$, compute the finite-sample screening score
\begin{eqnarray}
\hat{s}(S) := \frac{1}{N}\left\|P_{Q_\perp(S)} r_C^{(N)}\right\|_2^2,
\end{eqnarray}
on the training split, and compare it with the independent test-risk gain obtained by adding $q_S$ to the classical regressor.

\begin{figure*}[t]
\centering
\includegraphics[width=0.80\textwidth]{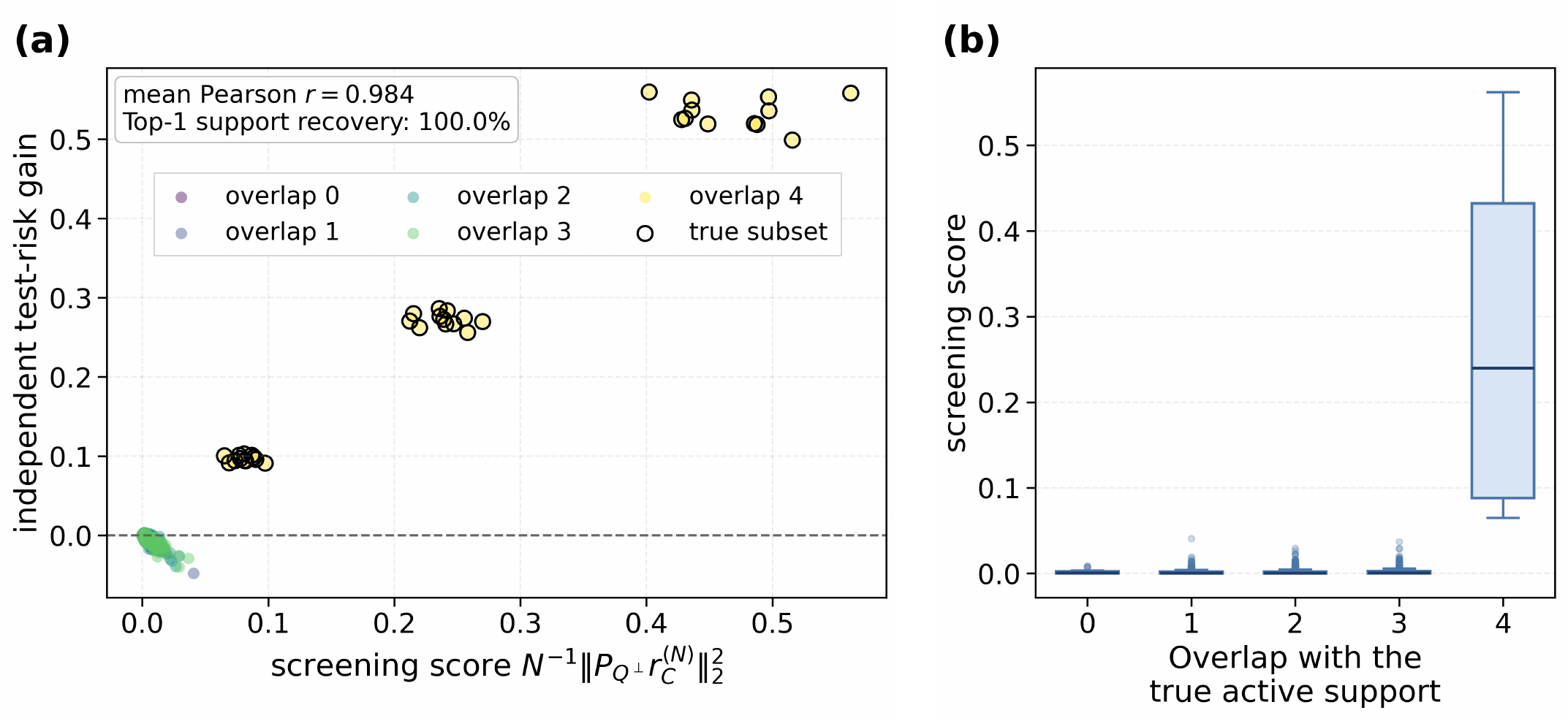}
\caption{Residual screening for candidate active subspaces. Panel (a) plots the finite-sample screening score from {\bf Proposition~\ref{prop:sample_gain}} against the independent test-risk gain obtained by adding the corresponding candidate feature to the classical regressor. Each point is a candidate $4$-subset from an eight-generator family; colors indicate its overlap with the true active support, and the exact support is marked explicitly. Panel (b) shows the distribution of the screening scores versus support overlap. Across the runs used here, the mean Pearson correlation between the screening score and the observed test-risk gain is $0.984$, and the exact support is selected as the top-scoring candidate in every run.}
\label{fig:screening_residual}
\end{figure*}

Fig.~\ref{fig:screening_residual} shows that the empirical screening score and the observed test-risk gain are tightly aligned: the mean Pearson correlation across runs is $0.984$, and the exact active subset is recovered as the top-scoring candidate in all runs of this synthetic family. The dependence on support overlap is equally sharp. In this sense, Appendix~\ref{app:residual_protocol} is not merely a structural identity; it also suggests a usable residual-based rule for choosing which candidate active subspace deserves quantum resources.

\subsection{Noise scaling and \texorpdfstring{$\beta^{-2}$}{beta^-2} sample demand}

We next test the sample-complexity mechanism of Secs.~\ref{sec:noisy} and \ref{sec:scalable} using the explicit logarithmic-support family of Appendix~\ref{app:explicit_family}. To isolate the dependence on the oracle reliability floor, we fix the projected-score magnitude at its designed floor and vary the active support $a$ and the light-cone attenuation $\Lambda$ so that
\begin{eqnarray}
\beta_0 = \Lambda c_0^a.
\end{eqnarray}
For each scenario, we query noisy labels with correctness probability $(1+\beta_0)/2$ and train a one-feature linear separator on the projected score.

\begin{figure*}[t]
\centering
\includegraphics[width=0.95\textwidth]{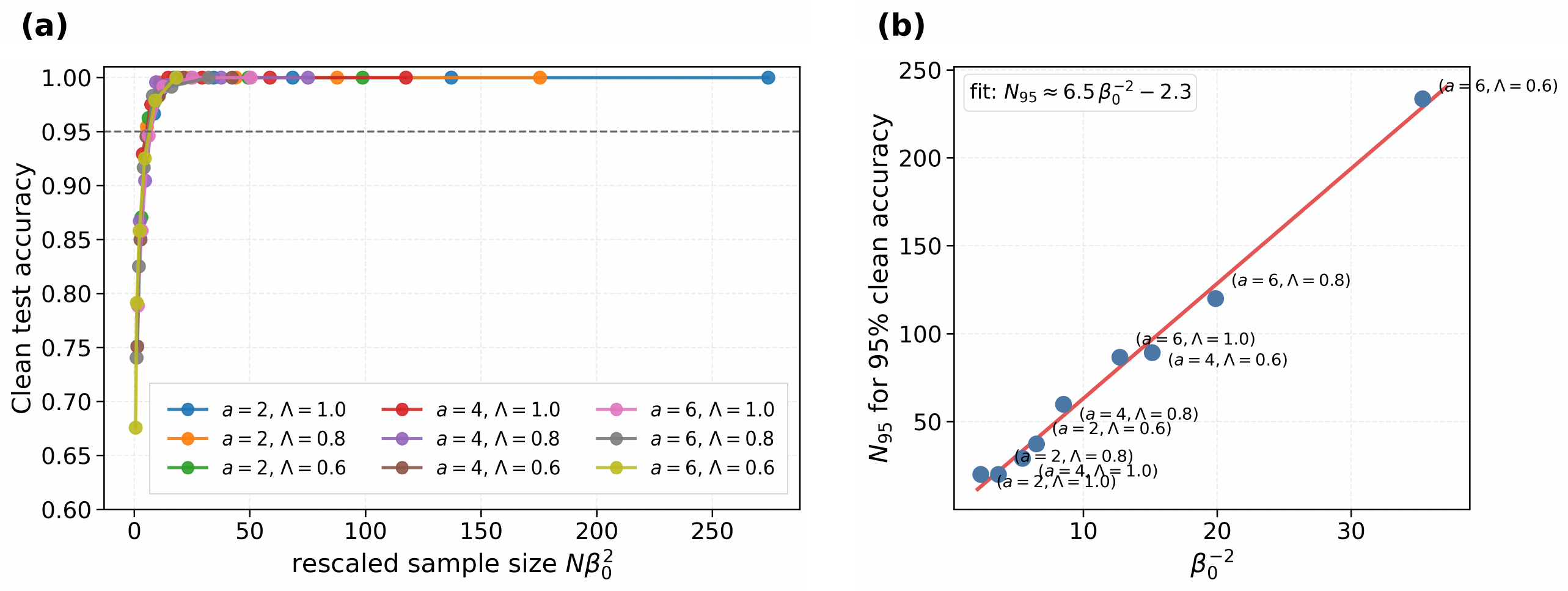}
\caption{Noise scaling and the $\beta_0^{-2}$ sample demand. Panel (a) shows clean-test accuracy for several $(a,\Lambda)$ pairs after rescaling the horizontal axis by $N\beta_0^2$, where $\beta_0=\Lambda c_0^a$ is the oracle reliability floor. The learning curves nearly collapse after this rescaling. Panel (b) reports the sample size $N_{95}$ needed to reach $95\%$ clean accuracy versus $\beta_0^{-2}$; the fitted line is approximately linear in $\beta_0^{-2}$, in agreement with {\bf Theorem~\ref{thm:noisy_pac}}.}
\label{fig:beta_scaling}
\end{figure*}

The collapse in Fig.~\ref{fig:beta_scaling} is the numerical counterpart of {\bf Theorem~\ref{thm:noisy_pac}}: once the horizontal axis is rescaled by $N\beta_0^2$, the learning curves nearly coincide, and the sample size needed to reach $95\%$ clean accuracy is approximately linear in $\beta_0^{-2}$. In our calibration, scenarios with $\beta_0\approx 0.65$ reach $95\%$ accuracy with about $20$ samples, whereas a smaller reliability $\beta_0\approx 0.17$ requires about $234$ samples. Appendix~\ref{app:finite_shot} then adds a finite-shot projected-readout study for the $64$-qubit toy family under combined dephasing and readout noise.

The large-register benchmark suite therefore makes the same conceptual point as the original three-qubit example, but in a form that is much closer to the full narrative of the paper. The active coherent sector remains tiny, the projected readout remains low-dimensional, the residual-screening diagnostic is operational, and the noise-limited sample demand follows the predicted $\beta^{-2}$ law. Appendix~\ref{app:toy}, Appendix~\ref{app:global_kernel}, and Appendix~\ref{app:finite_shot} record the corresponding protocol details, additional kernel controls, the global-kernel comparison, and the finite-shot simulations.

\section{Discussion and outlook}\label{sec:discussion}

The main scientific claim of our work can be summarized in one sentence: full quantum data-encoding is not necessary for scalable hybrid quantum learning advantage. What is necessary is much more specific. One needs a projected quantum sector that contributes a direction outside the classical span, a controlled sample regularized dimension, and a noisy oracle reliability that does not collapse too quickly.

The framework developed here organizes these requirements into a coherent hierarchy. {\bf Theorem~\ref{thm:kernel_rank}} controls the size of the projected representation. {\bf Theorem~\ref{thm:residual}} identifies the precise condition under which the quantum sector adds real predictive value. {\bf Theorem~\ref{thm:noisy_pac}} turns the oracle reliability into a sample-complexity statement. {\bf Theorem~\ref{thm:dephasing}} and {\bf Corollary~\ref{cor:poly_persistence}} then show that, at least for a canonical family, this reliability can remain inverse-polynomial even when the number of qubits and the encoding gate count both grow polynomially.


The present theory is intentionally focused rather than exhaustive. First, our strongest scalability result is proved for a canonical Clifford family with projected Pauli readout under local dephasing; this already gives an explicit scalable regime, while extending the exact analysis to broader circuit classes, more general Pauli-diagonal noise, or non-Clifford but approximately stable observable sectors is a natural next step. Second, the benefit theorem in Sec.~\ref{sec:benefit} is a learning-theoretic characterization rather than a complexity-theoretic separation. It tells us when the hybrid representation is better than the chosen classical span, though it does not rule out every efficient classical imitation of the projected quantum feature. This caveat is important in light of recent work emphasizing that quantum learning advantage claims should be tied to concrete resources, data geometry, and classical baselines~\cite{Schuld2022PRXQuantum,Liu2021NatPhys,Huang2022Science,Anschuetz2023,King2024,Liu2025Science}. Third, our numerics are still mechanism-driven rather than a claim of universal classical separation. They now include stronger classical baselines, a projected-versus-global-kernel comparison, residual-screening tests, $\beta^{-2}$ sample-scaling curves, and finite-shot noisy-readout simulations, but they remain synthetic and designed to isolate the structural statements of the paper rather than to exhaust every possible classical heuristic.

These extensions do not alter the main conclusion of the present paper. The purpose of this work is not to settle every aspect of quantum learning advantage. It is to define a scalable and analytically tractable route by which such an advantage can arise without full quantum data-encoding. The framework suggests several concrete follow-up problems. One direction is to search for active-subspace encoders in which the projected observables remain noise-stable while the induced functions become harder to approximate classically. Another direction is to study data-dependent methods for choosing the active subset itself. {\bf Theorem~\ref{thm:residual}} suggests that the right subset is the one whose projected quantum features correlate with the residual left by the classical path. Turning this observation into an adaptive architecture-selection principle would be practically important.


In conclusion, one should apply quantum data-encoding only where it is actually needed. If the selected active subspace carries information outside the classical span and the projected readout remains statistically and noise-wise stable, then quantum learning advantage can persist without full quantum data-encoding and without super-polynomial encoding overhead. This provides a principled foundation for a scalable, QRAM-free, and NISQ-compatible program of hybrid quantum learning.

\section*{Acknowledgements}

This work was supported by the Ministry of Science, ICT and Future Planning (MSIP) through the National Research Foundation of Korea (RS-2024-00432214, RS-2025-03532992, and RS-2025-18362970) and the Institute of Information and Communications Technology Planning and Evaluation grant funded by the Korean government (RS-2019-II190003, ``Research and Development of Core Technologies for Programming, Running, Implementing and Validating of Fault-Tolerant Quantum Computing System''), the Korean ARPA-H Project through the Korea Health Industry Development Institute (KHIDI), funded by the Ministry of Health \& Welfare, Republic of Korea (RS-2025-25456722), and the Ministry of Trade, Industry and Resources (MOTIR), Korea, under the project ``Industrial Technology Infrastructure Program'' (RS-2024-00466693). This work was also supported by Korea Institute of Science and Technology Information (KISTI) ((KISTI)K26L1M3C5-01). We acknowledge the Yonsei University Quantum Computing Project Group for providing support and access to the Quantum System One (Eagle Processor), which is operated at Yonsei University.

\onecolumngrid

\appendix
\setcounter{theorem}{0}
\setcounter{proposition}{0}
\setcounter{lemma}{0}
\setcounter{corollary}{0}
\setcounter{definition}{0}
\setcounter{remark}{0}
\renewcommand{\thetheorem}{\Alph{section}.\arabic{theorem}}
\renewcommand{\theproposition}{\Alph{section}.\arabic{proposition}}
\renewcommand{\thelemma}{\Alph{section}.\arabic{lemma}}
\renewcommand{\thecorollary}{\Alph{section}.\arabic{corollary}}
\renewcommand{\thedefinition}{\Alph{section}.\arabic{definition}}
\renewcommand{\theremark}{\Alph{section}.\arabic{remark}}
\renewcommand{\theHtheorem}{appendix.\Alph{section}.\arabic{theorem}}
\renewcommand{\theHproposition}{appendix.\Alph{section}.\arabic{proposition}}
\renewcommand{\theHlemma}{appendix.\Alph{section}.\arabic{lemma}}
\renewcommand{\theHcorollary}{appendix.\Alph{section}.\arabic{corollary}}
\renewcommand{\theHdefinition}{appendix.\Alph{section}.\arabic{definition}}
\renewcommand{\theHremark}{appendix.\Alph{section}.\arabic{remark}}

\section{Exact residual-gain formulas and a finite-sample criterion}\label{app:residual_exact}

This appendix records two useful refinements of {\bf Theorem~\ref{thm:residual}}. The first gives the exact gain from adding the projected quantum sector, rather than only a one-direction lower bound. The second gives the corresponding finite-sample matrix formula, which can be used directly when one screens candidate active quantum subspaces on a data set.

\begin{proposition}[Exact gain from the orthogonalized quantum sector]\label{prop:exact_gain}
Let $\mathcal{V}_C \subset L_2(P_X)$ and $\mathcal{V}_Q \subset L_2(P_X)$ be as in Sec.~\ref{sec:benefit}, and let
\begin{eqnarray}
\mathcal{W} := (\id - P_C)\mathcal{V}_Q \subset \mathcal{V}_C^{\perp}.
\end{eqnarray}
Then,
\begin{eqnarray}
\mathcal{V}_H = \mathcal{V}_C \oplus \mathcal{W}
\end{eqnarray}
as an orthogonal direct sum, and the optimal risk reduction satisfies
\begin{eqnarray}
R_C^* - R_H^* = \norm{P_{\mathcal W} r_C}_2^2,
\label{eq:exact_gain_formula_appendix}
\end{eqnarray}
where $r_C=Y-P_CY$ is the classical residual. In particular, $R_H^*<R_C^*$ if and only if $P_{\mathcal{W}} r_C \neq 0$.
\end{proposition}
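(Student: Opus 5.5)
The plan is to first establish the orthogonal decomposition and then read off the risk identity from the Pythagorean theorem.

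\emph{Step 1: $\mathcal{W}$ is closed and orthogonal to $\mathcal{V}_C$.} The space $\mathcal{W}=(\id-P_C)\mathcal{V}_Q$ is the image of the finite-dimensional space $\mathcal{V}_Q$ under a linear map. It is therefore finite-dimensional, hence closed in $L_2(P_X)$. It is contained in $\mathcal{V}_C^\perp$ because $\id-P_C$ maps into $\mathcal{V}_C^\perp$.

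\emph{Step 2: $\mathcal{V}_H=\mathcal{V}_C\oplus\mathcal{W}$.} For ``$\subseteq$'', any $v=c+u$ with $c\in\mathcal{V}_C$ and $u\in\mathcal{V}_Q$ can be rewritten as $v=(c+P_Cu)+(\id-P_C)u$. The first summand lies in $\mathcal{V}_C$ and the second in $\mathcal{W}$. For ``$\supseteq$'', an element $c+(\id-P_C)u$ equals $(c-P_Cu)+u$, which lies in $\mathcal{V}_C+\mathcal{V}_Q$. Orthogonality of the two summands was shown in Step 1. A sum of a closed subspace and a finite-dimensional subspace orthogonal to it is closed, so $\mathcal{V}_H$ is a closed subspace. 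Hence $P_H$ is well defined and equals $P_C+P_{\mathcal{W}}$.

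\emph{Step 3: the gain identity.} From Step 2,
\[
Y-P_HY=(Y-P_CY)-P_{\mathcal{W}}Y=r_C-P_{\mathcal{W}}Y .
\]
Next we identify $P_{\mathcal{W}}Y$. Since $P_CY\in\mathcal{V}_C\perp\mathcal{W}$, we have $P_{\mathcal{W}}P_CY=0$, and therefore $P_{\mathcal{W}}Y=P_{\mathcal{W}}r_C$. This gives
\[
Y-P_HY=r_C-P_{\mathcal{W}}r_C .
\]
The vectors $r_C-P_{\mathcal{W}}r_C$ and $P_{\mathcal{W}}r_C$ are orthogonal, so the Pythagorean theorem yields
\[
R_H^*=\norm{r_C}_2^2-\norm{P_{\mathcal{W}}r_C}_2^2 .
\]
Using $R_C^*=\norm{r_C}_2^2$ gives Eq.~(\ref{eq:exact_gain_formula_appendix}).

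\emph{Step 4: the strict-improvement criterion.} By Eq.~(\ref{eq:exact_gain_formula_appendix}), $R_H^*<R_C^*$ holds if and only if $\norm{P_{\mathcal{W}}r_C}_2>0$, that is, if and only if $P_{\mathcal{W}}r_C\neq0$. This is consistent with Theorem~\ref{thm:residual}(iii), since $P_{\mathcal{W}}r_C\neq0$ exactly when some $u_\perp\in\mathcal{W}$ satisfies $\langle r_C,u_\perp\rangle\neq0$.

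The only delicate point is the closedness in Step 2. It is needed so that $P_H$ exists and splits as $P_C+P_{\mathcal{W}}$, and it follows from the finite dimensionality of $\mathcal{V}_Q$. Everything else is routine Hilbert-space algebra.
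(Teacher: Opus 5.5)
Your proof is correct and follows the paper's argument essentially step for step. Both use the decomposition $v=(c+P_Cu)+(\id-P_C)u$, the splitting $P_H=P_C+P_{\mathcal W}$, the identity $P_{\mathcal W}Y=P_{\mathcal W}r_C$, and the Pythagorean theorem. Your explicit check that $\mathcal V_H$ is closed (because $\mathcal V_Q$ is finite-dimensional), which guarantees that $P_H$ is well defined and splits as claimed, is a point the paper leaves implicit and is a welcome addition.
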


\begin{proof}---By definition, $(\id - P_C)\mathcal{V}_Q \subset \mathcal{V}_C^{\perp}$, so every element of $\mathcal{W}$ is orthogonal to $\mathcal{V}_C$. We now show that every element of $\mathcal{V}_H = \mathcal{V}_C + \mathcal{V}_Q$ can be written uniquely as a sum of one vector in $\mathcal{V}_C$ and one vector in $\mathcal{W}$.

Let $v \in \mathcal{V}_H$. Then, $v=c+q$ for some $c \in \mathcal{V}_C$ and $q \in \mathcal{V}_Q$. Decompose
\begin{eqnarray}
q = P_C q +(\id - P_C)q.
\end{eqnarray}
Hence,
\begin{eqnarray}
v = \big(c + P_C q\big) + (\id - P_C)q,
\end{eqnarray}
where $c + P_C q \in \mathcal{V}_C$ and $(\id - P_C)q \in \mathcal{W}$. Thus, $\mathcal{V}_H \subseteq \mathcal{V}_C+\mathcal{W}$, while the reverse inclusion is immediate from the definition of $\mathcal{W}$. Therefore, $\mathcal{V}_H = \mathcal{V}_C + \mathcal{W}$. Since $\mathcal{W} \subset \mathcal{V}_C^{\perp}$, the sum is orthogonal, so
\begin{eqnarray}
\mathcal{V}_H = \mathcal{V}_C \oplus \mathcal{W}.
\end{eqnarray}

The orthogonal projector onto $\mathcal{V}_H$ is therefore
\begin{eqnarray}
P_H = P_C + P_{\mathcal{W}}.
\end{eqnarray}
Applying this to $Y = P_C Y + r_C$ and using $r_C \perp \mathcal{V}_C$ gives
\begin{eqnarray}
P_H Y = P_C Y + P_{\mathcal{W}} r_C.
\end{eqnarray}
Consequently,
\begin{eqnarray}
Y - P_H Y = r_C - P_{\mathcal{W}} r_C.
\end{eqnarray}
Because $P_{\mathcal{W}} r_C$ is the orthogonal projection of $r_C$ onto $\mathcal{W}$, the Pythagorean theorem yields
\begin{eqnarray}
\norm{r_C}_2^2 = \norm{r_C - P_{\mathcal{W}} r_C}_2^2 + \norm{P_{\mathcal{W}} r_C}_2^2.
\end{eqnarray}
Recalling that $R_C^*=\norm{r_C}_2^2$ and $R_H^*=\norm{Y - P_H Y}_2^2=\norm{r_C - P_{\mathcal{W}} r_C}_2^2$, we obtain Eq.~(\ref{eq:exact_gain_formula_appendix}). Strict improvement holds exactly when the last term is nonzero.
\end{proof}

{\bf Proposition~\ref{prop:exact_gain}} shows that the projected quantum sector acts only through its orthogonalized component $\mathcal{W} = (\id - P_C)\mathcal{V}_Q$. In other words, once the classical span is fixed, the entire hybrid gain is concentrated in the projection of the classical residual onto this orthogonalized quantum subspace.

\begin{proposition}[Finite-sample matrix criterion]\label{prop:sample_gain}
Let $y \in \mathbb R^N$ be a response vector, let $F_C \in \mathbb R^{N \times d_C}$ be a classical design matrix, and let $F_Q \in \mathbb R^{N \times M}$ be a projected quantum feature matrix. Let $P_C$ be the orthogonal projector onto $\mathrm{col}(F_C)$ and let
\begin{eqnarray}
Q_{\perp}:=(\id_N - P_C)F_Q.
\end{eqnarray}
Denote by $P_{Q_{\perp}}$ the orthogonal projector onto $\mathrm{col}(Q_{\perp})$, and by $P_H$ the orthogonal projector onto $\mathrm{col}([F_C \; F_Q])$. Then, the empirical least-squares risks
\begin{eqnarray}
\widehat R_C &:=& \frac{1}{N}\norm{(I_N-P_C)y}_2^2, \nonumber \\
\widehat R_H &:=& \frac{1}{N}\norm{(I_N-P_H)y}_2^2
\end{eqnarray}
satisfy
\begin{eqnarray}
\widehat{R}_C - \widehat{R}_H = \frac{1}{N}\norm{P_{Q_{\perp}}r_C^{(N)}}_2^2,
\label{eq:sample_gain_formula}
\end{eqnarray}
where $r_C^{(N)}:=(I_N-P_C)y$. In particular, $\widehat{R}_H < \widehat{R}_C$ if and only if $P_{Q_{\perp}} r_C^{(N)} \neq 0$.
\end{proposition}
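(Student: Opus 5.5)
The plan is to mirror the proof of Proposition~\ref{prop:exact_gain} verbatim, now in the finite-dimensional Hilbert space $\R^N$ with the Euclidean inner product in place of $L_2(P_X)$. The first step is to show the subspace identity
\begin{eqnarray}
\mathrm{col}([F_C\;F_Q]) = \mathrm{col}(F_C) \oplus \mathrm{col}(Q_\perp),
\end{eqnarray}
as an orthogonal direct sum. Every column of $Q_\perp=(I_N-P_C)F_Q$ lies in $\mathrm{col}(F_C)^\perp$, which gives orthogonality. For the spanning property, any vector $F_C a + F_Q b$ can be written as $\big(F_C a + P_C F_Q b\big) + Q_\perp b$, and $P_C F_Q b\in\mathrm{col}(F_C)$. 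Conversely, $Q_\perp b = F_Q b - P_C F_Q b$ lies in $\mathrm{col}([F_C\;F_Q])$, so the two sides agree.

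From the direct sum we get $P_H = P_C + P_{Q_\perp}$, since the two projectors have orthogonal ranges and their sum is the projector onto the sum of those ranges. Writing $y = P_C y + r_C^{(N)}$ and using $P_{Q_\perp}P_C=0$ (the range of $P_C$ is orthogonal to $\mathrm{col}(Q_\perp)$), we obtain
\begin{eqnarray}
(I_N-P_H)y = r_C^{(N)} - P_{Q_\perp} r_C^{(N)}.
\end{eqnarray}
The Pythagorean theorem applied to $r_C^{(N)} = (I_N-P_{Q_\perp})r_C^{(N)} + P_{Q_\perp}r_C^{(N)}$ then gives
\begin{eqnarray}
\norm{r_C^{(N)}}_2^2 - \norm{(I_N-P_H)y}_2^2 = \norm{P_{Q_\perp} r_C^{(N)}}_2^2.
\end{eqnarray}
Dividing by $N$ yields Eq.~(\ref{eq:sample_gain_formula}). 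Because the right-hand side is a squared norm, it is nonnegative, and it vanishes exactly when $P_{Q_\perp} r_C^{(N)}=0$; this is the stated iff.

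None of these steps is a real obstacle; the only point needing care is degenerate cases. $Q_\perp$ may be rank-deficient, or even zero when $\mathrm{col}(F_Q)\subseteq\mathrm{col}(F_C)$. This is harmless because we only use the projector onto $\mathrm{col}(Q_\perp)$, which is well defined as the Moore--Penrose projector $Q_\perp Q_\perp^{+}$ whatever the rank. No invertibility of $Q_\perp^\top Q_\perp$ or of $F_C^\top F_C$ is ever needed.
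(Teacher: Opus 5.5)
Your proposal is correct and follows the paper's proof essentially step for step: the orthogonal decomposition $\mathrm{col}([F_C\;F_Q])=\mathrm{col}(F_C)\oplus\mathrm{col}(Q_\perp)$, the resulting identity $P_H=P_C+P_{Q_\perp}$, and the Pythagorean split of $r_C^{(N)}$. Your explicit reverse inclusion, the use of $P_{Q_\perp}P_C=0$, and the remark on rank-deficient $Q_\perp$ are details of rigor that the paper leaves implicit.
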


\begin{proof}---The argument is the finite-dimensional version of {\bf Proposition~\ref{prop:exact_gain}}. Every column of $Q_{\perp}$ lies in $\mathrm{col}(F_C)^{\perp}$ by construction. Moreover,
\begin{eqnarray}
\mathrm{col}([F_C \; F_Q]) = \mathrm{col}(F_C) \oplus \mathrm{col}(Q_{\perp}),
\end{eqnarray}
because each column of $F_Q$ decomposes as $P_C f + (\id_N - P_C)f$, with the first term in $\mathrm{col}(F_C)$ and the second term in $\mathrm{col}(Q_{\perp})$. Therefore,
\begin{eqnarray}
P_H = P_C + P_{Q_{\perp}}.
\end{eqnarray}
Applying this to $y=P_C y + r_C^{(N)}$ gives
\begin{eqnarray}
P_H y = P_C y + P_{Q_{\perp}} r_C^{(N)}.
\end{eqnarray}
Hence,
\begin{eqnarray}
(\id_N - P_H)y = r_C^{(N)} - P_{Q_{\perp}} r_C^{(N)}.
\end{eqnarray}
Because $P_{Q_{\perp}} r_C^{(N)}$ is the orthogonal projection of $r_C^{(N)}$ onto $\mathrm{col}(Q_{\perp})$, the Pythagorean theorem yields
\begin{eqnarray}
\norm{r_C^{(N)}}_2^2 = \norm{r_C^{(N)} - P_{Q_{\perp}}r_C^{(N)}}_2^2 + \norm{P_{Q_{\perp}}r_C^{(N)}}_2^2.
\end{eqnarray}
Dividing by $N$ proves Eq.~(\ref{eq:sample_gain_formula}).
\end{proof}

{\bf Proposition~\ref{prop:sample_gain}} suggests a direct empirical screening rule for active quantum subspaces: one should prefer the projected quantum feature family whose orthogonalized sample matrix $Q_{\perp}$ captures the largest component of the classical residual.

\section{Uniform convergence bound used in Theorem~\ref{thm:noisy_pac}}\label{app:VC}

For completeness, we prove the VC-type inequality used in Eq.~(\ref{eq:VC_bound}). The proof is standard, but we include it here so that the PAC theorem of Sec.~\ref{sec:noisy} is logically self-contained.

Let
\begin{eqnarray}
\mathcal{L}_{\mathcal{H}} := \big\{\ell_h : (x,\widetilde y)\mapsto \mathds{I}\{h(x)\neq \widetilde y\} : h \in \mathcal{H}\big\}
\end{eqnarray}
be the binary loss class induced by $\mathcal{H}$.

\begin{lemma}[Growth function of the induced loss class]\label{lem:growth_loss}
For every integer $m \ge 1$,
\begin{eqnarray}
\Pi_{\mathcal{L}_{\mathcal{H}}}(m) \le \Pi_{\mathcal{H}}(m),
\end{eqnarray}
where $\Pi_{\mathcal{H}}(m)$ is the growth function of the hypothesis class $\mathcal{H}$ on unlabeled inputs.
\end{lemma}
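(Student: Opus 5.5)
The plan is to compare restriction patterns directly. Fix $m$ points $(x_1,\widetilde y_1),\ldots,(x_m,\widetilde y_m)$ in $\mathcal{X}_n\times\{-1,+1\}$. I will show that the number of distinct loss patterns
\begin{eqnarray}
\big(\mathds{I}\{h(x_1)\neq \widetilde y_1\},\ldots,\mathds{I}\{h(x_m)\neq \widetilde y_m\}\big), \quad h\in\mathcal{H},
\end{eqnarray}
is at most the number of distinct prediction patterns $(h(x_1),\ldots,h(x_m))$ on the unlabeled inputs $x_1,\ldots,x_m$. Taking the supremum over labeled samples then gives the claim, because the unlabeled count is bounded by $\Pi_{\mathcal{H}}(m)$ by definition.

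The key step is to define the map $\tau_{\widetilde y}:\{-1,+1\}^m\to\{0,1\}^m$ coordinatewise by $\tau(v)_i=\mathds{I}\{v_i\neq\widetilde y_i\}$. For fixed labels $\widetilde y$ this map is a bijection: it is simply XOR with the fixed label vector after identifying $\pm1$ with bits. The loss pattern of $h$ is exactly $\tau_{\widetilde y}$ applied to the prediction pattern of $h$. So the set of loss patterns is the image of the set of prediction patterns under $\tau_{\widetilde y}$, and its cardinality cannot exceed that of the set of prediction patterns. In fact, it is equal.

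One subtlety needs care. Repeated inputs $x_i=x_j$ with different labels are allowed in the growth function of $\mathcal{L}_{\mathcal{H}}$. Growth functions are defined over sequences, or multisets, so the prediction patterns still live on the sequence $(x_1,\ldots,x_m)$ and the argument is unaffected. If $\Pi_{\mathcal{H}}$ is instead defined over sets of distinct points, I will note that a pattern on a sequence with repetitions is determined by the pattern on its distinct elements. That count is at most $\Pi_{\mathcal{H}}(m')\le\Pi_{\mathcal{H}}(m)$ for $m'\le m$, using monotonicity of the growth function.

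I do not expect a genuine obstacle. The only point needing care is this bookkeeping convention for the growth function under repeated inputs. Afterwards, combining the lemma with Sauer's lemma, $\Pi_{\mathcal{H}}(m)\le(em/d)^d$, feeds into the symmetrization argument that yields Eq.~(\ref{eq:VC_bound}).
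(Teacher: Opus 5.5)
Your proposal is correct and follows the paper's proof: both fix the labeled sample, note that for fixed labels the map from prediction vectors to loss vectors is one-to-one (your coordinatewise XOR bijection), and then take the supremum over labeled samples. Your remark on repeated inputs is a harmless addition that the paper leaves implicit.
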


\begin{proof}---Fix labeled points $z_i=(x_i,\widetilde{y}_i)$, $i=1,\dots,m$. For any $h \in \mathcal{H}$, the loss vector on these points is
\begin{eqnarray}
\big(\ell_h(z_1),\dots,\ell_h(z_m)\big) = \big(\mathds{I}\{h(x_1) \neq \widetilde{y}_1\},\dots,\mathds{I}\{h(x_m)\neq \widetilde{y}_m\}\big).
\end{eqnarray}
Since the labels $\widetilde{y}_i$ are fixed, the map from the prediction vector $(h(x_1),\dots,h(x_m))$ to the loss vector is one-to-one. Therefore, the number of distinct loss vectors realized on the labeled sample cannot exceed the number of distinct prediction vectors realized by $\mathcal{H}$ on the input sample $(x_1,\dots,x_m)$. Taking the supremum over all labeled samples, we can complete the proof.
\end{proof}

\begin{lemma}[Uniform convergence for noisy empirical risk]\label{lem:vc_uniform}
Let $\mathcal{H}$ be a binary hypothesis class and let
\begin{eqnarray}
\widehat{R}_\eta(h) = \frac{1}{N}\sum_{i=1}^{N}\ell_h(Z_i),
\quad
R_\eta(h)=\E[\ell_h(Z)],
\end{eqnarray}
where $Z_i=(X_i,\widetilde{Y}_i)$ are i.i.d. Then, for every $\alpha>0$,
\begin{eqnarray}
\Prob\left( \sup_{h \in \mathcal{H}}\abs{\widehat{R}_\eta(h)-R_\eta(h)} > \alpha \right) \le 4\Pi_{\mathcal H}(2N) e^{-\frac{N\alpha^2}{8}}.
\label{eq:appendix_vc_general}
\end{eqnarray}
If $\mathcal{H}$ has VC dimension $d \ge 1$, then by Sauer--Shelah,
\begin{eqnarray}
\Prob\left( \sup_{h \in \mathcal{H}}\abs{\widehat{R}_\eta(h)-R_\eta(h)} > \alpha \right) \le 4\left(\frac{2eN}{d}\right)^d e^{-\frac{N\alpha^2}{8}}.
\label{eq:appendix_vc_final}
\end{eqnarray}
\end{lemma}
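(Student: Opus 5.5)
We will follow the classical symmetrization route (Vapnik--Chervonenkis), working throughout with the induced binary loss class $\mathcal{L}_{\mathcal{H}}$. By Lemma~\ref{lem:growth_loss}, every combinatorial count on labeled samples can be replaced by $\Pi_{\mathcal H}$. The argument has three steps.

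\emph{Step 1 (ghost sample).} Draw an independent ghost sample $Z_1',\dots,Z_N'$ and write $\widehat R'_\eta$ for the corresponding empirical risk. Suppose some $h$ has $|\widehat R_\eta(h)-R_\eta(h)|>\alpha$. Chebyshev's inequality gives
\begin{eqnarray}
\Prob\big(|\widehat R'_\eta(h)-R_\eta(h)|\le \alpha/2\big)\ge 1-\frac{1}{N\alpha^2}\ge \tfrac12
\end{eqnarray}
whenever $N\alpha^2\ge 2$. Conditioning on the first sample, we then obtain
\begin{eqnarray}
\Prob\Big(\sup_h|\widehat R_\eta-R_\eta|>\alpha\Big)\le 2\,\Prob\Big(\sup_h|\widehat R_\eta-\widehat R'_\eta|>\alpha/2\Big).
\end{eqnarray}
If instead $N\alpha^2<2$, the right-hand side of Eq.~(\ref{eq:appendix_vc_general}) is at least $4e^{-1/4}>1$, so the bound holds trivially.

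\emph{Step 2 (random swaps).} Since the pairs $(Z_i,Z_i')$ are exchangeable, introduce i.i.d.\ Rademacher signs $\sigma_i$ that swap $Z_i\leftrightarrow Z_i'$. The difference then becomes $\frac1N\sum_i\sigma_i(\ell_h(Z_i)-\ell_h(Z_i'))$, and this has the same distribution as before. Now condition on the $2N$ points. The supremum over $h$ ranges over at most $\Pi_{\mathcal L_{\mathcal H}}(2N)\le\Pi_{\mathcal H}(2N)$ distinct loss patterns, so a union bound over these patterns applies.

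\emph{Step 3 (Hoeffding).} For a fixed pattern, each summand $\sigma_i(\ell_h(Z_i)-\ell_h(Z_i'))$ lies in $[-1,1]$. Hoeffding's inequality therefore gives
\begin{eqnarray}
\Prob\Big(\Big|\frac1N\sum_i\sigma_i(\cdot)\Big|>\alpha/2\Big)\le 2e^{-N\alpha^2/8}.
\end{eqnarray}
Combining the factor $2$ from Step~1 with the union bound and this estimate yields
\begin{eqnarray}
4\,\Pi_{\mathcal H}(2N)\,e^{-N\alpha^2/8},
\end{eqnarray}
which is Eq.~(\ref{eq:appendix_vc_general}).

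For Eq.~(\ref{eq:appendix_vc_final}), we apply Sauer--Shelah in the form $\Pi_{\mathcal H}(m)\le(em/d)^d$ for $m\ge d$, with $m=2N$. This gives $\Pi_{\mathcal H}(2N)\le (2eN/d)^d$. When $2N<d$, we have $(2eN/d)^d\ge (2N)^d\cdot(e/d)^d$; if this makes the bound less than $1$, we fall back to the trivial estimate $\Pi_{\mathcal H}(2N)\le 2^{2N}$ and check that the bound is vacuous anyway. To keep this case clean, we will restrict to $N\ge d/2$, which is the regime used in Theorem~\ref{thm:noisy_pac}.

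We expect the symmetrization in Step~1 to be the main obstacle. The delicate points are the measurability of the supremum, the conditioning argument, and verifying that the constants $\alpha/2$, the factor $2$, and the Hoeffding range combine to give exactly $4$ and $\alpha^2/8$. We will carry out Step~1 with the witness $h$ chosen as a measurable function of the first sample. Since only finitely many loss patterns occur, such a choice exists.
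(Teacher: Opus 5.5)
Your argument has the same skeleton as the paper's: a ghost sample, random swaps conditioned on the $2N$ points, a union bound over at most $\Pi_{\mathcal L_{\mathcal H}}(2N)\le\Pi_{\mathcal H}(2N)$ loss patterns via Lemma~\ref{lem:growth_loss}, Hoeffding for $[-1,1]$-valued summands at level $\alpha/2$, and then Sauer--Shelah. The constants combine exactly as you say.

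The one step you argue differently is the factor $2$ in the symmetrization, and there your route is the one that actually works. The paper justifies Eq.~(\ref{eq:symmetrization_appendix}) only by ``a union bound and equality of laws.'' Write $A$ and $C$ for the events on its left and right sides. That argument gives only $\Prob(A)\le\Prob(C)+\Prob(\sup_h\abs{\widehat R_\eta(h)-R_\eta(h)}>\alpha/2)$, and the last term is at least $\Prob(A)$, so it does not yield the factor $2$. Your pointwise Chebyshev bound for a sample-dependent witness needs no uniform control. It is valid when $N\alpha^2\ge 2$, and in the other case the right-hand side exceeds $4e^{-1/4}>1$.

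You also do not need a measurable choice of witness. Finiteness of loss patterns would not give one anyway, since the witness condition involves $R_\eta(h)$. Instead, for each fixed first sample in $A$, the conditional ghost-sample probability of $C$ is at least $1/2$, and Fubini finishes the argument.

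On the Sauer--Shelah step, drop the fallback claim that for $2N<d$ ``the bound is vacuous anyway''; Eq.~(\ref{eq:appendix_vc_final}) is actually false there. Take $\mathcal H=\{\pm1\}^{[d]}$ (VC dimension $d$), $X$ uniform on $[d]$, $\widetilde Y\equiv+1$, $N=1$ and $\alpha=1/2$. The hypothesis equal to $-1$ only at $X_1$ has empirical risk $1$ and true risk $1/d$, so the left-hand side is $1$. The right-hand side is $4(2e/d)^d e^{-1/32}<1$ for $d\ge 7$.

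So your restriction to $N\ge d/2$ is a necessary extra hypothesis, not a convenience. It is also not automatically supplied by Theorem~\ref{thm:noisy_pac}: Eq.~(\ref{eq:sample_bound_exact}) does not force $N\ge d/2$, because $\log(2eN/d)<0$ when $2eN<d$. It should therefore be stated explicitly.
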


\begin{proof}---Let $Z_1',\dots,Z_N'$ be an independent ghost sample with the same distribution as $Z_1,\dots,Z_N$, and write
\begin{eqnarray}
\widehat{R}_\eta'(h) := \frac{1}{N}\sum_{i=1}^{N}\ell_h(Z_i').
\end{eqnarray}
We begin with symmetrization. If $\sup_h\abs{\widehat{R}_\eta(h) - R_\eta(h)} > \alpha$ and simultaneously $\sup_h\abs{\widehat{R}_\eta'(h) - R_\eta(h)} \le \frac{\alpha}{2}$, then necessarily $\sup_h\abs{\widehat{R}_\eta(h) - \widehat{R}_\eta'(h)} > \frac{\alpha}{2}$. Therefore, by a union bound and because the primed and unprimed samples have the same law,
\begin{eqnarray}
\Prob\left(\sup_h\abs{\widehat{R}_\eta(h) - R_\eta(h)} > \alpha\right) \le 2\Prob\left(\sup_h\abs{\widehat{R}_\eta(h) - \widehat{R}_\eta'(h)} > \frac{\alpha}{2}\right).
\label{eq:symmetrization_appendix}
\end{eqnarray}

Condition now on the combined sample
\begin{eqnarray}
(Z_1,Z_1',\dots,Z_N,Z_N').
\end{eqnarray}
For each $i$, introduce an independent Rademacher variable $\sigma_i \in \{-1,+1\}$. By exchangeability of $(Z_i, Z_i')$, the conditional distribution of
\begin{eqnarray}
\widehat{R}_\eta(h) - \widehat{R}_\eta'(h) = \frac{1}{N}\sum_{i=1}^{N}\big(\ell_h(Z_i)-\ell_h(Z_i')\big)
\end{eqnarray}
is the same as that of
\begin{eqnarray}
\frac{1}{N}\sum_{i=1}^{N}\sigma_i\big(\ell_h(Z_i)-\ell_h(Z_i')\big).
\end{eqnarray}
For a fixed $h$, the summands are independent, mean-zero, and lie in $[-1,1]$. Hence, Hoeffding's inequality gives
\begin{eqnarray}
\Prob_\sigma\left( \abs{\frac{1}{N}\sum_{i=1}^{N}\sigma_i\big(\ell_h(Z_i) - \ell_h(Z_i')\big)} > \frac{\alpha}{2} \right) \le 2e^{-\frac{N\alpha^2}{8}}.
\label{eq:hoeffding_appendix}
\end{eqnarray}

On the fixed combined sample, the number of distinct loss vectors of the class $\mathcal L_{\mathcal H}$ on the $2N$ labeled points is at most $\Pi_{\mathcal L_{\mathcal H}}(2N)$. Therefore, applying a union bound over all distinct loss vectors and then {\bf Lemma~\ref{lem:growth_loss}}, we obtain the conditional estimate
\begin{eqnarray}
\Prob_\sigma\left( \sup_{h\in\mathcal H} \abs{\frac{1}{N}\sum_{i=1}^{N}\sigma_i\big(\ell_h(Z_i)-\ell_h(Z_i')\big)} > \frac{\alpha}{2} \right) \le 2\Pi_{\mathcal H}(2N)e^{-\frac{N\alpha^2}{8}}.
\end{eqnarray}
Averaging over the combined sample and combining with Eq.~(\ref{eq:symmetrization_appendix}) gives Eq.~(\ref{eq:appendix_vc_general}). If the VC dimension of $\mathcal{H}$ is $d \ge 1$, Sauer--Shelah yields
\begin{eqnarray}
\Pi_{\mathcal H}(2N)\le \left(\frac{2eN}{d}\right)^d,
\end{eqnarray}
which proves Eq.~(\ref{eq:appendix_vc_final}).
\end{proof}

{\bf Lemma~\ref{lem:vc_uniform}} is exactly the estimate used in the proof of {\bf Theorem~\ref{thm:noisy_pac}}.

\section{Extension to general local Pauli-diagonal noise}\label{app:noise_general}

The dephasing analysis of Sec.~\ref{sec:scalable} becomes more transparent when written in the language of Pauli-diagonal channels. This appendix shows that the same Clifford-Heisenberg argument works verbatim for a much larger family of noise models.

Consider a Clifford circuit $\hat{V}_n = \hat{U}_G \cdots \hat{U}_1$ acting on $\kappa(n)$ qubits. After each ideal gate layer $\hat{U}_g$, let the device undergo a product noise channel
\begin{eqnarray}
\mathcal{N}_g := \bigotimes_{j=1}^{\kappa(n)} \mathcal N_{g,j},
\end{eqnarray}
where each one-qubit channel is \emph{Pauli diagonal} in the Heisenberg picture:
\begin{eqnarray}
\mathcal{N}_{g,j}^\ast(\hat{\id})=\hat{\id},
\quad
\mathcal{N}_{g,j}^\ast(\hat{\sigma})=\lambda_{g,j}(\hat{\sigma})\hat{\sigma},
\label{eq:pauli_diagonal_local}
\end{eqnarray}
where $\hat{\sigma} \in \{\hat{X},\hat{Y},\hat{Z}\}$ with $\abs{\lambda_{g,j}(\hat{\sigma})} \le 1$.

For a measured Pauli observable $\hat{P}_n$, define $\hat{\sigma}_{g,j}(\hat{P}_n) \in \{\hat{\id},\hat{X},\hat{Y},\hat{Z}\}$ to be the one-qubit Pauli factor on qubit $j$ obtained after propagating $\hat{P}_n$ backward through the ideal suffix $\hat{U}_G\cdots \hat{U}_{g+1}$. Because the circuit is Clifford, this factor is well defined.

\begin{theorem}[Exact score attenuation under local Pauli-diagonal noise]\label{thm:pauli_general}
Let $\hat{\rho}_x^{\mathrm{noisy}}$ be the output of the canonical Clifford AQSE family when the ideal circuit is interleaved with the local Pauli-diagonal channels above. Then, for every input $x$,
\begin{eqnarray}
\Tr[\hat{P}_n\hat{\rho}_x^{\mathrm{noisy}}] = \Lambda_n^{\mathrm{PD}}(\hat{P}_n)\,s_n(x),
\label{eq:pauli_general_formula}
\end{eqnarray}
where $s_n(x)$ is the ideal score and
\begin{eqnarray}
\Lambda_n^{\mathrm{PD}}(\hat{P}_n) := \prod_{g=1}^{G}\prod_{j=1}^{\kappa(n)} \lambda_{g,j}\big(\hat{\sigma}_{g,j}(\hat{P}_n)\big),
\label{eq:pauli_general_lambda}
\end{eqnarray}
with the convention $\lambda_{g,j}(\hat{I})=1$.
\end{theorem}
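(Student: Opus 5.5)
The plan is to follow the same Heisenberg-picture backward propagation used in the proof of {\bf Theorem~\ref{thm:dephasing}}, with the dephasing rule of Eq.~(\ref{eq:dephasing_heisenberg}) replaced by the Pauli-diagonal rule of Eq.~(\ref{eq:pauli_diagonal_local}). First, write the adjoint of the noisy circuit as
\begin{eqnarray}
\mathcal{C}_n^\ast = \mathcal{U}_1^\ast\circ\mathcal{N}_1^\ast\circ\cdots\circ\mathcal{U}_G^\ast\circ\mathcal{N}_G^\ast,
\end{eqnarray}
with $\mathcal{U}_g^\ast(\hat{A})=\hat{U}_g^\dagger\hat{A}\hat{U}_g$. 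Since $\mathcal{N}_g=\bigotimes_j\mathcal{N}_{g,j}$, its adjoint factorizes as $\mathcal{N}_g^\ast=\bigotimes_j\mathcal{N}_{g,j}^\ast$. Consequently, on a Pauli string $\hat{\sigma}_1\otimes\cdots\otimes\hat{\sigma}_{\kappa(n)}$ it acts by multiplication with the scalar $\prod_j\lambda_{g,j}(\hat{\sigma}_j)$, where $\lambda_{g,j}(\hat{\id})=1$. Thus every $\mathcal{N}_g^\ast$ maps each Pauli string to a scalar multiple of itself.

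Next, I would prove by downward induction on $g$ the intermediate claim
\begin{eqnarray}
\mathcal{U}_{g+1}^\ast\circ\mathcal{N}_{g+1}^\ast\circ\cdots\circ\mathcal{U}_G^\ast\circ\mathcal{N}_G^\ast(\hat{P}_n)=c_g\,\hat{U}_{g+1}^\dagger\cdots\hat{U}_G^\dagger\hat{P}_n\hat{U}_G\cdots\hat{U}_{g+1},
\end{eqnarray}
with $c_g=\prod_{g'>g}\prod_j\lambda_{g',j}(\hat{\sigma}_{g',j}(\hat{P}_n))$. Two facts drive the inductive step. By linearity, the scalar $c_g$ passes through the channels. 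Because the $\mathcal{U}_{g'}$ are Clifford, the operator being acted on is always a signed Pauli string. Its single-qubit factor on qubit $j$ at the noise location following gate $g$ is exactly $\hat{\sigma}_{g,j}(\hat{P}_n)$, the factor obtained by propagating $\hat{P}_n$ backward through the ideal suffix $\hat{U}_G\cdots\hat{U}_{g+1}$, as defined before the statement.

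The main point needing care is an ordering and indexing subtlety. The noise after gate $g$ sees $\hat{P}_n$ conjugated by the ideal suffix $\hat{U}_G\cdots\hat{U}_{g+1}$ only, and this suffix must exclude $\hat{U}_g$. That is precisely how $\hat{\sigma}_{g,j}$ was defined. I would also check that the overall sign $\pm1$ (or phase) of the Pauli string does not affect the eigenvalue, since $\lambda$ depends only on the unsigned single-qubit factor.

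At $g=0$ the induction gives $\mathcal{C}_n^\ast(\hat{P}_n)=\Lambda_n^{\mathrm{PD}}(\hat{P}_n)\,\hat{Q}_n$ with $\hat{Q}_n=\hat{V}_n^\dagger\hat{P}_n\hat{V}_n$. Finally, taking the expectation in $\ket{\psi_x}$ yields
\begin{eqnarray}
\Tr[\hat{P}_n\hat{\rho}_x^{\mathrm{noisy}}]=\bra{\psi_x}\mathcal{C}_n^\ast(\hat{P}_n)\ket{\psi_x}=\Lambda_n^{\mathrm{PD}}(\hat{P}_n)\,s_n(x),
\end{eqnarray}
by the definition of $s_n(x)$ in Eq.~(\ref{eq:ideal_score}). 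This holds for every input $x$, whether or not $\hat{Q}_n$ has the form required by {\bf Proposition~\ref{prop:ideal_score}}. As a sanity check, specializing to $\lambda(\hat{Z})=1$ and $\lambda(\hat{X})=\lambda(\hat{Y})=1-2p_g$ recovers {\bf Theorem~\ref{thm:dephasing}}.
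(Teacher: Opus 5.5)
Your proposal is correct and follows the paper's proof. Both work in the Heisenberg picture: you propagate $\hat{P}_n$ backward through the Clifford gates, each product Pauli-diagonal channel $\mathcal{N}_g^\ast$ rescales the current signed Pauli string by $\prod_j\lambda_{g,j}(\hat{\sigma}_{g,j}(\hat{P}_n))$, which gives $\mathcal{C}_n^\ast(\hat{P}_n)=\Lambda_n^{\mathrm{PD}}(\hat{P}_n)\hat{Q}_n$, and you then take the expectation in $\ket{\psi_x}$. Your explicit downward induction on $g$, including the check that the suffix defining $\hat{\sigma}_{g,j}$ excludes $\hat{U}_g$, just makes rigorous the step the paper summarizes as ``repeating this for all layers and qubits.''
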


\begin{proof}---Work in the Heisenberg picture. Start from the measured Pauli string $\hat{P}_n$ at the output. Since each $\hat{U}_g$ is Clifford, conjugation by $\hat{U}_g^\dagger$ maps a Pauli string to another Pauli string. Therefore, just before the action of the noise channel $\mathcal{N}_g^\ast$, the observable is a Pauli string whose one-qubit factor on qubit $j$ equals $\hat{\sigma}_{g,j}(\hat{P}_n)$. Applying Eq.~(\ref{eq:pauli_diagonal_local}) multiplies this factor by $\lambda_{g,j}(\hat{\sigma}_{g,j}(\hat{P}_n))$. Repeating this for all layers and qubits yields
\begin{eqnarray}
\mathcal C_n^\ast(\hat{P}_n) = \left[ \prod_{g=1}^{G}\prod_{j=1}^{\kappa(n)}\lambda_{g,j}\big(\hat{\sigma}_{g,j}(\hat{P}_n)\big) \right] \hat{Q}_n,
\end{eqnarray}
where $\hat{Q}_n=\hat{V}_n^\dagger\hat{P}_n\hat{V}_n$ is the ideal Heisenberg image. Taking expectation in the encoded state gives Eq.~(\ref{eq:pauli_general_formula}).
\end{proof}

\begin{corollary}[Inverse-polynomial persistence from logarithmic total attenuation]\label{cor:pauli_general_poly}
Suppose that for every relevant factor one has
\begin{eqnarray}
\lambda_{g,j}\big(\hat{\sigma}_{g,j}(\hat{P}_n)\big) \ge 1-\epsilon_{g,j}, \quad \left( 0 \le \epsilon_{g,j}\le \frac{1}{2} \right),
\end{eqnarray}
and let
\begin{eqnarray}
\Gamma_n(\hat{P}_n) := \sum_{g=1}^{G}\sum_{j=1}^{\kappa(n)} \epsilon_{g,j} \mathds{I}\{\hat{\sigma}_{g,j}(\hat{P}_n)\neq \hat{\id}\}.
\end{eqnarray}
Then,
\begin{eqnarray}
\Lambda_n^{\mathrm{PD}}(\hat{P}_n)\ge e^{-2\Gamma_n(\hat{P}_n)}.
\label{eq:gamma_bound}
\end{eqnarray}
In particular, if $\Gamma_n(\hat{P}_n)=O(\log n)$, then $\Lambda_n^{\mathrm{PD}}(\hat{P}_n)\ge n^{-O(1)}$.
\end{corollary}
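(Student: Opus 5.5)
The plan is to start from the product formula of Theorem~\ref{thm:pauli_general} and take logarithms factor by factor. In Eq.~(\ref{eq:pauli_general_lambda}), every factor with $\hat{\sigma}_{g,j}(\hat{P}_n)=\hat{\id}$ equals $1$ by convention, so it drops out. This leaves a product over the relevant pairs $(g,j)$ with $\hat{\sigma}_{g,j}(\hat{P}_n)\neq\hat{\id}$. By hypothesis, each such factor satisfies
\begin{eqnarray}
\lambda_{g,j}\big(\hat{\sigma}_{g,j}(\hat{P}_n)\big)\ge 1-\epsilon_{g,j}\ge \tfrac12>0 .
\end{eqnarray}
All surviving factors are therefore positive. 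This matters: a priori the $\lambda$'s may be negative, and positivity is what lets us bound the product from below term by term.

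Since all factors lie in $(0,1]$, monotonicity of multiplication for nonnegative numbers gives
\begin{eqnarray}
\Lambda_n^{\mathrm{PD}}(\hat{P}_n)\ge \prod_{(g,j):\,\hat{\sigma}_{g,j}\neq\hat{\id}}(1-\epsilon_{g,j}).
\end{eqnarray}
I then apply the same elementary inequality used in the proof of Theorem~\ref{thm:dephasing}, namely $\log(1-t)\ge -2t$ for $0\le t\le 1/2$. To check it, note that $f(t)=\log(1-t)+2t$ satisfies $f(0)=0$ and $f'(t)=2-1/(1-t)\ge 0$ on $[0,1/2]$. With $t=\epsilon_{g,j}$, summing the logarithms gives
\begin{eqnarray}
\log\Lambda_n^{\mathrm{PD}}\ge -2\sum \epsilon_{g,j}\mathds{I}\{\hat{\sigma}_{g,j}\neq\hat{\id}\}=-2\Gamma_n(\hat{P}_n),
\end{eqnarray}
and exponentiating yields Eq.~(\ref{eq:gamma_bound}).

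For the final claim, if $\Gamma_n\le c\log n$ then $e^{-2\Gamma_n}\ge n^{-2c}=n^{-O(1)}$.

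The only real subtlety is the sign issue. The factors must be nonnegative before term-by-term lower bounds can be multiplied. The hypothesis $\epsilon_{g,j}\le 1/2$ guarantees this for every relevant factor. I would state explicitly that the hypothesis is imposed only on the non-identity factors, since the identity factors are exactly $1$.
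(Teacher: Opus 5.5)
Your proposal is correct and follows the paper's proof essentially step for step: you drop the identity factors, bound the remaining product below by $\prod(1-\epsilon_{g,j})$, apply $\log(1-t)\ge -2t$ on $[0,1/2]$, and exponentiate. Your explicit remark that $\epsilon_{g,j}\le 1/2$ makes every surviving factor at least $1/2>0$, so the termwise lower bounds can be multiplied, spells out a point the paper uses only implicitly when it invokes that assumption.
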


\begin{proof}---Using Eq.~(\ref{eq:pauli_general_lambda}) and the assumption $\epsilon_{g,j} \le 1/2$, we obtain $\Lambda_n^{\mathrm{PD}}(\hat{P}_n) \ge \prod_{(g,j) : \hat{\sigma}_{g,j}(\hat{P}_n) \neq \hat{\id}}(1-\epsilon_{g,j})$. For $0 \le t \le 1/2$, the bound $\log(1-t) \ge -2t$ holds. Therefore, $\log\Lambda_n^{\mathrm{PD}}(\hat{P}_n) \ge -2\sum_{(g,j):\hat{\sigma}_{g,j}(\hat{P}_n)\neq \hat{I}}\epsilon_{g,j} = -2\Gamma_n(\hat{P}_n)$, which proves Eq.~(\ref{eq:gamma_bound}). The inverse-polynomial statement follows immediately when $\Gamma_n(\hat{P}_n)=O(\log n)$.
\end{proof}

The dephasing model of Sec.~\ref{sec:scalable} is the special case in which $\lambda_{g,j}(\hat{X})=\lambda_{g,j}(\hat{Y})=1-2p_g$ and $\lambda_{g,j}(\hat{Z})=1$.

\section{An explicit scalable family with logarithmic active support}\label{app:explicit_family}

We now package the previous results into one explicit family that realizes the main asymptotic message of the paper. The family is intentionally simple. Its purpose is not to prove a classical hardness separation, but to show concretely that polynomial encoding cost and growing qubit number are compatible with inverse-polynomial hybrid reliability when only a logarithmic active subset is quantum-encoded.

\begin{theorem}[Explicit logarithmic-support AQSE family]\label{thm:explicit_family}
Fix constants $0 < \phi_0 < \pi/2$ and $\gamma > 0$. For each $n \ge 2$, define
\begin{eqnarray}
a(n) &:=& \lceil \gamma \log n \rceil, \nonumber \\
\kappa(n) &:=& n, \nonumber \\
S_n &:=& \{1,2,\dots,a(n)\}.
\end{eqnarray}
Consider the encoder family on $\kappa(n)$ qubits given by
\begin{eqnarray}
\hat{E}_n(x) := \left[\bigotimes_{j=1}^{a(n)}\hat{R}_z(\phi_j(x))\hat{H}_j\right] \otimes \left[\bigotimes_{j=a(n)+1}^{n} \hat{\id}_j\right],
\label{eq:explicit_encoder}
\end{eqnarray}
where the data support satisfies $\abs{\phi_j(x)} \le \phi_0$ for all active qubits. Let the variational block be any Clifford circuit $\hat{V}_n$ for which there exists a measured Pauli observable $\hat{P}_n$ whose ideal Heisenberg image has the form
\begin{eqnarray}
\hat{Q}_n = \hat{V}_n^\dagger \hat{P}_n \hat{V}_n = \pm \left[\bigotimes_{j=1}^{a(n)}\hat{X}_j\right] \otimes \hat{R}_n,
\label{eq:explicit_Qn}
\end{eqnarray}
where $\hat{R}_n$ is a Pauli string on the inactive qubits containing only $\hat{\id}$ and $\hat{Z}$. Then the following hold.
\begin{itemize}
\item[(i)] The encoding gate complexity is $G_{\mathrm{enc}}(n)=O(a(n))=O(\log n)$.
\item[(ii)] The ideal score satisfies
\begin{eqnarray}
\abs{s_n(x)} \ge (\cos \phi_0)^{a(n)} = n^{-\gamma\log(1/\cos\phi_0) + O(1/\log n)}.
\label{eq:explicit_ideal_bias}
\end{eqnarray}
\item[(iii)] If the interleaved noise is local Pauli diagonal and obeys
\begin{eqnarray}
\Gamma_n(\hat{P}_n)=O(\log n),
\end{eqnarray}
then the noisy oracle reliability satisfies $\beta_0(n) \ge n^{-O(1)}$.
\item[(iv)] If the classical part of the hybrid model has sample rank polynomial in $n$, then the resulting projected hybrid learner has polynomial PAC sample complexity in $n$, $1/\varepsilon$, and $\log(1/\delta)$.
\end{itemize}
\end{theorem}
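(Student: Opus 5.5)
The plan is to verify the four items in order, each time reducing to a result already proved. For (i), I will count gates directly from Eq.~(\ref{eq:explicit_encoder}). Each active qubit receives one $\hat{H}$ and one $\hat{R}_z(\phi_j(x))$, and the inactive qubits receive only identities, which cost no gates. Hence $G_{\mathrm{enc}}(n)=2a(n)=O(\log n)$. The encoder also meets conditions (i)--(ii) of the AQSE definition with $C_n=\{a(n)+1,\dots,n\}$ prepared in $\ket{0}$, which is the canonical family of Eq.~(\ref{eq:canonical_state}) with $b=0$.

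For (ii), I will apply Proposition~\ref{prop:ideal_score} with $A_X=S_n$, $A_Y=\emptyset$, and $B_Z$ the support of $\hat{R}_n$. This gives $s_n(x)=\pm(-1)^{\sum_{\ell\in B_Z}b_\ell}\prod_{j=1}^{a(n)}\cos\phi_j(x)$. Since $b=0$, and in any case the context sign has unit modulus, we get $\abs{s_n(x)}=\prod_j\abs{\cos\phi_j(x)}$. Because $\abs{\phi_j}\le\phi_0<\pi/2$ and cosine is even and decreasing on $[0,\pi/2)$, each factor is at least $\cos\phi_0>0$, so $\abs{s_n(x)}\ge(\cos\phi_0)^{a(n)}$. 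For the exponent, I write $(\cos\phi_0)^{a(n)}=\exp(-a(n)\log(1/\cos\phi_0))$ and use $\gamma\log n\le a(n)<\gamma\log n+1$. This gives the exponent $-\gamma\log(1/\cos\phi_0)$ up to an additive error of at most $\log(1/\cos\phi_0)/\log n=O(1/\log n)$ in the power of $n$. The sign convention matters here: the correct lower bound uses the upper bound on $a(n)$, and I will state it that way.

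For (iii), I will combine Theorem~\ref{thm:pauli_general} with Corollary~\ref{cor:pauli_general_poly}. The hypothesis $\Gamma_n(\hat{P}_n)=O(\log n)$ gives $\Lambda_n^{\mathrm{PD}}(\hat{P}_n)\ge e^{-2\Gamma_n}=n^{-O(1)}$. This step is the main obstacle. Corollary~\ref{cor:poly_persistence} was stated for dephasing, so I must rerun its short argument with $\Lambda_n^{\mathrm{PD}}$ in place of $\Lambda_n$. With target $h^\star=\sgn(s_n)$, which is well defined because $s_n\neq0$ by (ii), the $\pm1$ measurement outcome has conditional mean $\Lambda_n^{\mathrm{PD}}s_n(x)$. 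The argument goes through only if $\Lambda_n^{\mathrm{PD}}>0$, so that the sign is preserved; this holds because each relevant $\lambda_{g,j}\ge1-\epsilon_{g,j}\ge1/2$. Then $\beta(x)=\Lambda_n^{\mathrm{PD}}\abs{s_n(x)}$, and taking the infimum together with (ii) gives $\beta_0(n)\ge n^{-O(1)}\cdot n^{-O(1)}=n^{-O(1)}$.

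For (iv), I will invoke Corollary~\ref{cor:poly_sample}. The projected quantum sector has $M=1$ observable, namely $\hat{P}_n$, and by Theorem~\ref{thm:kernel_rank} the hybrid feature dimension is $D_H(n)\le\rank(K_C)+1=\mathrm{poly}(n)$. Combined with (iii), both conditions in Eq.~(\ref{eq:poly_conditions_basic}) hold. Theorem~\ref{thm:noisy_pac} then gives $N=O\big((D_H\log(1/\varepsilon)+\log(1/\delta))/(\beta_0^2\varepsilon^2)\big)$, which is polynomial in $n$, $1/\varepsilon$, and $\log(1/\delta)$. This requires realizability: $h^\star=\sgn(s_n)$ is a linear separator on the single projected quantum coordinate, so it lies in the hybrid linear class.
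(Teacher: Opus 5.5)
Your proposal is correct and follows the paper's proof essentially step for step: a direct gate count for (i), Proposition~\ref{prop:ideal_score} with $A_X=S_n$ for (ii), Theorem~\ref{thm:pauli_general} and Corollary~\ref{cor:pauli_general_poly} together with a rerun of the argument of Corollary~\ref{cor:poly_persistence} for (iii), and Theorem~\ref{thm:kernel_rank} with $M=1$ combined with Corollary~\ref{cor:poly_sample} for (iv). Your added checks fill in points the paper leaves implicit: that $\Lambda_n^{\mathrm{PD}}>0$ (since each relevant factor is at least $1-\epsilon_{g,j}\ge 1/2$), so the sign of the score is preserved, and that $h^\star=\sgn(s_n)$ lies in the hybrid linear class, as realizability requires.
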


\begin{proof}---Part (i) is immediate from Eq.~(\ref{eq:explicit_encoder}): each active qubit receives one Hadamard gate and one phase gate, so the encoding cost is $2a(n)=O(\log n)$.

For part (ii), {\bf Proposition~\ref{prop:ideal_score}} applies because the Heisenberg image in Eq.~(\ref{eq:explicit_Qn}) contains only $\hat{X}$ operators on active qubits and only $\hat{\id}$ or $\hat{Z}$ on the inactive ones. Therefore,
\begin{eqnarray}
\abs{s_n(x)} = \prod_{j=1}^{a(n)}\abs{\cos \phi_j(x)} \ge (\cos\phi_0)^{a(n)}.
\end{eqnarray}
Since $a(n)=\lceil \gamma\log n\rceil$, this equals
\begin{eqnarray}
(\cos\phi_0)^{\lceil \gamma\log n\rceil} &=& \exp\big(\lceil \gamma\log n \rceil \log(\cos\phi_0)\big) \nonumber \\[2pt]
	&=&  n^{-\gamma\log(1/\cos\phi_0)+O(1/\log n)},
\end{eqnarray}
which proves Eq.~(\ref{eq:explicit_ideal_bias}).

For part (iii), {\bf Theorem~\ref{thm:pauli_general}} and {\bf Corollary~\ref{cor:pauli_general_poly}} imply that the noisy expectation is attenuated by a factor $\Lambda_n^{\mathrm{PD}}(\hat{P}_n)\ge n^{-O(1)}$ whenever $\Gamma_n(\hat{P}_n)=O(\log n)$. Defining the target label as $h^\star(x)=\sgn(s_n(x))$, the same argument as in {\bf Corollary~\ref{cor:poly_persistence}} gives
\begin{eqnarray}
\beta_0(n) \ge \Lambda_n^{\mathrm{PD}}(\hat{P}_n)\inf_x \abs{s_n(x)} \ge n^{-O(1)}.
\end{eqnarray}

For part (iv), the projected quantum readout uses only one observable, so by {\bf Theorem~\ref{thm:kernel_rank}} the sample regularized dimension increases by at most one beyond the classical sample rank. If the latter is polynomial in $n$, then the hybrid feature dimension is polynomial. Combining this with the inverse-polynomial reliability from part (iii) and {\bf Corollary~\ref{cor:poly_sample}} gives polynomial PAC sample complexity.
\end{proof}

{\bf Theorem~\ref{thm:explicit_family}} deliberately uses only logarithmically many active qubits. This keeps the ideal multiplicative score floor inverse-polynomial. The theorem also does not require the post-encoding Clifford block to be trivial; it only requires that the chosen measured observable have a backward light cone whose total attenuation budget is $O(\log n)$. Thus the overall number of qubits and the total circuit size may still grow polynomially with $n$.

\section{Toy-model details, stronger classical baselines, and numerical protocol}\label{app:toy}

This appendix records the concrete feature counts and the numerical protocol for the $64$-qubit toy family of Sec.~\ref{sec:toy}, together with the additional classical-kernel controls used in Appendix~\ref{app:kernel_controls}.

\begin{proposition}[The exact classical interaction appears at degree eight]\label{prop:deg8_exact}
Let
\begin{eqnarray}
x_{\mathrm{sel}} := \big( z_7, z_8, \cos\phi_1, \sin\phi_1, \ldots, \cos\phi_6, \sin\phi_6 \big) \in \R^{14}.
\end{eqnarray}
Then, the projected quantum feature in Eq.~(\ref{eq:q_def}) is the interaction-only degree-eight monomial
\begin{eqnarray}
q(x) = z_7 z_8 (\cos\phi_1)(\cos\phi_2)(\sin\phi_3)(\cos\phi_4)(\cos\phi_5)(\sin\phi_6).
\label{eq:q_degree8}
\end{eqnarray}
Hence, the exact degree-eight interaction-only classical feature class in $x_{\mathrm{sel}}$ contains the target interaction exactly, whereas every interaction-only expansion of degree at most $7$ necessarily misses this monomial.
\end{proposition}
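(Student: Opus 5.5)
Two facts need to be established: the displayed monomial identity, and the claim that degree-eight interaction-only expansions contain this monomial while every expansion of degree at most $7$ misses it.

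\textbf{Step 1: the identity.} Start from Eq.~(\ref{eq:q_def}), which Proposition~\ref{prop:toy_exact} already established as the exact projected expectation value. Use $(-1)^{b_7+b_8}=(-1)^{b_7}(-1)^{b_8}=z_7z_8$. This rewrites $q(x)$ as the product of eight distinct coordinates of $x_{\mathrm{sel}}$, namely $z_7,z_8,\cos\phi_1,\cos\phi_2,\sin\phi_3,\cos\phi_4,\cos\phi_5,\sin\phi_6$. Each coordinate appears with power one, so $q$ is an interaction-only (multilinear, squarefree) monomial of total degree exactly eight. This gives Eq.~(\ref{eq:q_degree8}).

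\textbf{Step 2: degree eight suffices.} By definition, the degree-eight interaction-only feature class on $x_{\mathrm{sel}}$ contains every product of eight distinct coordinates. It therefore contains this monomial as one of its $\binom{14}{8}=3003$ basis features. A linear model in that class represents $q$ exactly, with coefficient one on this feature and zero on all others.

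\textbf{Step 3: degree at most $7$ misses $q$.} I would make ``misses'' precise in the strong $L_2$ sense and invoke the orthogonality part of Proposition~\ref{prop:toy_exact}. Under the product distribution, $q\perp\mathcal{V}_{\le 7}$. Since $\E[q^2]=\prod(1/2)^6\cdot 1>0$, we have $q\neq 0$. Hence $q\notin\mathcal{V}_{\le 7}$, and indeed the best $L_2$ approximation of $q$ from $\mathcal{V}_{\le 7}$ is zero.

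As a purely algebraic complement, one can also show that $q$ is not a linear combination of lower-degree monomials as functions. The main obstacle here is the identities $\cos^2+\sin^2=1$ and $z^2=1$, which make monomials in $x_{\mathrm{sel}}$ not functionally independent. I would handle this through the $L_2$ argument above rather than by formal polynomial-ring reasoning. The orthogonality computation is valid even with these relations present, because it uses only odd-power vanishing of the $\phi_j$ integrals and the zero mean of the $z$'s.
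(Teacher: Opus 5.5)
Your proposal is correct, and Steps 1 and 2 match the paper's argument. The difference is in Step 3.

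The paper treats ``misses'' as pure bookkeeping. The right-hand side of Eq.~(\ref{eq:q_degree8}) has degree eight, so it is not one of the generating monomials of any interaction-only expansion of degree at most $7$, and the proof stops there.

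You read ``misses'' as non-membership in the span. You deduce it from the orthogonality $q\perp\mathcal{V}_{\le 7}$ of Proposition~\ref{prop:toy_exact} together with $\E[q^2]=2^{-6}>0$. This gives a stronger conclusion. Any pointwise identity $q=\sum_m a_m m$ would also hold in $L_2$ under the uniform product law. So your argument shows that no linear model on degree-$\le 7$ interaction-only features represents $q$ at all, and that the best such $L_2$ approximation is identically zero. That function-level statement is the one the learning-curve comparison actually needs. The paper's syntactic argument leaves it implicit, since it never addresses whether the monomials are linearly independent as functions. This is exactly the worry you raise about $z^2=1$ and $\cos^2+\sin^2=1$, and your $L_2$ route does settle it.

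The cost is that you rely on the orthogonality computation of Proposition~\ref{prop:toy_exact} under a specific reference distribution. The paper's version needs nothing beyond counting degrees.

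One small correction: your expression ``$\prod(1/2)^6\cdot 1$'' should simply read $(1/2)^6$.
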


\begin{proof}---Eq.~(\ref{eq:q_degree8}) is just Eq.~(\ref{eq:q_def}) rewritten in the selected raw variables. The right-hand side has total degree $8$, uses each raw variable at most once, and is therefore an interaction-only degree-eight monomial. Consequently, it belongs to the exact degree-eight interaction family and does not belong to any degree-at-most-$7$ interaction-only expansion.
\end{proof}

{\bf Proposition~\ref{prop:deg8_exact}} explains why we report, in Fig.~\ref{fig:learning_curves}(b), the size of an exact classical interaction baseline even though the main learning-curve panel focuses on lower-order baselines. The representational point is that exact classical recovery is possible only after a much larger feature expansion.

\begin{proposition}[Feature counts for the $64$-qubit experiment]\label{prop:feature_count}
The raw linear model of Sec.~\ref{sec:toy} uses $58$ context-bit features and $12$ trigonometric features, hence
\begin{eqnarray}
D_{\mathrm{lin}} = 58+12 = 70.
\end{eqnarray}
Let $d=14$ be the dimension of $x_{\mathrm{sel}}$. Then, the interaction-only degree-at-most-$3$ classical expansion has
\begin{eqnarray}
D_{\le 3}^{\mathrm{int}} = \sum_{r=1}^{3}\binom{d}{r} = \binom{14}{1}+\binom{14}{2}+\binom{14}{3} = 469
\end{eqnarray}
features, the exact degree-eight interaction-only expansion has
\begin{eqnarray}
D_{=8}^{\mathrm{int}} = \binom{14}{8} = 3003,
\end{eqnarray}
and the cumulative interaction-only family up to degree $8$ has
\begin{eqnarray}
D_{\le 8}^{\mathrm{int}} = \sum_{r=1}^{8}\binom{14}{r} = 12{,}910
\end{eqnarray}
features. Therefore, the hybrid projected model has
\begin{eqnarray}
D_{\mathrm{hyb}} = D_{\mathrm{lin}}+1 = 71
\end{eqnarray}
features.
\end{proposition}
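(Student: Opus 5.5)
This is a counting statement, so the plan is to verify each count directly from the definitions of the feature families in Sec.~\ref{sec:toy}. No earlier theorem is needed; Proposition~\ref{prop:deg8_exact} is relevant only to confirm that the degree-eight family is built on the same $14$-variable vector $x_{\mathrm{sel}}$.

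\emph{Raw linear model.} Here I will read off $\Phi_{\mathrm{lin}}$. The context register is $C=\{7,\ldots,64\}$, which has $64-7+1=58$ qubits, each contributing one feature $z_\ell=(-1)^{b_\ell}$. The six active qubits contribute $\cos\phi_j$ and $\sin\phi_j$ each, giving $12$ features. So $D_{\mathrm{lin}}=70$. The hybrid model appends exactly the one projected feature $q(x)$ of Eq.~(\ref{eq:q_def}). Since $q$ is a product of eight raw variables, it is not a coordinate of $\Phi_{\mathrm{lin}}$, so it adds a genuinely new coordinate and $D_{\mathrm{hyb}}=71$.

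\emph{Interaction-only expansions.} An interaction-only monomial of degree exactly $r$ in $d=14$ variables corresponds to a choice of $r$ distinct variables, each used to power one, so there are $\binom{14}{r}$ of them. This gives the closed forms for each family:
\begin{itemize}
\item $D^{\mathrm{int}}_{\le 3}=14+91+364=469$;
\item $D^{\mathrm{int}}_{=8}=\binom{14}{8}=\binom{14}{6}=3003$;
\item $D^{\mathrm{int}}_{\le 8}=\sum_{r=1}^{8}\binom{14}{r}$.
\end{itemize}
For the cumulative count, I will tabulate the row $14,91,364,1001,2002,3003,3432,3003$ and sum it to $12{,}910$. As a cross-check, symmetry of the binomial row gives $\sum_{r=0}^{14}\binom{14}{r}=2^{14}=16384$ and $\sum_{r=9}^{14}\binom{14}{r}=\sum_{r=0}^{5}\binom{14}{r}=3473$. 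Hence $\sum_{r=1}^{8}\binom{14}{r}=16384-3473-1=12910$.

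\emph{Main obstacle.} The only real risk is a definitional one: whether the constant term is included, and whether monomials pairing $\cos\phi_j$ with $\sin\phi_j$ on the same qubit are counted. I will fix the convention that every subset of the $14$ coordinates counts as an interaction-only monomial and that the constant is excluded, which matches the sums starting at $r=1$ in the statement. Under this convention the arithmetic above is all that remains.
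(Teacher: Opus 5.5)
Your proposal is correct and follows the paper's proof essentially step for step: you read $D_{\mathrm{lin}}=70$ off $\Phi_{\mathrm{lin}}$, count degree-$r$ interaction-only monomials as $\binom{14}{r}$ with the constant excluded, sum the explicit row to get $469$ and $12{,}910$, and add the single projected feature to get $71$. The symmetry cross-check $2^{14}-3473-1=12{,}910$ is a harmless extra sanity check that the paper does not include.
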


\begin{proof}---The linear count is immediate from the definition of $\Phi_{\mathrm{lin}}$. For the interaction-only expansions, the number of degree-exactly-$r$ monomials in $d$ variables is $\binom{d}{r}$ because each monomial corresponds to choosing which $r$ distinct variables appear. Summing from $r=1$ to $3$ gives
\begin{eqnarray}
14+91+364=469.
\end{eqnarray}
Similarly,
\begin{eqnarray}
\binom{14}{8}=3003,
\end{eqnarray}
while
\begin{eqnarray}
\sum_{r=1}^{8}\binom{14}{r} &=& 14+91+364+1001+2002 \nonumber \\
	&& \quad +3003+3432+3003 = 12{,}910.
\end{eqnarray}
Adding the single projected quantum feature to the $70$ raw linear features gives the hybrid dimension $71$.
\end{proof}

The released code evaluates the projected quantum feature analytically from Eq.~(\ref{eq:q_def}), so it never allocates a full $2^{64}$-dimensional statevector. This is important for the interpretation of Sec.~\ref{sec:toy}: the numerical cost tracks the active-support description of the model rather than the full Hilbert-space dimension.

The algorithm below is exactly the protocol used to generate Fig.~\ref{fig:learning_curves}. 
\begin{figure*}[h]
\centering
\begin{minipage}{0.94\textwidth}
\hrule\vspace{0.2em}\hrule\vspace{0.6em}
\textbf{Algorithm 1. Numerical protocol for the $64$-qubit benchmark suite}\vspace{0.6em}
\hrule\vspace{0.2em}\hrule
\vspace{0.5em}
\begin{algorithmic}[1]
\Require total qubits $m=64$; active set $S=\{1,\ldots,6\}$; informative context bits $\{7,8\}$; training sizes $\mathcal N=\{80,160,320,640,1280\}$; repetitions $R=5$; clean test size $N_{\mathrm{test}}=4000$; label-noise rate $\eta=0.1$; phase window $\Delta=\pi/6$.
\For{$r=1$ to $R$}
    \State Draw a clean test set of size $N_{\mathrm{test}}$ by sampling $b\in\{0,1\}^{58}$ uniformly.
    \For{$j\in\{1,2,4,5\}$}
        \State Sample $\phi_j$ uniformly from $\{0,\pi\}+[-\Delta,\Delta]$.
    \EndFor
    \For{$j\in\{3,6\}$}
        \State Sample $\phi_j$ uniformly from $\{\pi/2,3\pi/2\}+[-\Delta,\Delta]$.
    \EndFor
    \State Compute the projected quantum feature $q(x)$ analytically from Eq.~\eqref{eq:q_def}.
    \State Set the clean label $y=\sgn(q(x))$.
    \For{each $N\in\mathcal N$}
        \State Draw a training set of size $N$ from the same distribution.
        \State Flip each training label independently with probability $\eta$.
        \State Build the four explicit feature families: linear raw, cubic selected, exact degree-$8$ selected, and hybrid projected.
        \State Fit ridge-type linear classifiers to the noisy training set.
        \State Evaluate clean-test accuracy on the fixed test set for repetition $r$.
    \EndFor
\EndFor
\State Report the mean and standard deviation across the $R$ repetitions.
\end{algorithmic}
\vspace{0.6em}\hrule\vspace{0.2em}\hrule
\end{minipage}
\end{figure*}

\subsection{Additional classical-kernel controls}\label{app:kernel_controls} 

\begin{figure*}[t]
\centering
\includegraphics[width=0.60\textwidth]{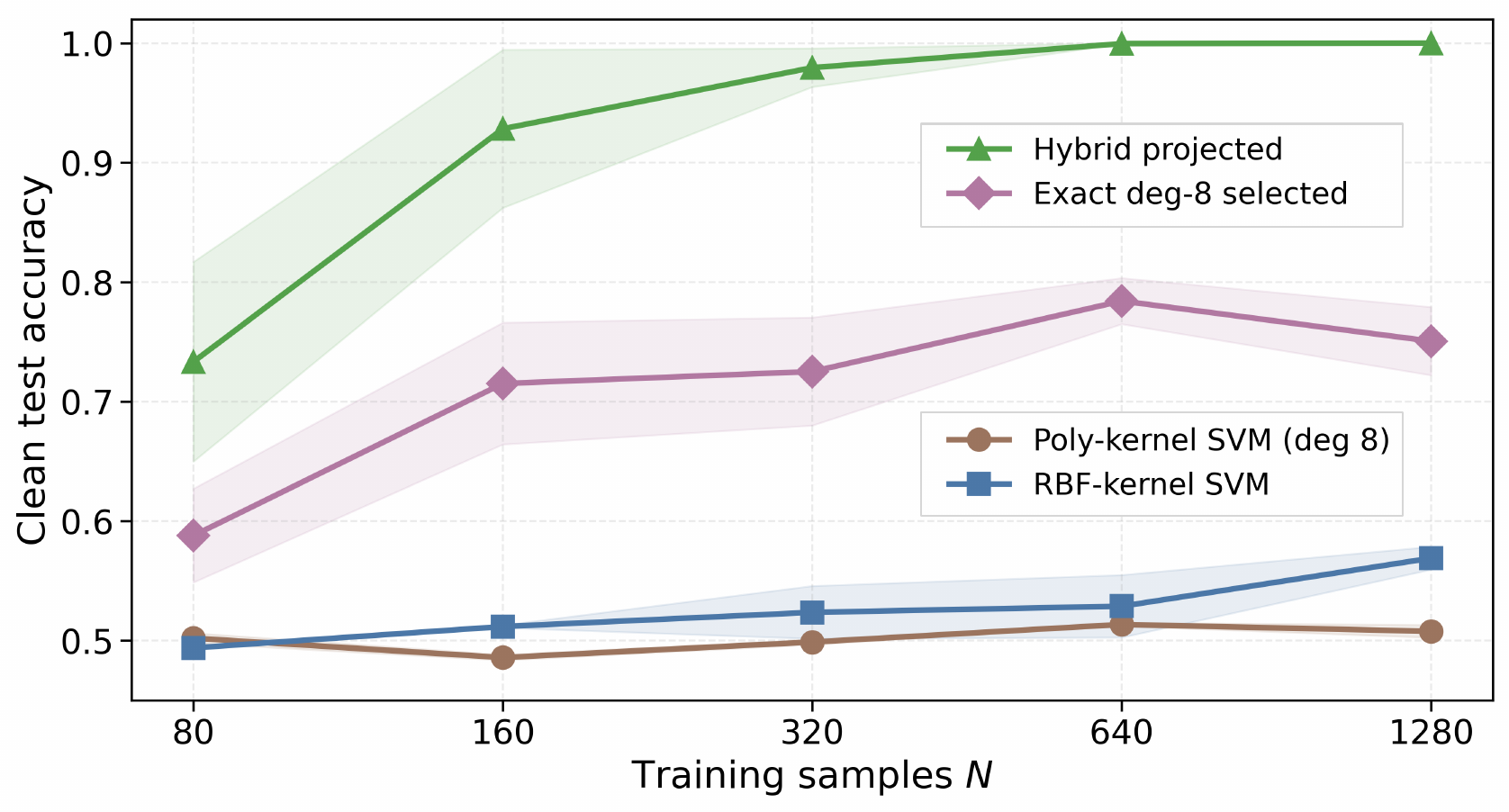}
\caption{Additional classical-kernel controls on the $64$-qubit benchmark. The exact degree-eight selected baseline from Fig.~\ref{fig:learning_curves} is shown again for reference. On this task, the pilot-tuned degree-eight polynomial-kernel and RBF-kernel SVM controls remain close to chance across the tested range, whereas the hybrid projected model remains strongest.}
\label{fig:kernel_controls}
\end{figure*}

To probe whether implicit classical kernels can recover the same interaction more efficiently, we also fit degree-eight polynomial-kernel and RBF-kernel support-vector machines~\cite{CortesVapnik1995} on $x_{\mathrm{sel}}$. Their hyperparameters are selected once on a pilot validation split and then fixed across the learning-curve runs. In the released experiments, the pilot search chooses $(C,\gamma,\mathrm{coef0})=(0.5,0.05,1.0)$ for the polynomial kernel and $(C,\gamma)=(2.0,1.0)$ for the RBF kernel. Fig.~\ref{fig:kernel_controls} reports the resulting clean-test accuracies.

\section{Projected kernels versus naive global kernels}\label{app:global_kernel} 

To make the projected-versus-global-kernel comparison fully explicit, we consider a smaller $6$-qubit analogue with four active phases and two context bits. The projected observable is
\begin{eqnarray}
q_{\mathrm{small}}(x)=(-1)^{b_1+b_2}\cos\phi_1\cos\phi_2\sin\phi_3\sin\phi_4,
\end{eqnarray}
and the clean label is $\sgn(q_{\mathrm{small}}(x))$. The projected AQSE kernel is built from a feature family of size $M$, while the naive full-data quantumization baseline uses the global fidelity kernel
\begin{eqnarray}
K_{\mathrm{fid}}\big((b,\phi),(b',\phi')\big)
=
\delta_{b,b'}\prod_{j=1}^{4}\cos^2\!\left(\frac{\phi_j-\phi_j'}{2}\right).
\end{eqnarray}
This is exactly the kernel obtained by comparing the full product states of the active phase qubits together with the basis-encoded context bits.

\begin{figure*}[t]
\centering
\includegraphics[width=0.94\textwidth]{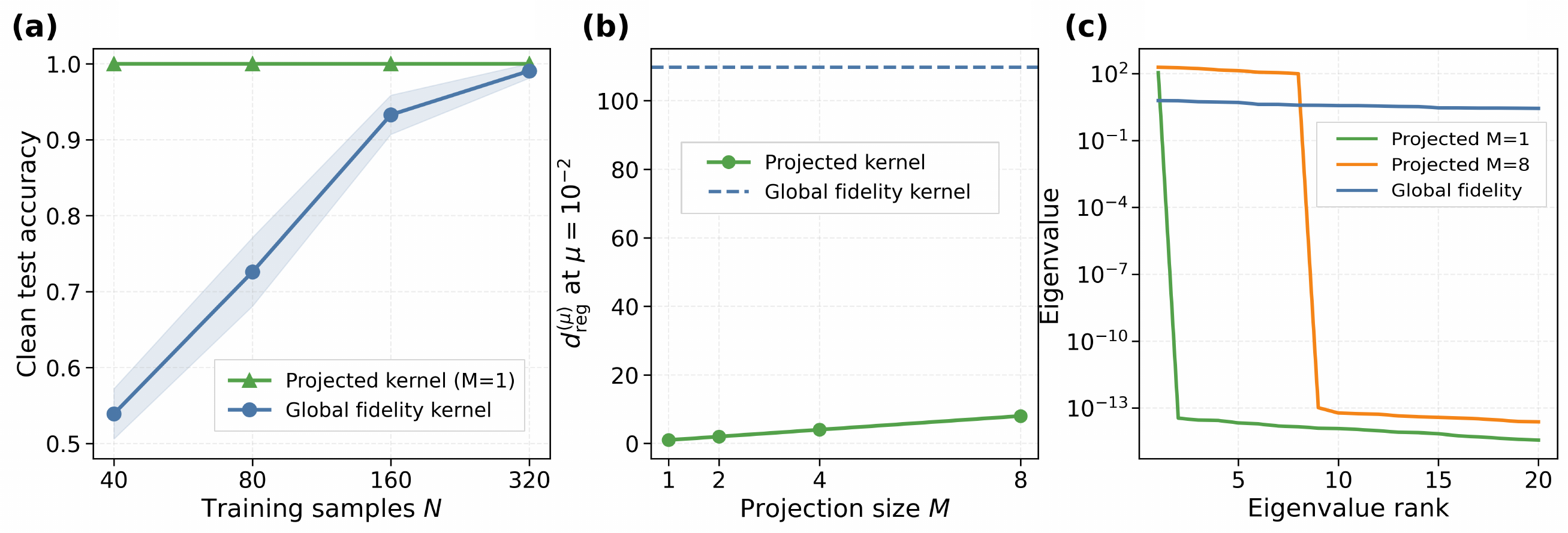}
\caption{Projected AQSE kernels versus a naive global fidelity kernel on a fully computable small-qubit analogue. Panel (a) shows clean-test accuracy versus the number of training samples $N$. The projected kernel with $M=1$ already reaches perfect accuracy, while the global fidelity kernel improves only gradually with sample size. Panel (b) shows the sample regularized dimension $d_{\mathrm{reg}}^{(\mu)}$ at $\mu=10^{-2}$: the projected kernel tracks the projection size $M$, whereas the global fidelity kernel remains of order the sample size. Panel (c) shows representative eigenvalue spectra at $N=160$. The projected kernels concentrate spectral weight on only a few modes, whereas the global fidelity kernel remains broad.}
\label{fig:global_kernel}
\end{figure*}

The numerical message matches {\bf Theorem~\ref{thm:kernel_rank}} and is a finite-sample analogue of recent warnings for global fidelity-type quantum kernels~\cite{Thanasilp2024,Agliardi2026}. For the projected families used here, $d_{\mathrm{reg}}^{(\mu)}$ is numerically $1.00$, $2.00$, $4.00$, and $8.00$ for $M=1,2,4,8$, respectively. By contrast, the global fidelity kernel has $d_{\mathrm{reg}}^{(\mu)}\approx 110$ on a sample of size $120$. The projected kernel therefore achieves the desired prediction with far fewer effective degrees of freedom, while the global fidelity kernel only catches up after many more samples.

\section{Residual-screening protocol for candidate active subspaces}\label{app:residual_protocol}

This appendix records the concrete regression family used in Fig.~\ref{fig:screening_residual}. We generate an eight-generator family
\begin{eqnarray}
g(x)=\big(z_1,z_2,\cos\theta_1,\cos\theta_2,\sin\theta_3,\cos\theta_4,\cos\theta_5,\sin\theta_6\big),
\end{eqnarray}
with unbiased context bits $z_1,z_2\in\{\pm 1\}$, independent Gaussian classical covariates $x_C\in\R^6$, and independent phases $\theta_j\sim \mathrm{Unif}[0,2\pi)$. The classical design matrix is
\begin{eqnarray}
F_C = \big[\mathbf{1},x_C,g(x)\big],
\end{eqnarray}
and the target is
\begin{eqnarray}
y = F_C w_C + \alpha \, g_1 g_2 g_3 g_5 + \xi,
\end{eqnarray}
where $w_C$ is fixed, $\alpha\in\{0.6,1.0,1.4\}$, and $\xi\sim\mathcal N(0,0.2^2)$. We screen all $\binom{8}{4}=70$ candidate $4$-subsets, compute the score
\begin{eqnarray}
\hat{s}(S)=\frac{1}{N}\big\|P_{Q_\perp(S)}r_C^{(N)}\big\|_2^2,
\end{eqnarray}
on the training split of size $N=250$, and evaluate the resulting test-risk gain on an independent test split of size $4000$. The mean Pearson correlation between $\hat{s}(S)$ and the test-risk gain is $0.984$, and the top-scoring candidate coincides with the exact active support in every run of the synthetic family. This is the empirical protocol behind the claim made in Sec.~\ref{sec:toy}.

\section{Finite-shot projected readout under dephasing and readout noise}\label{app:finite_shot}

\begin{figure*}[t]
\centering
\includegraphics[width=0.68\textwidth]{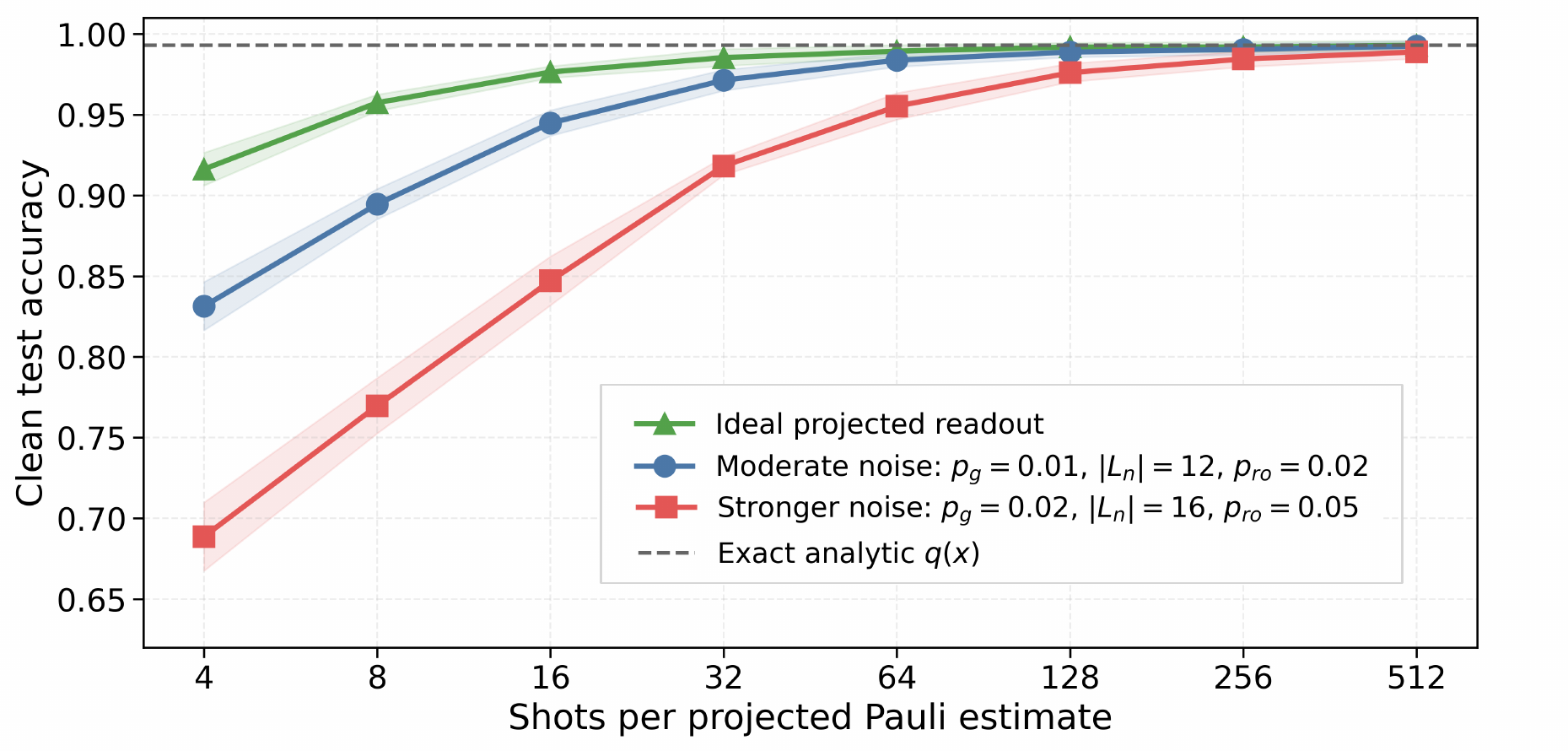}
\caption{Finite-shot projected readout for the $64$-qubit toy family. The dashed horizontal line is the exact analytic-$q(x)$ hybrid baseline. Even with combined dephasing and readout noise, moderate shot budgets recover most of the benefit: for the moderate-noise setting with attenuation $A\approx 0.753$, $32$ shots already yield about $0.971$ clean-test accuracy, and $128$ shots yield about $0.989$; for the stronger-noise setting with $A\approx 0.468$, the same accuracies are about $0.918$ and $0.976$, respectively.}
\label{fig:finite_shot}
\end{figure*}

To complement the exact-score calculations of Sec.~\ref{sec:scalable}, we simulate finite-shot estimation of the projected Pauli observable on the $64$-qubit toy family. If the noiseless projected feature is $q(x)$, then under local dephasing with per-location rate $p_g$, backward-light-cone size $|L_n|$, and readout error rate $p_{\mathrm{ro}}$, the measurement mean is attenuated to
\begin{eqnarray}
\mathbb{E}[M(x)\,|\,x] = A\,q(x), \quad A = (1-2p_g)^{|L_n|}(1-2p_{\mathrm{ro}}).
\end{eqnarray}
For a shot budget $S$, we estimate the projected quantum feature by
\begin{eqnarray}
\hat{q}_S(x)=\frac{1}{S}\sum_{t=1}^{S} M_t(x),
\quad
M_t(x)\in\{-1,+1\},
\end{eqnarray}
and train the same hybrid classifier as in Sec.~\ref{sec:toy} on $\Phi_{\mathrm{lin}}\oplus \hat{q}_S$.

The finite-shot data reinforce the main narrative of the paper. The projected readout remains operational with modest shot budgets, and the penalty from moderate device noise appears primarily as a predictable attenuation of the effective score, in line with {\bf Theorem~\ref{thm:dephasing}} and {\bf Corollary~\ref{cor:poly_persistence}}.

\twocolumngrid
\bibliographystyle{apsrev4-2}

\begin{thebibliography}{33}%
\makeatletter
\providecommand \@ifxundefined [1]{%
 \@ifx{#1\undefined}
}%
\providecommand \@ifnum [1]{%
 \ifnum #1\expandafter \@firstoftwo
 \else \expandafter \@secondoftwo
 \fi
}%
\providecommand \@ifx [1]{%
 \ifx #1\expandafter \@firstoftwo
 \else \expandafter \@secondoftwo
 \fi
}%
\providecommand \natexlab [1]{#1}%
\providecommand \enquote  [1]{``#1''}%
\providecommand \bibnamefont  [1]{#1}%
\providecommand \bibfnamefont [1]{#1}%
\providecommand \citenamefont [1]{#1}%
\providecommand \href@noop [0]{\@secondoftwo}%
\providecommand \href [0]{\begingroup \@sanitize@url \@href}%
\providecommand \@href[1]{\@@startlink{#1}\@@href}%
\providecommand \@@href[1]{\endgroup#1\@@endlink}%
\providecommand \@sanitize@url [0]{\catcode `\\12\catcode `\$12\catcode
  `\&12\catcode `\#12\catcode `\^12\catcode `\_12\catcode `\%12\relax}%
\providecommand \@@startlink[1]{}%
\providecommand \@@endlink[0]{}%
\providecommand \url  [0]{\begingroup\@sanitize@url \@url }%
\providecommand \@url [1]{\endgroup\@href {#1}{\urlprefix }}%
\providecommand \urlprefix  [0]{URL }%
\providecommand \Eprint [0]{\href }%
\providecommand \doibase [0]{https://doi.org/}%
\providecommand \selectlanguage [0]{\@gobble}%
\providecommand \bibinfo  [0]{\@secondoftwo}%
\providecommand \bibfield  [0]{\@secondoftwo}%
\providecommand \translation [1]{[#1]}%
\providecommand \BibitemOpen [0]{}%
\providecommand \bibitemStop [0]{}%
\providecommand \bibitemNoStop [0]{.\EOS\space}%
\providecommand \EOS [0]{\spacefactor3000\relax}%
\providecommand \BibitemShut  [1]{\csname bibitem#1\endcsname}%
\let\auto@bib@innerbib\@empty
\bibitem [{\citenamefont {Biamonte}\ \emph {et~al.}(2017)\citenamefont
  {Biamonte}, \citenamefont {Wittek}, \citenamefont {Pancotti}, \citenamefont
  {Rebentrost}, \citenamefont {Wiebe},\ and\ \citenamefont
  {Lloyd}}]{Biamonte2017}%
  \BibitemOpen
  \bibfield  {author} {\bibinfo {author} {\bibfnamefont {J.}~\bibnamefont
  {Biamonte}}, \bibinfo {author} {\bibfnamefont {P.}~\bibnamefont {Wittek}},
  \bibinfo {author} {\bibfnamefont {N.}~\bibnamefont {Pancotti}}, \bibinfo
  {author} {\bibfnamefont {P.}~\bibnamefont {Rebentrost}}, \bibinfo {author}
  {\bibfnamefont {N.}~\bibnamefont {Wiebe}},\ and\ \bibinfo {author}
  {\bibfnamefont {S.}~\bibnamefont {Lloyd}},\ }\href
  {https://doi.org/10.1038/nature23474} {\bibfield  {journal} {\bibinfo
  {journal} {Nature}\ }\textbf {\bibinfo {volume} {549}},\ \bibinfo {pages}
  {195} (\bibinfo {year} {2017})}\BibitemShut {NoStop}%
\bibitem [{\citenamefont {Ciliberto}\ \emph {et~al.}(2018)\citenamefont
  {Ciliberto}, \citenamefont {Herbster}, \citenamefont {Ialongo}, \citenamefont
  {Pontil}, \citenamefont {Rocchetto}, \citenamefont {Severini},\ and\
  \citenamefont {Wossnig}}]{Ciliberto2018}%
  \BibitemOpen
  \bibfield  {author} {\bibinfo {author} {\bibfnamefont {C.}~\bibnamefont
  {Ciliberto}}, \bibinfo {author} {\bibfnamefont {M.}~\bibnamefont {Herbster}},
  \bibinfo {author} {\bibfnamefont {A.~D.}\ \bibnamefont {Ialongo}}, \bibinfo
  {author} {\bibfnamefont {M.}~\bibnamefont {Pontil}}, \bibinfo {author}
  {\bibfnamefont {A.}~\bibnamefont {Rocchetto}}, \bibinfo {author}
  {\bibfnamefont {S.}~\bibnamefont {Severini}},\ and\ \bibinfo {author}
  {\bibfnamefont {L.}~\bibnamefont {Wossnig}},\ }\href
  {https://doi.org/10.1098/rspa.2017.0551} {\bibfield  {journal} {\bibinfo
  {journal} {Proceedings of the Royal Society A: Mathematical, Physical and
  Engineering Sciences}\ }\textbf {\bibinfo {volume} {474}},\ \bibinfo {pages}
  {20170551} (\bibinfo {year} {2018})}\BibitemShut {NoStop}%
\bibitem [{\citenamefont {Havl{\'i}{\v{c}}ek}\ \emph
  {et~al.}(2019)\citenamefont {Havl{\'i}{\v{c}}ek}, \citenamefont
  {C{\'o}rcoles}, \citenamefont {Temme}, \citenamefont {Harrow}, \citenamefont
  {Kandala}, \citenamefont {Chow},\ and\ \citenamefont
  {Gambetta}}]{Havlicek2019}%
  \BibitemOpen
  \bibfield  {author} {\bibinfo {author} {\bibfnamefont {V.}~\bibnamefont
  {Havl{\'i}{\v{c}}ek}}, \bibinfo {author} {\bibfnamefont {A.~D.}\ \bibnamefont
  {C{\'o}rcoles}}, \bibinfo {author} {\bibfnamefont {K.}~\bibnamefont {Temme}},
  \bibinfo {author} {\bibfnamefont {A.~W.}\ \bibnamefont {Harrow}}, \bibinfo
  {author} {\bibfnamefont {A.}~\bibnamefont {Kandala}}, \bibinfo {author}
  {\bibfnamefont {J.~M.}\ \bibnamefont {Chow}},\ and\ \bibinfo {author}
  {\bibfnamefont {J.~M.}\ \bibnamefont {Gambetta}},\ }\href
  {https://doi.org/10.1038/s41586-019-0980-2} {\bibfield  {journal} {\bibinfo
  {journal} {Nature}\ }\textbf {\bibinfo {volume} {567}},\ \bibinfo {pages}
  {209} (\bibinfo {year} {2019})}\BibitemShut {NoStop}%
\bibitem [{\citenamefont {Schuld}\ and\ \citenamefont
  {Killoran}(2019)}]{SchuldKilloran2019}%
  \BibitemOpen
  \bibfield  {author} {\bibinfo {author} {\bibfnamefont {M.}~\bibnamefont
  {Schuld}}\ and\ \bibinfo {author} {\bibfnamefont {N.}~\bibnamefont
  {Killoran}},\ }\href {https://doi.org/10.1103/PhysRevLett.122.040504}
  {\bibfield  {journal} {\bibinfo  {journal} {Physical Review Letters}\
  }\textbf {\bibinfo {volume} {122}},\ \bibinfo {pages} {040504} (\bibinfo
  {year} {2019})}\BibitemShut {NoStop}%
\bibitem [{\citenamefont {Cerezo}\ \emph {et~al.}(2022)\citenamefont {Cerezo},
  \citenamefont {Verdon}, \citenamefont {Huang}, \citenamefont {Cincio},\ and\
  \citenamefont {Coles}}]{Cerezo2022}%
  \BibitemOpen
  \bibfield  {author} {\bibinfo {author} {\bibfnamefont {M.}~\bibnamefont
  {Cerezo}}, \bibinfo {author} {\bibfnamefont {G.}~\bibnamefont {Verdon}},
  \bibinfo {author} {\bibfnamefont {H.-Y.}\ \bibnamefont {Huang}}, \bibinfo
  {author} {\bibfnamefont {{\L}.}~\bibnamefont {Cincio}},\ and\ \bibinfo
  {author} {\bibfnamefont {P.~J.}\ \bibnamefont {Coles}},\ }\href
  {https://doi.org/10.1038/s43588-022-00311-3} {\bibfield  {journal} {\bibinfo
  {journal} {Nature Computational Science}\ }\textbf {\bibinfo {volume} {2}},\
  \bibinfo {pages} {567} (\bibinfo {year} {2022})}\BibitemShut {NoStop}%
\bibitem [{\citenamefont {Liu}\ \emph {et~al.}(2021)\citenamefont {Liu},
  \citenamefont {Arunachalam},\ and\ \citenamefont {Temme}}]{Liu2021NatPhys}%
  \BibitemOpen
  \bibfield  {author} {\bibinfo {author} {\bibfnamefont {Y.}~\bibnamefont
  {Liu}}, \bibinfo {author} {\bibfnamefont {S.}~\bibnamefont {Arunachalam}},\
  and\ \bibinfo {author} {\bibfnamefont {K.}~\bibnamefont {Temme}},\ }\href
  {https://doi.org/10.1038/s41567-021-01287-z} {\bibfield  {journal} {\bibinfo
  {journal} {Nature Physics}\ }\textbf {\bibinfo {volume} {17}},\ \bibinfo
  {pages} {1013} (\bibinfo {year} {2021})}\BibitemShut {NoStop}%
\bibitem [{\citenamefont {Preskill}(2018)}]{Preskill2018}%
  \BibitemOpen
  \bibfield  {author} {\bibinfo {author} {\bibfnamefont {J.}~\bibnamefont
  {Preskill}},\ }\href {https://doi.org/10.22331/q-2018-08-06-79} {\bibfield
  {journal} {\bibinfo  {journal} {Quantum}\ }\textbf {\bibinfo {volume} {2}},\
  \bibinfo {pages} {79} (\bibinfo {year} {2018})}\BibitemShut {NoStop}%
\bibitem [{\citenamefont {Schuld}\ and\ \citenamefont
  {Killoran}(2022)}]{Schuld2022PRXQuantum}%
  \BibitemOpen
  \bibfield  {author} {\bibinfo {author} {\bibfnamefont {M.}~\bibnamefont
  {Schuld}}\ and\ \bibinfo {author} {\bibfnamefont {N.}~\bibnamefont
  {Killoran}},\ }\href {https://doi.org/10.1103/PRXQuantum.3.030101} {\bibfield
   {journal} {\bibinfo  {journal} {PRX Quantum}\ }\textbf {\bibinfo {volume}
  {3}},\ \bibinfo {pages} {030101} (\bibinfo {year} {2022})}\BibitemShut
  {NoStop}%
\bibitem [{\citenamefont {Aaronson}(2015)}]{Aaronson2015}%
  \BibitemOpen
  \bibfield  {author} {\bibinfo {author} {\bibfnamefont {S.}~\bibnamefont
  {Aaronson}},\ }\href {https://doi.org/10.1038/nphys3272} {\bibfield
  {journal} {\bibinfo  {journal} {Nature Physics}\ }\textbf {\bibinfo {volume}
  {11}},\ \bibinfo {pages} {291} (\bibinfo {year} {2015})}\BibitemShut
  {NoStop}%
\bibitem [{\citenamefont {Paler}\ \emph {et~al.}(2020)\citenamefont {Paler},
  \citenamefont {Oumarou},\ and\ \citenamefont {Basmadjian}}]{Paler2020}%
  \BibitemOpen
  \bibfield  {author} {\bibinfo {author} {\bibfnamefont {A.}~\bibnamefont
  {Paler}}, \bibinfo {author} {\bibfnamefont {O.}~\bibnamefont {Oumarou}},\
  and\ \bibinfo {author} {\bibfnamefont {R.}~\bibnamefont {Basmadjian}},\
  }\href {https://doi.org/10.1103/PhysRevA.102.032608} {\bibfield  {journal}
  {\bibinfo  {journal} {Physical Review A}\ }\textbf {\bibinfo {volume}
  {102}},\ \bibinfo {pages} {032608} (\bibinfo {year} {2020})}\BibitemShut
  {NoStop}%
\bibitem [{\citenamefont {Harrow}(2020)}]{Harrow2020}%
  \BibitemOpen
  \bibfield  {author} {\bibinfo {author} {\bibfnamefont {A.~W.}\ \bibnamefont
  {Harrow}},\ }\href {https://doi.org/10.48550/arXiv.2004.00026} {\bibinfo
  {title} {Small quantum computers and large classical data sets}} (\bibinfo
  {year} {2020}),\ \Eprint {https://arxiv.org/abs/2004.00026} {arXiv:2004.00026
  [quant-ph]} \BibitemShut {NoStop}%
\bibitem [{\citenamefont {Tang}(2021)}]{Tang2021}%
  \BibitemOpen
  \bibfield  {author} {\bibinfo {author} {\bibfnamefont {E.}~\bibnamefont
  {Tang}},\ }\href {https://doi.org/10.1103/PhysRevLett.127.060503} {\bibfield
  {journal} {\bibinfo  {journal} {Physical Review Letters}\ }\textbf {\bibinfo
  {volume} {127}},\ \bibinfo {pages} {060503} (\bibinfo {year}
  {2021})}\BibitemShut {NoStop}%
\bibitem [{\citenamefont {Jerbi}\ \emph {et~al.}(2023)\citenamefont {Jerbi},
  \citenamefont {Fiderer}, \citenamefont {Poulsen~Nautrup}, \citenamefont
  {K{\"u}bler}, \citenamefont {Briegel},\ and\ \citenamefont
  {Dunjko}}]{Jerbi2023}%
  \BibitemOpen
  \bibfield  {author} {\bibinfo {author} {\bibfnamefont {S.}~\bibnamefont
  {Jerbi}}, \bibinfo {author} {\bibfnamefont {L.~J.}\ \bibnamefont {Fiderer}},
  \bibinfo {author} {\bibfnamefont {H.}~\bibnamefont {Poulsen~Nautrup}},
  \bibinfo {author} {\bibfnamefont {J.~M.}\ \bibnamefont {K{\"u}bler}},
  \bibinfo {author} {\bibfnamefont {H.~J.}\ \bibnamefont {Briegel}},\ and\
  \bibinfo {author} {\bibfnamefont {V.}~\bibnamefont {Dunjko}},\ }\href
  {https://doi.org/10.1038/s41467-023-36159-y} {\bibfield  {journal} {\bibinfo
  {journal} {Nature Communications}\ }\textbf {\bibinfo {volume} {14}},\
  \bibinfo {pages} {517} (\bibinfo {year} {2023})}\BibitemShut {NoStop}%
\bibitem [{\citenamefont {Huang}\ \emph {et~al.}(2021)\citenamefont {Huang},
  \citenamefont {Broughton}, \citenamefont {Mohseni}, \citenamefont {Babbush},
  \citenamefont {Boixo}, \citenamefont {Neven},\ and\ \citenamefont
  {McClean}}]{Huang2021}%
  \BibitemOpen
  \bibfield  {author} {\bibinfo {author} {\bibfnamefont {H.-Y.}\ \bibnamefont
  {Huang}}, \bibinfo {author} {\bibfnamefont {M.}~\bibnamefont {Broughton}},
  \bibinfo {author} {\bibfnamefont {M.}~\bibnamefont {Mohseni}}, \bibinfo
  {author} {\bibfnamefont {R.}~\bibnamefont {Babbush}}, \bibinfo {author}
  {\bibfnamefont {S.}~\bibnamefont {Boixo}}, \bibinfo {author} {\bibfnamefont
  {H.}~\bibnamefont {Neven}},\ and\ \bibinfo {author} {\bibfnamefont {J.~R.}\
  \bibnamefont {McClean}},\ }\href {https://doi.org/10.1038/s41467-021-22539-9}
  {\bibfield  {journal} {\bibinfo  {journal} {Nature Communications}\ }\textbf
  {\bibinfo {volume} {12}},\ \bibinfo {pages} {2631} (\bibinfo {year}
  {2021})}\BibitemShut {NoStop}%
\bibitem [{\citenamefont {Thanasilp}\ \emph {et~al.}(2024)\citenamefont
  {Thanasilp}, \citenamefont {Wang}, \citenamefont {Cerezo},\ and\
  \citenamefont {Holmes}}]{Thanasilp2024}%
  \BibitemOpen
  \bibfield  {author} {\bibinfo {author} {\bibfnamefont {S.}~\bibnamefont
  {Thanasilp}}, \bibinfo {author} {\bibfnamefont {S.}~\bibnamefont {Wang}},
  \bibinfo {author} {\bibfnamefont {M.}~\bibnamefont {Cerezo}},\ and\ \bibinfo
  {author} {\bibfnamefont {Z.}~\bibnamefont {Holmes}},\ }\href
  {https://doi.org/10.1038/s41467-024-49287-w} {\bibfield  {journal} {\bibinfo
  {journal} {Nature Communications}\ }\textbf {\bibinfo {volume} {15}},\
  \bibinfo {pages} {5200} (\bibinfo {year} {2024})}\BibitemShut {NoStop}%
\bibitem [{\citenamefont {Agliardi}\ \emph {et~al.}(2026)\citenamefont
  {Agliardi}, \citenamefont {Cortiana}, \citenamefont {Dekusar}, \citenamefont
  {Ghosh}, \citenamefont {Mohseni}, \citenamefont {O'Meara}, \citenamefont
  {Valls}, \citenamefont {Yogaraj},\ and\ \citenamefont {Zhuk}}]{Agliardi2026}%
  \BibitemOpen
  \bibfield  {author} {\bibinfo {author} {\bibfnamefont {G.}~\bibnamefont
  {Agliardi}}, \bibinfo {author} {\bibfnamefont {G.}~\bibnamefont {Cortiana}},
  \bibinfo {author} {\bibfnamefont {A.}~\bibnamefont {Dekusar}}, \bibinfo
  {author} {\bibfnamefont {K.}~\bibnamefont {Ghosh}}, \bibinfo {author}
  {\bibfnamefont {N.}~\bibnamefont {Mohseni}}, \bibinfo {author} {\bibfnamefont
  {C.}~\bibnamefont {O'Meara}}, \bibinfo {author} {\bibfnamefont
  {V.}~\bibnamefont {Valls}}, \bibinfo {author} {\bibfnamefont
  {K.}~\bibnamefont {Yogaraj}},\ and\ \bibinfo {author} {\bibfnamefont
  {S.}~\bibnamefont {Zhuk}},\ }\href
  {https://doi.org/10.1038/s41534-025-01154-2} {\bibfield  {journal} {\bibinfo
  {journal} {npj Quantum Information}\ }\textbf {\bibinfo {volume} {12}},\
  \bibinfo {pages} {12} (\bibinfo {year} {2026})}\BibitemShut {NoStop}%
\bibitem [{\citenamefont {Song}\ \emph {et~al.}(2021)\citenamefont {Song},
  \citenamefont {Wie{\'s}niak}, \citenamefont {Liu}, \citenamefont
  {Paw{\l}owski}, \citenamefont {Lee}, \citenamefont {Kim},\ and\ \citenamefont
  {Bang}}]{Song2021}%
  \BibitemOpen
  \bibfield  {author} {\bibinfo {author} {\bibfnamefont {W.}~\bibnamefont
  {Song}}, \bibinfo {author} {\bibfnamefont {M.}~\bibnamefont {Wie{\'s}niak}},
  \bibinfo {author} {\bibfnamefont {N.}~\bibnamefont {Liu}}, \bibinfo {author}
  {\bibfnamefont {M.}~\bibnamefont {Paw{\l}owski}}, \bibinfo {author}
  {\bibfnamefont {J.}~\bibnamefont {Lee}}, \bibinfo {author} {\bibfnamefont
  {J.}~\bibnamefont {Kim}},\ and\ \bibinfo {author} {\bibfnamefont
  {J.}~\bibnamefont {Bang}},\ }\href
  {https://doi.org/10.1007/s11128-021-03217-7} {\bibfield  {journal} {\bibinfo
  {journal} {Quantum Information Processing}\ }\textbf {\bibinfo {volume}
  {20}},\ \bibinfo {pages} {275} (\bibinfo {year} {2021})}\BibitemShut
  {NoStop}%
\bibitem [{\citenamefont {Song}\ \emph {et~al.}(2022)\citenamefont {Song},
  \citenamefont {Lim}, \citenamefont {Jeong}, \citenamefont {Ji}, \citenamefont
  {Lee}, \citenamefont {Kim}, \citenamefont {Kim},\ and\ \citenamefont
  {Bang}}]{Song2022}%
  \BibitemOpen
  \bibfield  {author} {\bibinfo {author} {\bibfnamefont {W.}~\bibnamefont
  {Song}}, \bibinfo {author} {\bibfnamefont {Y.}~\bibnamefont {Lim}}, \bibinfo
  {author} {\bibfnamefont {K.}~\bibnamefont {Jeong}}, \bibinfo {author}
  {\bibfnamefont {Y.-S.}\ \bibnamefont {Ji}}, \bibinfo {author} {\bibfnamefont
  {J.}~\bibnamefont {Lee}}, \bibinfo {author} {\bibfnamefont {J.}~\bibnamefont
  {Kim}}, \bibinfo {author} {\bibfnamefont {M.~S.}\ \bibnamefont {Kim}},\ and\
  \bibinfo {author} {\bibfnamefont {J.}~\bibnamefont {Bang}},\ }\href
  {https://doi.org/10.1088/2058-9565/ac51b0} {\bibfield  {journal} {\bibinfo
  {journal} {Quantum Science and Technology}\ }\textbf {\bibinfo {volume}
  {7}},\ \bibinfo {pages} {025009} (\bibinfo {year} {2022})}\BibitemShut
  {NoStop}%
\bibitem [{\citenamefont {Caro}\ \emph {et~al.}(2022)\citenamefont {Caro},
  \citenamefont {Huang}, \citenamefont {Cerezo}, \citenamefont {Sharma},
  \citenamefont {Sornborger}, \citenamefont {Cincio},\ and\ \citenamefont
  {Coles}}]{Caro2022}%
  \BibitemOpen
  \bibfield  {author} {\bibinfo {author} {\bibfnamefont {M.~C.}\ \bibnamefont
  {Caro}}, \bibinfo {author} {\bibfnamefont {H.-Y.}\ \bibnamefont {Huang}},
  \bibinfo {author} {\bibfnamefont {M.}~\bibnamefont {Cerezo}}, \bibinfo
  {author} {\bibfnamefont {K.}~\bibnamefont {Sharma}}, \bibinfo {author}
  {\bibfnamefont {A.~T.}\ \bibnamefont {Sornborger}}, \bibinfo {author}
  {\bibfnamefont {{\L}.}~\bibnamefont {Cincio}},\ and\ \bibinfo {author}
  {\bibfnamefont {P.~J.}\ \bibnamefont {Coles}},\ }\href
  {https://doi.org/10.1038/s41467-022-32550-3} {\bibfield  {journal} {\bibinfo
  {journal} {Nature Communications}\ }\textbf {\bibinfo {volume} {13}},\
  \bibinfo {pages} {4919} (\bibinfo {year} {2022})}\BibitemShut {NoStop}%
\bibitem [{\citenamefont {Yin}\ \emph {et~al.}(2025)\citenamefont {Yin},
  \citenamefont {Agresti}, \citenamefont {de~Felice}, \citenamefont {Brown},
  \citenamefont {Toumi}, \citenamefont {Pentangelo}, \citenamefont
  {Piacentini}, \citenamefont {Crespi}, \citenamefont {Ceccarelli},
  \citenamefont {Osellame}, \citenamefont {Coecke},\ and\ \citenamefont
  {Walther}}]{Yin2025}%
  \BibitemOpen
  \bibfield  {author} {\bibinfo {author} {\bibfnamefont {Z.}~\bibnamefont
  {Yin}}, \bibinfo {author} {\bibfnamefont {I.}~\bibnamefont {Agresti}},
  \bibinfo {author} {\bibfnamefont {G.}~\bibnamefont {de~Felice}}, \bibinfo
  {author} {\bibfnamefont {D.}~\bibnamefont {Brown}}, \bibinfo {author}
  {\bibfnamefont {A.}~\bibnamefont {Toumi}}, \bibinfo {author} {\bibfnamefont
  {C.}~\bibnamefont {Pentangelo}}, \bibinfo {author} {\bibfnamefont
  {S.}~\bibnamefont {Piacentini}}, \bibinfo {author} {\bibfnamefont
  {A.}~\bibnamefont {Crespi}}, \bibinfo {author} {\bibfnamefont
  {F.}~\bibnamefont {Ceccarelli}}, \bibinfo {author} {\bibfnamefont
  {R.}~\bibnamefont {Osellame}}, \bibinfo {author} {\bibfnamefont
  {B.}~\bibnamefont {Coecke}},\ and\ \bibinfo {author} {\bibfnamefont
  {P.}~\bibnamefont {Walther}},\ }\href
  {https://doi.org/10.1038/s41566-025-01682-5} {\bibfield  {journal} {\bibinfo
  {journal} {Nature Photonics}\ }\textbf {\bibinfo {volume} {19}},\ \bibinfo
  {pages} {1020} (\bibinfo {year} {2025})}\BibitemShut {NoStop}%
\bibitem [{\citenamefont {Abbas}\ \emph {et~al.}(2021)\citenamefont {Abbas},
  \citenamefont {Sutter}, \citenamefont {Zoufal}, \citenamefont {Lucchi},
  \citenamefont {Figalli},\ and\ \citenamefont {Woerner}}]{Abbas2021}%
  \BibitemOpen
  \bibfield  {author} {\bibinfo {author} {\bibfnamefont {A.}~\bibnamefont
  {Abbas}}, \bibinfo {author} {\bibfnamefont {D.}~\bibnamefont {Sutter}},
  \bibinfo {author} {\bibfnamefont {C.}~\bibnamefont {Zoufal}}, \bibinfo
  {author} {\bibfnamefont {A.}~\bibnamefont {Lucchi}}, \bibinfo {author}
  {\bibfnamefont {A.}~\bibnamefont {Figalli}},\ and\ \bibinfo {author}
  {\bibfnamefont {S.}~\bibnamefont {Woerner}},\ }\href
  {https://doi.org/10.1038/s43588-021-00084-1} {\bibfield  {journal} {\bibinfo
  {journal} {Nature Computational Science}\ }\textbf {\bibinfo {volume} {1}},\
  \bibinfo {pages} {403} (\bibinfo {year} {2021})}\BibitemShut {NoStop}%
\bibitem [{\citenamefont {Huang}\ \emph {et~al.}(2020)\citenamefont {Huang},
  \citenamefont {Kueng},\ and\ \citenamefont
  {Preskill}}]{HuangKuengPreskill2020}%
  \BibitemOpen
  \bibfield  {author} {\bibinfo {author} {\bibfnamefont {H.-Y.}\ \bibnamefont
  {Huang}}, \bibinfo {author} {\bibfnamefont {R.}~\bibnamefont {Kueng}},\ and\
  \bibinfo {author} {\bibfnamefont {J.}~\bibnamefont {Preskill}},\ }\href
  {https://doi.org/10.1038/s41567-020-0932-7} {\bibfield  {journal} {\bibinfo
  {journal} {Nature Physics}\ }\textbf {\bibinfo {volume} {16}},\ \bibinfo
  {pages} {1050} (\bibinfo {year} {2020})}\BibitemShut {NoStop}%
\bibitem [{\citenamefont {Caponnetto}\ and\ \citenamefont
  {De~Vito}(2007)}]{CaponnettoDeVito2007}%
  \BibitemOpen
  \bibfield  {author} {\bibinfo {author} {\bibfnamefont {A.}~\bibnamefont
  {Caponnetto}}\ and\ \bibinfo {author} {\bibfnamefont {E.}~\bibnamefont
  {De~Vito}},\ }\href {https://doi.org/10.1007/s10208-006-0196-8} {\bibfield
  {journal} {\bibinfo  {journal} {Foundations of Computational Mathematics}\
  }\textbf {\bibinfo {volume} {7}},\ \bibinfo {pages} {331} (\bibinfo {year}
  {2007})}\BibitemShut {NoStop}%
\bibitem [{\citenamefont {Huang}\ \emph {et~al.}(2022)\citenamefont {Huang},
  \citenamefont {Broughton}, \citenamefont {Cotler}, \citenamefont {Chen},
  \citenamefont {Li}, \citenamefont {Mohseni}, \citenamefont {Neven},
  \citenamefont {Babbush}, \citenamefont {Kueng}, \citenamefont {Preskill},\
  and\ \citenamefont {McClean}}]{Huang2022Science}%
  \BibitemOpen
  \bibfield  {author} {\bibinfo {author} {\bibfnamefont {H.-Y.}\ \bibnamefont
  {Huang}}, \bibinfo {author} {\bibfnamefont {M.}~\bibnamefont {Broughton}},
  \bibinfo {author} {\bibfnamefont {J.}~\bibnamefont {Cotler}}, \bibinfo
  {author} {\bibfnamefont {S.}~\bibnamefont {Chen}}, \bibinfo {author}
  {\bibfnamefont {J.}~\bibnamefont {Li}}, \bibinfo {author} {\bibfnamefont
  {M.}~\bibnamefont {Mohseni}}, \bibinfo {author} {\bibfnamefont
  {H.}~\bibnamefont {Neven}}, \bibinfo {author} {\bibfnamefont
  {R.}~\bibnamefont {Babbush}}, \bibinfo {author} {\bibfnamefont
  {R.}~\bibnamefont {Kueng}}, \bibinfo {author} {\bibfnamefont
  {J.}~\bibnamefont {Preskill}},\ and\ \bibinfo {author} {\bibfnamefont
  {J.~R.}\ \bibnamefont {McClean}},\ }\href
  {https://doi.org/10.1126/science.abn7293} {\bibfield  {journal} {\bibinfo
  {journal} {Science}\ }\textbf {\bibinfo {volume} {376}},\ \bibinfo {pages}
  {1182} (\bibinfo {year} {2022})}\BibitemShut {NoStop}%
\bibitem [{\citenamefont {Liu}\ \emph {et~al.}(2025)\citenamefont {Liu},
  \citenamefont {Brunel}, \citenamefont {{\O}stergaard}, \citenamefont
  {Cordero}, \citenamefont {Chen}, \citenamefont {Wong}, \citenamefont
  {Nielsen}, \citenamefont {Bregnsbo}, \citenamefont {Zhou}, \citenamefont
  {Huang}, \citenamefont {Oh}, \citenamefont {Jiang}, \citenamefont {Preskill},
  \citenamefont {Neergaard-Nielsen},\ and\ \citenamefont
  {Andersen}}]{Liu2025Science}%
  \BibitemOpen
  \bibfield  {author} {\bibinfo {author} {\bibfnamefont {Z.-H.}\ \bibnamefont
  {Liu}}, \bibinfo {author} {\bibfnamefont {R.}~\bibnamefont {Brunel}},
  \bibinfo {author} {\bibfnamefont {E.~E.~B.}\ \bibnamefont {{\O}stergaard}},
  \bibinfo {author} {\bibfnamefont {O.}~\bibnamefont {Cordero}}, \bibinfo
  {author} {\bibfnamefont {S.}~\bibnamefont {Chen}}, \bibinfo {author}
  {\bibfnamefont {Y.}~\bibnamefont {Wong}}, \bibinfo {author} {\bibfnamefont
  {J.~A.~H.}\ \bibnamefont {Nielsen}}, \bibinfo {author} {\bibfnamefont
  {A.~B.}\ \bibnamefont {Bregnsbo}}, \bibinfo {author} {\bibfnamefont
  {S.}~\bibnamefont {Zhou}}, \bibinfo {author} {\bibfnamefont {H.-Y.}\
  \bibnamefont {Huang}}, \bibinfo {author} {\bibfnamefont {C.}~\bibnamefont
  {Oh}}, \bibinfo {author} {\bibfnamefont {L.}~\bibnamefont {Jiang}}, \bibinfo
  {author} {\bibfnamefont {J.}~\bibnamefont {Preskill}}, \bibinfo {author}
  {\bibfnamefont {J.~S.}\ \bibnamefont {Neergaard-Nielsen}},\ and\ \bibinfo
  {author} {\bibfnamefont {U.~L.}\ \bibnamefont {Andersen}},\ }\href
  {https://doi.org/10.1126/science.adv2560} {\bibfield  {journal} {\bibinfo
  {journal} {Science}\ }\textbf {\bibinfo {volume} {389}},\ \bibinfo {pages}
  {1332} (\bibinfo {year} {2025})}\BibitemShut {NoStop}%
\bibitem [{\citenamefont {King}\ \emph {et~al.}(2024)\citenamefont {King},
  \citenamefont {Wan},\ and\ \citenamefont {McClean}}]{King2024}%
  \BibitemOpen
  \bibfield  {author} {\bibinfo {author} {\bibfnamefont {R.}~\bibnamefont
  {King}}, \bibinfo {author} {\bibfnamefont {K.}~\bibnamefont {Wan}},\ and\
  \bibinfo {author} {\bibfnamefont {J.~R.}\ \bibnamefont {McClean}},\ }\href
  {https://doi.org/10.1103/PRXQuantum.5.040301} {\bibfield  {journal} {\bibinfo
   {journal} {PRX Quantum}\ }\textbf {\bibinfo {volume} {5}},\ \bibinfo {pages}
  {040301} (\bibinfo {year} {2024})}\BibitemShut {NoStop}%
\bibitem [{\citenamefont {Valiant}(1984)}]{Valiant1984}%
  \BibitemOpen
  \bibfield  {author} {\bibinfo {author} {\bibfnamefont {L.~G.}\ \bibnamefont
  {Valiant}},\ }\href {https://doi.org/10.1145/1968.1972} {\bibfield  {journal}
  {\bibinfo  {journal} {Communications of the ACM}\ }\textbf {\bibinfo {volume}
  {27}},\ \bibinfo {pages} {1134} (\bibinfo {year} {1984})}\BibitemShut
  {NoStop}%
\bibitem [{\citenamefont {Angluin}\ and\ \citenamefont
  {Laird}(1988)}]{AngluinLaird1988}%
  \BibitemOpen
  \bibfield  {author} {\bibinfo {author} {\bibfnamefont {D.}~\bibnamefont
  {Angluin}}\ and\ \bibinfo {author} {\bibfnamefont {P.}~\bibnamefont
  {Laird}},\ }\href {https://doi.org/10.1023/A:1022873112823} {\bibfield
  {journal} {\bibinfo  {journal} {Machine Learning}\ }\textbf {\bibinfo
  {volume} {2}},\ \bibinfo {pages} {343} (\bibinfo {year} {1988})}\BibitemShut
  {NoStop}%
\bibitem [{\citenamefont {McClean}\ \emph {et~al.}(2018)\citenamefont
  {McClean}, \citenamefont {Boixo}, \citenamefont {Smelyanskiy}, \citenamefont
  {Babbush},\ and\ \citenamefont {Neven}}]{McClean2018}%
  \BibitemOpen
  \bibfield  {author} {\bibinfo {author} {\bibfnamefont {J.~R.}\ \bibnamefont
  {McClean}}, \bibinfo {author} {\bibfnamefont {S.}~\bibnamefont {Boixo}},
  \bibinfo {author} {\bibfnamefont {V.~N.}\ \bibnamefont {Smelyanskiy}},
  \bibinfo {author} {\bibfnamefont {R.}~\bibnamefont {Babbush}},\ and\ \bibinfo
  {author} {\bibfnamefont {H.}~\bibnamefont {Neven}},\ }\href
  {https://doi.org/10.1038/s41467-018-07090-4} {\bibfield  {journal} {\bibinfo
  {journal} {Nature Communications}\ }\textbf {\bibinfo {volume} {9}},\
  \bibinfo {pages} {4812} (\bibinfo {year} {2018})}\BibitemShut {NoStop}%
\bibitem [{\citenamefont {Cerezo}\ \emph {et~al.}(2021)\citenamefont {Cerezo},
  \citenamefont {Sone}, \citenamefont {Volkoff}, \citenamefont {Cincio},\ and\
  \citenamefont {Coles}}]{Cerezo2021BP}%
  \BibitemOpen
  \bibfield  {author} {\bibinfo {author} {\bibfnamefont {M.}~\bibnamefont
  {Cerezo}}, \bibinfo {author} {\bibfnamefont {A.}~\bibnamefont {Sone}},
  \bibinfo {author} {\bibfnamefont {T.}~\bibnamefont {Volkoff}}, \bibinfo
  {author} {\bibfnamefont {{\L}.}~\bibnamefont {Cincio}},\ and\ \bibinfo
  {author} {\bibfnamefont {P.~J.}\ \bibnamefont {Coles}},\ }\href
  {https://doi.org/10.1038/s41467-021-21728-w} {\bibfield  {journal} {\bibinfo
  {journal} {Nature Communications}\ }\textbf {\bibinfo {volume} {12}},\
  \bibinfo {pages} {1791} (\bibinfo {year} {2021})}\BibitemShut {NoStop}%
\bibitem [{\citenamefont {Wang}\ \emph {et~al.}(2021)\citenamefont {Wang},
  \citenamefont {Fontana}, \citenamefont {Cerezo}, \citenamefont {Sharma},
  \citenamefont {Sone}, \citenamefont {Cincio},\ and\ \citenamefont
  {Coles}}]{Wang2021Noise}%
  \BibitemOpen
  \bibfield  {author} {\bibinfo {author} {\bibfnamefont {S.}~\bibnamefont
  {Wang}}, \bibinfo {author} {\bibfnamefont {E.}~\bibnamefont {Fontana}},
  \bibinfo {author} {\bibfnamefont {M.}~\bibnamefont {Cerezo}}, \bibinfo
  {author} {\bibfnamefont {K.}~\bibnamefont {Sharma}}, \bibinfo {author}
  {\bibfnamefont {A.}~\bibnamefont {Sone}}, \bibinfo {author} {\bibfnamefont
  {{\L}.}~\bibnamefont {Cincio}},\ and\ \bibinfo {author} {\bibfnamefont
  {P.~J.}\ \bibnamefont {Coles}},\ }\href
  {https://doi.org/10.1038/s41467-021-27045-6} {\bibfield  {journal} {\bibinfo
  {journal} {Nature Communications}\ }\textbf {\bibinfo {volume} {12}},\
  \bibinfo {pages} {6961} (\bibinfo {year} {2021})}\BibitemShut {NoStop}%
\bibitem [{\citenamefont {Anschuetz}\ \emph {et~al.}(2023)\citenamefont
  {Anschuetz}, \citenamefont {Hu}, \citenamefont {Huang},\ and\ \citenamefont
  {Gao}}]{Anschuetz2023}%
  \BibitemOpen
  \bibfield  {author} {\bibinfo {author} {\bibfnamefont {E.~R.}\ \bibnamefont
  {Anschuetz}}, \bibinfo {author} {\bibfnamefont {H.-Y.}\ \bibnamefont {Hu}},
  \bibinfo {author} {\bibfnamefont {J.-L.}\ \bibnamefont {Huang}},\ and\
  \bibinfo {author} {\bibfnamefont {X.}~\bibnamefont {Gao}},\ }\href
  {https://doi.org/10.1103/PRXQuantum.4.020338} {\bibfield  {journal} {\bibinfo
   {journal} {PRX Quantum}\ }\textbf {\bibinfo {volume} {4}},\ \bibinfo {pages}
  {020338} (\bibinfo {year} {2023})}\BibitemShut {NoStop}%
\bibitem [{\citenamefont {Cortes}\ and\ \citenamefont
  {Vapnik}(1995)}]{CortesVapnik1995}%
  \BibitemOpen
  \bibfield  {author} {\bibinfo {author} {\bibfnamefont {C.}~\bibnamefont
  {Cortes}}\ and\ \bibinfo {author} {\bibfnamefont {V.}~\bibnamefont
  {Vapnik}},\ }\href {https://doi.org/10.1007/BF00994018} {\bibfield  {journal}
  {\bibinfo  {journal} {Machine Learning}\ }\textbf {\bibinfo {volume} {20}},\
  \bibinfo {pages} {273} (\bibinfo {year} {1995})}\BibitemShut {NoStop}%
\end{thebibliography}

\end{document}